\theoremstyle{plain}
\newtheorem{thm}{Theorem}[section]
\newtheorem{lem}[thm]{Lemma}
\newtheorem{prop}[thm]{Proposition}
\newtheorem{cor}[thm]{Corollary}
\theoremstyle{definition}
\newtheorem{dfn}{Definition}[section]
\newtheorem{ex}[thm]{Example}
\theoremstyle{remark}
\newtheorem{rem}{Remark}[section]
\DeclareMathOperator{\tr}{tr}
\DeclareMathOperator{\Spec}{Spec}
\DeclareMathOperator*{\Res}{Res} 
\newcommand{\HRabi}{H_{\text{\upshape Rabi}}}
\newcommand{\N}{\mathbb{N}} 
\newcommand{\Z}{\mathbb{Z}} 
\newcommand{\Q}{\mathbb{Q}} 
\newcommand{\R}{\mathbb{R}} 
\newcommand{\C}{\mathbb{C}} 
\newcommand{\e}{\varepsilon}
\def\smallunderbrace#1{\mathop{\vtop{\m@th\ialign{##\crcr
   $\hfil\displaystyle{#1}\hfil$\crcr
   \noalign{\kern3\p@\nointerlineskip}%
   \tiny\upbracefill\crcr\noalign{\kern3\p@}}}}\limits}
\DeclareMathOperator*{\regprod}{\mathchoice%
{\ooalign{\hbox{$\displaystyle\prod$}\crcr\hbox{$\displaystyle\coprod$}}}
{\ooalign{\hbox{$\textstyle\prod$}\crcr\hbox{$\textstyle\coprod$}}}
{\ooalign{\hbox{$\scriptstyle\prod$}\crcr\hbox{$\scriptstyle\coprod$}}}
{\ooalign{\hbox{$\scriptscriptstyle\prod$}\crcr\hbox{$\scriptscriptstyle\coprod$}}}}
\newcommand{\bs}{\mathbf{s}}
\newcommand{\btZ}[1]{\Z_{2}^{#1}}
\title[Heat kernel for QRM II]{Heat kernel for the quantum Rabi model II: \\ Propagators and spectral determinants}
\author[C.~Reyes-Bustos and M.~Wakayama]{Cid Reyes-Bustos and Masato Wakayama}
\date{\today}
\subjclass[2010]{Primary 81Q10; Secondary 11M41, 47D06}
\keywords{quantum Rabi model, non-commutative harmonic oscillator, heat kernel, propagator, Trotter-Kato product formula, partition function, infinite symmetric group, spectral zeta function}
\begin{document}

\begin{abstract}  

  The quantum Rabi model (QRM) is widely recognized as an important model in quantum systems, particularly in 
  quantum optics. The Hamiltonian $H_{\text{Rabi}}$ is known to have a parity decomposition
  $H_{\text{Rabi}} = H_{+} \oplus H_{-}$. In this paper, we give the explicit formulas for the propagator of the
  Schr\"odinger equation (integral kernel of the time evolution operator) for the Hamiltonian $H_{\text{Rabi}}$
  and $H_{\pm}$ by the Wick rotation (meromorphic continuation) of the corresponding heat kernels.
  In addition, as in the case of the full Hamiltonian of the QRM, we show that for the
  Hamiltonians $H_{\pm}$, the spectral determinant is, up to a non-vanishing entire function, equal to the
  Braak $G$-function (for each parity) used to prove the integrability of the QRM. To do this, we show the
  meromorphic continuation of the spectral zeta function of the Hamiltonians $H_{\pm}$ and give some of its
  basic properties.
\end{abstract}

\maketitle

\tableofcontents

\section{Introduction}
\label{sec:intro}

The quantum Rabi model (QRM) is widely recognized as the simplest and most fundamental model describing quantum light-matter interactions, that is, the interaction between a two-level system and a bosonic field mode. Indeed, it is considered a milestone in the long history of quantum physics \cite{HR2008, KG2004}. In \cite{bcbs2016} we can find a recent collection of introductory, survey and original articles from both experimental and theoretical viewpoints not limited to light-matter interaction but also in diverse fields of research. For a notable achievement in recent experimental studies, we refer the reader to \cite{YS2018}.

The Hamiltonian \(\HRabi\) of the QRM is precisely given by
\[
  \HRabi := \omega a^{\dagger}a + \Delta \sigma_z + g (a + a^{\dagger}) \sigma_x .
\]
Here, \(a^{\dagger}\) and \(a\) are the creation and annihilation operators of the single bosonic mode (\([a,a^{\dagger}]=1 \)), $\sigma_x, \sigma_z$ are the Pauli matrices (sometimes written as \(\sigma_1\) and \(\sigma_3\), but since there is no risk of confusion with the variable \(x\) to appear below in the heat kernel, we use the usual notations), $2\Delta$ is the energy difference between the two levels, $g$ denotes the coupling strength between the two-level system and the bosonic mode with frequency $\omega$ (subsequently, we set $\omega=1$ without loss of generality). The Hamiltonian \(\HRabi\) of the QRM has a $\Z_2$-symmetry that gives the parity decomposition
\[
  H_{\text{Rabi}} = H_{+} \oplus H_{-}.
\]
We note that the integrability of the QRM was established in \cite{B2011PRL} using the $\Z_2$-symmetry. 

The aim of the present paper is to provide a explicit formula for the propagator of the Schr\"odinger equation for the
QRM and the same for the two systems defined by the parity decomposition using the explicit analytical formula of
the corresponding heat kernels \(K_{\text{Rabi}}(x,y,t)\) and \(K_\pm(x,y,t, \Delta)\) obtained
in \cite{RW2020hk}. The explicit formula may be used for precise computation of time evolution of quantum states without the
numerical drawbacks appearing in prior methods (see e.g. \cite{WKB2012}).

We recall that the propagator is the integral kernel associated to the Schr\"odinger equation
corresponding to $\HRabi$. Precisely, it is a (two-by-two matrix valued) function $U_{\text{Rabi}}(x,y,t)$ satisfying
\begin{align*}
& i \frac{\partial}{\partial t} U_{\text{Rabi}}(x,y,t) = -\HRabi U(x,y,t) \quad \text{for all}\quad t>0, \\ 
& \lim_{t \to 0}U_{\text{Rabi}}(x,y,t)=\delta_x(y) {\bf I}_2 \quad \text{for}\quad x,y \in \R.
\end{align*}
In other words, we have the expression as a time evolution operator 
\begin{align*}
U_{\text{Rabi}}(x,y,t)=\langle x\,|\,e^{-itH_{\text{Rabi}}} \,|\,y\rangle (= K_{\text{Rabi}}(x,y,it)) \quad \text{for} \quad t>0,
\end{align*}
whenever the Wick rotation, the analytic continuation of the heat kernel $K_{\text{Rabi}}(x,y,t)$ from real to imaginary time,
exists (possibly with singularities). 

Let us mention works related to the computation of the propagator or heat kernel of QRM.
In \cite{ZZ1988} it has been given an approximated formula for the propagator using path-integral techniques. For the Spin-Boson model, and the QRM as a special case, the Feynman-Kac formula for the heat kernel was obtained in \cite{HH2012,HHL2012} via a Poisson point process and a Euclidean field. However, for the study of longtime behavior of the system, the use of numerical computations or approximations is inevitable (see e.g. \cite{LF2011, CPMP}). Moreover, recently in \cite{DSGKSN2017}, it was shown that even approximation forms which are obtained by a perturbative approach provides can provide a significant insight. In fact, it was shown in \cite{DSGKSN2017} that perturbative diagrammatic approach provides a direct visualization of virtual and physical photons in the physical process using the Jaynes-Cummings (the RWA of the QRM)) propagator.

Our second theme is the study of the spectral (functional) determinant, that is, the zeta regularized product of the spectrum for $H_{\pm}$. The spectral determinant of an operator is a function that has zeros at the eigenvalues, that is, it is the generalization of the characteristic polynomial for a finite matrix, It describes various important topological and/or number theoretical invariants (see e.g. \cite{RS1974, QHS1993TAMS}). In \cite{KRW2017}, a significant relation was found between Braak's $G$-function, the power series of transcendental function used to prove the integrability of the QRM \cite{B2011PRL}, and the spectral zeta function of the QRM \cite{Sugi2016} (the Mellin transform of the partition function). Actually, the $G$-function is (up to a non-vanishing function) equal to the spectral determinant
of \(\HRabi\), that is, the zeta-regularized product associated to the spectral zeta function of the QRM.
This result is significant because, on the one hand, the $G$-function is defined through the solutions of system of ordinary differential equations (which is equivalent with the confluent Heun ODE picture of the QRM) that assures the existence of the entire solutions (see Appendix \ref{sec:bargm-space-confl}) and, on the other hand, the spectral determinant arises from the linear term of the Taylor expansion of the spectral zeta function at the origin. The identification of $G$-function and spectral determinant suggests a deeper relation between the exact solvability of a quantum interaction system and the meromorphic continuation of its spectral zeta function.

We extend this result for the case of the Hamiltonians $H_{\pm}$ of the parity decomposition of the QRM. In order to accomplish this, we show the meromorphic continuation of the spectral zeta function of the Hamiltonians $H_{\pm}$. 
In addition, by using the explicit formula of the partition function $Z_{\text{Rabi}}(t)$ of the QRM
  we obtain a contour integral representation of spectral zeta function $\zeta_{\text{QRM}}(s; \tau)$
  (Theorem \ref{IntRep_SZF}). We leave the proof of the integral representation to Appendix \ref{sec:proofmero}. There, we also discuss certain interesting properties of the spectral zeta function that are the consequence of the contour integral expression.

The common tool behind the two results in this paper, that is, the propagator formula and the identification of the spectral determinant with the $G$-function, is the meromorphic continuation to the complex plane of the heat kernel and partition function allowed by precise estimates obtained from the analytical formulas obtained in \cite{RW2020hk}.

\section{Revisited: the discrete path integral for the heat kernel of QRM} \label{sec:limit}

In this section we recall the formulas for the heat kernel and partition function of
the QRM obtained in \cite{RW2020hk}. 
As an introduction for the reader and to complement the discussion of the aforementioned paper, we give a brief overview of the method of computation for the heat kernel which we refer here as the method of discrete paths. We also provide a new interpretation on the resulting expression of the heat kernel using representation theory.

\subsubsection{Trotter-Kato's product formula}

The Hamiltonian of the QRM is given by
\[
  \HRabi := a^{\dagger}a + \Delta \sigma_z + g (a + a^{\dagger}) \sigma_x,
\]
and it is easy to see that it can be written as
\begin{align*}
  \HRabi = b^{\dagger} b - g^2 + \Delta \sigma_z,
\end{align*}
where $b = a +  g \sigma_x$ and $b^{\dagger} = a^\dag +  g \sigma_x $ are the annihilation and creation operators of
a non-commutative version of the quantum harmonic oscillator (i.e. $[b^{\dagger},b] = \bm{I}_2$).

By the Trotter-Kato product formula (see e.g. \cite{Calin2011}), the heat semigroup is given by
\[
  e^{- t \HRabi} = e^{- t (b^{\dagger}b -g^2 + \Delta \sigma_z)} = \lim_{N\to \infty} (e^{-t (b^{\dagger}b -g^2)/N} e^{-t(\Delta \sigma_z)/N})^N,
\]
with convergence in the strong operator topology, and the heat kernel $K_{\text{Rabi}}(x,y,t)$ is obtained
from this formula. In \S\ref{sec:Propagator}, the propagator $U_{\text{Rabi}}(x,y,t)$ is obtained from the Heat kernel by
the Wick rotation $t \to i t$ from the heat kernel $K_{\text{Rabi}}(x,y,t)$. 

The first step for the computation of the heat kernel is to obtain the explicit form of the integral kernel $K^{(N)}(x,y,t)$ of
$(e^{-t (b^{\dagger}b -g^2)/N} e^{-t(\Delta \sigma_z)/N})^N$. 
Since the kernel $K^{(N)}(x,y,t)$ is a two-by-two matrix-valued function, we can write it
in terms of scalar and a non-commutative (matrix) parts in a reasonable way. More precisely, we write 
\begin{align}
  \label{eq:sumGI}
  K^{(N)}(x,y,t) = \sum_{\bs \in \btZ{N}} G_N(u,\Delta,\bs) I_N(x,y,u, \, \bs),
\end{align}
where $G_N(u,\Delta,\bs)$ is a matrix-valued function and the scalars $I_N(x,y,u, \, \bs)$ correspond to the evaluation of multivariate Gaussian integrals. The matrices appearing in the Hamiltonian $\HRabi$ give rise to the finite group \(\Z_2^N\)  structure of equation \eqref{eq:sumGI}.
An important observation is that the $\Z_2^{N}$-group structure can be interpreted in terms of finite paths as in Figure \ref{fig:paths}, we refer to Section 4.2 of \cite{RW2020hk} for more details. 

By taking the limit, we see that the components of the heat kernel are, up to Mehler's type factor inherited by the
quantum harmonic oscillator, given by
\begin{align}
  \label{eq:limit0}
  & \lim_{N \to \infty}\left( \frac{1-u^{\frac{2\Delta}N}}{2 u^{\frac{\Delta}N}} \right) \sum_{k \geq 3}^N   \left( \frac{1+u^{\frac{2\Delta}N}}{2 u^{\frac{\Delta}N}} \right)^{N-k}  J_\eta^{(k,N)}(x,y,u^{\frac{1}N},g)  \sum_{ \bs \in \Z_2^{k-3}} g^{(\alpha,\eta)}_{k-1}(u^{\frac1N},\bs) R^{(\alpha,\eta)}_\eta(u^{\frac{1}N},\bs),
\end{align}
with $\alpha,\eta \in \{0,1\}$ and where we omitted some trivial terms for clarity (see equation (33) in \cite{RW2020hk}). This form
of the heat kernel suggests a possible evaluation as a type of Riemann integral. However, the presence of multiple changes of
signs depending on $\bs \in \Z_2^{k-3}$ and $k \geq 3$ in the term $g^{(\alpha,\eta)}_{k-1}(u^{\frac1N},\bs) R^{(\alpha,\eta)}_\eta(u^{\frac{1}N},\bs)$ make such an approach unfeasible.

\subsubsection{Fourier transform on $\Z_2^n  \, (n\geq0)$}

At this point the method described in \cite{RW2020hk} greatly differs from the usual evaluation by path integrals.
Let us denote by $|\rho|$ the norm (or length) in $\Z_2^{N}$, that is,
$ |\rho| = \sum \rho_i $ for $\rho=(\rho_1,\rho_2,\cdots,\rho_{N})\in \Z_2^{N}$.
By using Fourier analysis, and more concretely Parseval's identity, on the group algebra \(\C[\Z_2^{k-3}]\), we see that
\[
  \sum_{ \bs \in \Z_2^{k-3}} g^{(\alpha,\eta)}_{k-1}(u^{\frac1N},\bs) R^{(\alpha,\eta)}_\eta(u^{\frac{1}N},\bs) = \sum_{ \rho \in \Z_2^{k-3}}  \hat{g}^{(\alpha,\eta)}_{k-1}(u^{\frac1N},\rho) \hat{R}^{(\alpha,\eta)}_\eta(u^{\frac{1}N},\rho).
\]
Then, by fixing $|\rho|=\lambda \in \Z_{\geq0}$, we observe that in the right-hand side of the above equation the function $\hat{R}^{(\alpha,\eta)}_\eta(u^{\frac{1}N},\rho)$ is given as the exponential of certain $q$-polynomials and that these $q$-polynomials do not depend on $k$. This process resembles the separation of a function defined on $\R^{n}$ into  radial and non-radial parts. We remark, however, that the function $\hat{R}^{(\alpha,\eta)}_\eta(u^{\frac{1}N},\rho)$ is not radial, it is only determined (as a $q$-polynomial) by $|\rho|$. In fact, it turns out that it actually determined by the $\mathfrak{S}_\infty$ orbit of $\rho$ for certain action on $\Z_2^{\infty}$.

\subsubsection{$\mathfrak{S}_\infty$-action on $\Z_2^{\infty}$ (Geometric)}
\label{sec:mathfr-acti-z_2infty}

The infinite sum in the limit appearing in the heat kernel is identified with the inductive limit
\[
  \Z_2^{\infty} =  \varinjlim_{n} \Z_2^{n},
\]
where, for $ i \leq j$, the injective homomorphisms $f_{i j} : \Z_2^i \to \Z_2^j$ are given by
\[
  f_{i j}(\rho) = (\rho_1,\rho_2,\cdots,\rho_i,0,\cdots,0) \in \Z_2^{j}
\]
for $\rho = (\rho_1,\rho_2,\cdots,\rho_i) \in \Z_{2}^{i}$, that is, the natural group embeddings. As usual, we consider the inductive limit $\Z_2^{\infty}$ to
be equipped with the discrete topology.

Let us also consider the infinite symmetric group $\mathfrak{S}_\infty$ obtained by the inductive limit of the finite symmetric groups $\mathfrak{S}_n$ where the inductive homomorphisms are also given by the natural group embeddings. Then $\mathfrak{S}_\infty$ acts naturally on $\Z_2^{\infty}$ and the orbits are given by
\[
  \mathcal{O}_\lambda := \left\{ \sigma \in \Z_2^{\infty} \, : \, |\sigma|= \lambda  \right\}
\]
for $\lambda\geq0$ and where the function $|\cdot|$ is induced by the norms on each group $\Z_2^{n}\, (n\geq0)$. Equivalently, we have $ \mathcal{O}_{\lambda} = \mathfrak{S}_\infty . \sigma$, for $\sigma \in \Z_2^{\infty} $ with $|\sigma|=\lambda$ for $\lambda \in \Z_{\geq 0}$.

In other words, $|\cdot|$ is an orbit invariant for the action and we get an orbit decomposition of $\Z_2^{\infty}$ by
\begin{equation*} 
  \Z_2^{\infty} = \bigsqcup_{n=0}^{\infty} \mathfrak{S}_{\infty} . [\underbrace{1,1,\cdots,1}_{n}] \xleftrightarrow{ \text{labeled by $|\cdot|$}} \Z_{\geq 0},
\end{equation*}
where 
\[
  [\underbrace{1,1,\cdots,1}_{n}] := (\underbrace{1,1,\cdots,1}_{n},0,0,\cdots) \in \Z_2^{\infty}
\]
is the image of $(1,1,\cdots,1) \in \Z_2^{n}$ in $\Z_2^{\infty}$. In Figure \ref{fig:paths} we give an example of an element in $\Z_2^{\infty}$ and its corresponding canonical coset representative. We note that for the orbit $\mathcal{O}_\lambda$, the value $\lambda$ of the invariant coincides with the length of the canonical element $(1,1,\cdots,1)\in \Z_2^\lambda$ described above.

\begin{figure}[!ht]
  \centering
  \begin{tikzpicture}[domain=0:4]


    \draw[step=1,very thin,color=gray] (0.6,2) grid (9,3);

    \draw[thick] (0.6,2) -- (1,2) ;
    \draw[thick] (1,2) -- (9.1,2) 
    node[pos=1,below] {$9$}
    node[pos=0,below]{1} node[pos=0.125,below]{2} node[pos=0.25,below]{3}
    node[pos=0.375,below]{4} node[pos=0.5,below]{5} node[pos=0.625,below]{6}
    node[pos=0.750,below]{7} node[pos=0.875,below]{8}; 

    
    \draw[thick,dashed] (9.1,3) -- (11,3) ;
    \draw[thick,dashed] (9.1,2) -- (11,2) ;

    \draw[thick] (0.6,1.8) node[below] {$0$} -- (0.6,3.2) node[above] {$1$};

    \draw[color=blue,very thick]
    (0.6,3) -- (1,3) to[out=0,in=180] (2,2) to[out=0,in=180] (3,3) to[out=0,in=180]
    (4,3) to[out=0,in=180] (5,3) to[out=0,in=180] (6,2) to[out=0,in=180] (7,3)
    to[out=0,in=180] (8,3) to[out=0,in=180] (9,2) --  (11,2); 


    \node at (1.5,0.5) {$\equiv$};

    
    \draw[step=1,very thin,color=gray] (2.6,0) grid (11,1);

    \draw[thick] (2.6,0) -- (3,0) ;
    \draw[thick] (3,0) -- (11.1,0) 
    node[pos=1,below] {$9$}
    node[pos=0,below]{1} node[pos=0.125,below]{2} node[pos=0.25,below]{3}
    node[pos=0.375,below]{4} node[pos=0.5,below]{5} node[pos=0.625,below]{6}
    node[pos=0.750,below]{7} node[pos=0.875,below]{8}; 

    
    \draw[thick,dashed] (11.1,1) -- (13,1) ;
    \draw[thick,dashed] (11.1,0) -- (13,0) ;

    \draw[thick] (2.6,-0.2) node[below] {$0$} -- (2.6,1.2) node[above] {$1$};

    \draw[color=orange,very thick]
    (2.6,1) -- (3,1) to[out=0,in=180] (4,1) to[out=0,in=180] (5,1) to[out=0,in=180]
    (6,1) to[out=0,in=180] (7,1) to[out=0,in=180] (8,1) to[out=0,in=180] (9,0)
    to[out=0,in=180] (10,0) to[out=0,in=180] (11,0) --  (13,0); 


    \node at (14.2,0.5) {$\pmod{\mathfrak{S}_{\infty}}$};
    
  \end{tikzpicture}
  \caption{A path in $\Z_2^{9} \subset \Z_2^{\infty} $ (above) and the corresponding canonical  $\mathfrak{S}_{\infty}$-orbit representative in $\Z_2^{\infty}$ (below) of the orbit $\mathcal{O}_6$.}
  \label{fig:paths}
\end{figure}
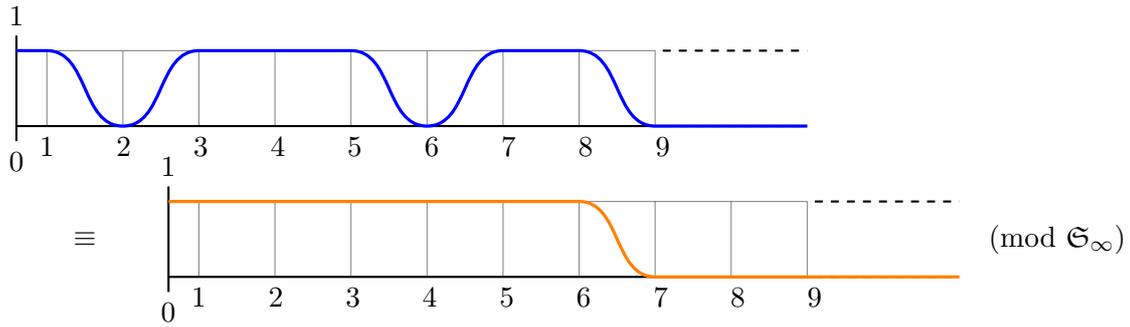

Thus, using this idea, we rearrange the limit for the main body of heat kernel in the following way
\begin{equation}
  \label{eq:limit1}
  \sum_{\lambda=0}^{\infty} \lim_{N \to \infty} \left( \frac{1-u^{\frac{2\Delta}N}}{2 u^{\frac{\Delta}N}} \right) \sum_{k \geq \lambda}^N   \left( \frac{1+u^{\frac{2\Delta}N}}{2 u^{\frac{\Delta}N}} \right)^{N-k}  h_\eta^{(k,N)}(x,y,u^{\frac{1}N},g) \int_{\mathcal{O}_\lambda^{k}}  f^{(\alpha,\eta)}_{k-1}(u^{\frac{1}N},\mu) d\mu_{\lambda},
\end{equation}
for certain functions $h_\eta^{(k,N)}(x,y,u^{\frac{1}N},g)$ and $f^{(\alpha,\eta)}_{k-1}(u^{\frac{1}N},\mu) d\mu_{\lambda}$ and where the innermost integral is the orbital integral of the action and $\mathcal{O}_\lambda^{k} := \mathcal{O}_\lambda \cap \Z_2^k$ (here, $\Z_2^k$ is regarded as a subgroup of $\Z_2^{\infty}$).
Let us describe the orbital integral. First, notice that when we fix $|\rho| = \lambda$, the elements of $\mathcal{O}_\lambda^{k}$ are determined by the position of the ones, in other words there is a bijection
\[
  \left\{ \rho \in \Z_2^k \, : \, |\rho|= \lambda \right\} \longleftrightarrow \left\{ (j_1,j_2,\cdots,j_\lambda) \in \Z_{\geq1}^{\lambda} \,;\, j_1 <j_2<\cdots<j_\lambda \leq k \right\}.
\]
For instance, if we write $\tfrac{j}N$ to indicate the presence of a $1$ in the $j$-th position of a given $\rho \in \Z_2^N$, the elements of $\mathcal{O}_\lambda^{N}$ are given by
\[
  \left(\tfrac{1}N, \tfrac{2}N, \cdots, \tfrac{\lambda}{N}\right), \left(\tfrac{1}N, \tfrac{3}N,\tfrac{4}{N} \cdots, \tfrac{\lambda+1}{N}\right), \left(\tfrac{1}N, \tfrac{2}N,\tfrac{4}{N} \cdots, \tfrac{\lambda+1}{N}\right),\cdots,\left(\tfrac{N-\lambda}N, \tfrac{N-\lambda+1}N,\cdots, \tfrac{N}{N}\right) 
\]
for $\lambda << N$. The orbit integral is then realized by a sum
\[
  \int_{\mathcal{O}_\lambda^{k}}  f^{(\alpha,\eta)}_{k-1}(u^{\frac{1}N},\mu) d\mu_{\lambda} = \sum_{1\le j_1 < j_2 < \cdots < j_\lambda\leq k} f^{(\alpha,\eta)}_{k-1}(u^{\frac{1}N},\bm{j}), 
\]
and moreover, the sum in the right hand side is identified, up to certain terms of bounded order,  by an integral
on the $\lambda$-simplex by the usual procedure using the Riemann-Stieljes integration
\begin{equation}
  \label{eq:simplexint}
  \sum_{1\le j_1 < j_2 < \cdots < j_\lambda\leq k} f^{(\alpha,\eta)}_{k-1}(u^{\frac{1}N},\bm{j}) = \int_{0}^k \int_0^{z_\lambda} \cdots \int_0^{z_{2}} f^{(\alpha,\eta)}_{k-1}(u^{\frac{1}N},\bm{z})  d \bm{z} + O(k^{\lambda-1}).
\end{equation}
 Finally, by using an appropriate change of variables, we are able to evaluate the limit in equation
\eqref{eq:limit1} as a Riemann sum, thus obtaining the explicit formula for the heat kernel.
The lower order terms in \eqref{eq:simplexint} vanish at the
limit appearing in the heat kernel.  

To summarize, in place of Feynman or Feynman-Kac path integrals, in this method we consider only the ``discrete paths'' in the inductive limit  $\Z_2^{\infty}$. In fact, the heat kernel $K_{\text{Rabi}}(x,y,t)$ can be expressed as the summation over the orbit of the infinite
symmetric group $\mathfrak{S}_{\infty}$ on $\Z_2^{\infty}$ by the decomposition described above.
We would like to emphasize the role of the harmonic analysis on the inductive family of groups $\{\Z_2^n\}_{n\geq0}$. Concretely, it allows us to transform the (infinitely many) changes of sign in the limit \eqref{eq:limit0} into an expression that is evaluated into a series of the form
\[
  \sum_{\lambda\geq 0} (t \Delta)^{2 \lambda} \int_{ ( \mathfrak{S}_{\infty}.[1,1,\cdots,1] )/ \mathfrak{S}_\lambda  } \exp\left(\text{hyperbolic functions of $t, \bm{\mu}_\lambda$} \right) d \bm{\mu}_{\lambda},
\]
where the orbital integral over $(\mathfrak{S}_{\infty} . [1,1,\cdots,1]) / \mathfrak{S}_\lambda $ 
is realized as an integral over the $\lambda$-th simplex.

\subsubsection{$\mathfrak{S}_\infty$-module decomposition for $\Z_2^{\infty}$ (Algebraic)}

In the foregoing discussion, we  described geometrically the discrete path integral based
on the group $\Z_2^{\infty}$. However, since the computation is effectively done in the dual space, that is,
the space of its Fourier image, which is also identified by $\Z_2^{\infty}$, it would be natural to understand
this discrete path integral in the framework of the representation theory of the infinite symmetric group
$\mathfrak{S}_\infty$ (see e.g. \cite{Ol2001, TV2007}).

Actually, the representative $[1,1,\ldots,1]$ in $\Z_2^{\infty}$ can be regarded as the partition $\Pi_{\lambda}=(\lambda, \infty)$ of $\N$.
Then the corresponding Young subgroup $\mathfrak{S}_{\Pi_{\lambda}}$ of $\mathfrak{S}_\infty$ is given by
$\mathfrak{S}_{\Pi_{\lambda}}= \tilde{\mathfrak{S}}_\lambda \times \tilde{\mathfrak{S}}_\infty $, where $\tilde{\mathfrak{S}}_\lambda$ is isomorphic to
$\mathfrak{S}_\lambda$ in $\mathfrak{S}_\infty$, that is, its tail is infinite identity permutation. Also $\tilde{\mathfrak{S}}_\infty $
is isomorphic to $\mathfrak{S}_\infty $ by the first $\lambda$ component is the trivial permutation. Notice that the pair of Young diagrams
\[
  \bigg(\, {\underbrace{\ydiagram{2}\cdots\ydiagram{2}}_{\lambda} \; , \; \underbrace{\ydiagram{5}\cdots}_{\infty}}\bigg)
\]
correspond to the trivial representation of $\mathfrak{S}_{\Pi_{\lambda}}$.
Then the representation $\mathrm{Ind}_{\mathfrak{S}_{\Pi_{\lambda}} }^{\mathfrak{S}_\infty }1$ induced from the identity representation
of $\mathfrak{S}_{\Pi_{\lambda}}$ is irreducible (the most simple example of the type I representation in \cite{TV2007}) and as
a $\mathfrak{S}_\infty$-module
\[
  \mathfrak{S}_\infty \, {}^\curvearrowright \, \Z_2^{\infty}=\oplus_{\lambda=0}^\infty \mathrm{Ind}_{\mathfrak{S}_{\Pi_{\lambda}} }^{\mathfrak{S}_\infty }1 .  
\]
This decomposition gives another interpretation of the expression for the heat kernel. Therefore, one of the ways to explore the relation between the discrete path integral and the Feynman path integral is to study the asymptotic combinatorics as developed in e.g. \cite{Hora2001}.

\begin{rem} 
  In the setting of discrete paths, we expect the existence of an equivalence relation $\sim$ of (continuous) paths such that
\begin{equation}
  \label{eq:equiv}
  \{\text{paths}\}/\sim \, = \, \Z_2^{\infty},
\end{equation}
in a way that the path integral can be identified with the ``discrete path integral''. The existence of the equivalence relation $\sim$ is highly non-trivial since, as it is well-known, the set of paths $\{\text{paths}\}$ is uncountable and, in general, there is no known  way to associate a reasonable measure to the path integral, while the set of discrete (countably many) paths $\Z_2^{\infty}$ has a well behaved topology induced from the point measure in each group $\Z_2^{n}$. Thus, further understanding of the conjectural relation $\sim$ may be of importance for the mathematical formalization of the path integral for a certain class of quantum systems.  In addition, the derivation of the discrete path integral, that is, the explicit formula of the heat kernel, from the concrete Feynman-Kac formula for the QRM (obtained for the QRM in \cite{HH2012}) may also be an interesting problem. In Figure \ref{fig:paths2} we give a schematic picture of this discussion. 

\begin{figure}[!ht]
  \centering
  \begin{tikzpicture}[domain=0:4]


    \node at (0,2.5) { {\small\begin{tabular}{c}  Trotter-Kato \\ product formula \end{tabular}}};

    \node at (7,5) { {\small Path integral}};
    \node at (7,3.9) {{\Large $\bigcup$}};
    
    \node at (7,2.6) { {\small\begin{tabular}{c}  Feynman-Kac \\ path integral \end{tabular}}};

    \node at (7,0) { {\small\begin{tabular}{c}  Discrete \\ path integral \end{tabular}}};

    \node at (5.5,1.1) {\begin{tabular}{c} ${\scriptstyle \Z_2^{\infty}}$ {\tiny harmonic}  \\ {\tiny analysis } \end{tabular}};
    

    \draw[->,dashed,thick] (1.8,2.5) --  (5.4,5); 
g    

    \draw[->,thick] (1.8,2.5) -- (5.4,0);

    \draw[->, dashed,thick,double] (7,1.9) --  (7,0.7); 

    \draw[->] (7.2,1.5) -- (11,0.9);


    \draw[->,dashed,thick] (8.3,5) to[out=0,in=0] (8.5,0);
    
    \node at (11.5,2.5) { {\small$ \{\text{paths}\}/\sim = \Z_2^{\infty}  \; \bm{?}$}};


    \node[fill=green!20,draw] at (1.4,0.5) { {\tiny \begin{tabular}{c}  Gaussian integrations \\ + \\ Fourier analysis on $\Z_2^{n} (n\geq 0)$ \\ (dual space picture) \end{tabular}}};
    
    \node[fill=green!20,draw] at (12.1,0.5) { {\tiny \begin{tabular}{c}  $\mathfrak{S}_{\infty}$-orbital integrals \\ with probability measure \\ (asymptotic combinatorics) \end{tabular}}};
    
  \end{tikzpicture}
  \caption{Schematic picture of computation of the heat kernel for the QRM}
  \label{fig:paths2}
\end{figure}
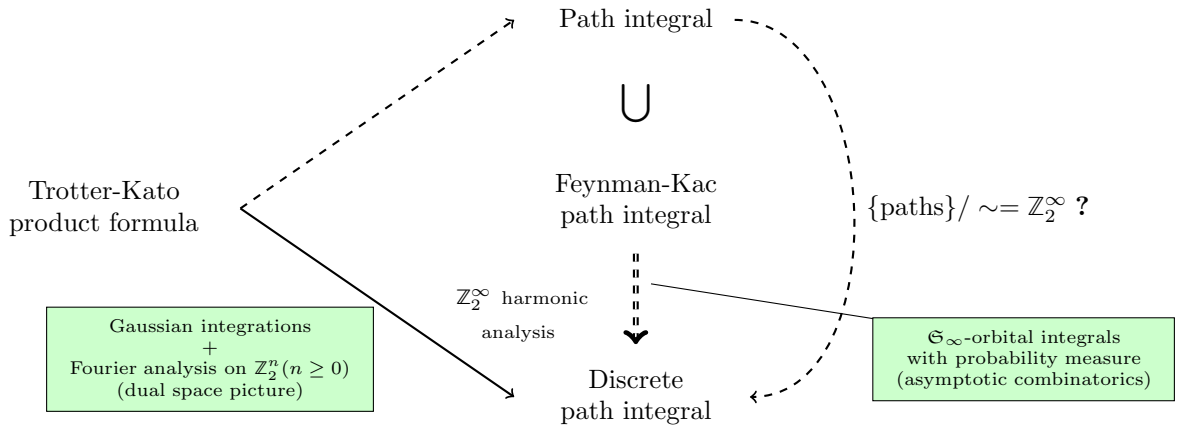

We remark that since the $\Z_2^{\infty}$-group structure is inherent to the Hamiltonian $\HRabi$, the conjectural equivalence \eqref{eq:equiv} may be determined only up to the given quantum system. It is difficult to expect a unique relation of the type \eqref{eq:equiv} for path integrals in general.
\end{rem}

\begin{rem}
  The $\Z_2^{\infty}$ group structure appearing in the computation of the heat kernel is unrelated to 
  the $\Z_2$-parity of the QRM. In fact, it is not difficult to note that a similar $\Z_2^{\infty}$ structure also
  appears in the case of the asymmetric quantum Rabi model
  \[
    \HRabi^{\e} := a^{\dagger}a + \Delta \sigma_z + g (a + a^{\dagger}) \sigma_x + \e \sigma_x,
  \]
  even thought a $\Z_2$-parity is known only for the case $\e = 0$ (QRM case).
\end{rem}

\subsection{Explicit formulas for heat kernel and partition function}
\label{sec:expl-form-heat}


\subsubsection{Heat kernel}
\label{sec:heat-kernel}

In the expressions for the heat kernel and the partition function, the
integral over the $\lambda$-th simplex for \(\lambda = 0 \) is used with the meaning
\[
  \idotsint\limits_{0\leq \mu_1 \leq \cdots \leq \mu_\lambda \leq 1} f(x) d \bm{\mu_0} = f(x),
\]
for any function \(f(x)\).

\begin{thm}[Thm 4.2 of \cite{RW2020hk}] 
  The heat kernel $K_{\text{Rabi}}(x,y,t)$ of the QRM is given by the uniformly convergent series
  \begin{align*}    
    &K_{\text{Rabi}}  (x,y,t) =  K_0(x,y,g,t) \Bigg[ \sum_{\lambda=0}^{\infty} (t\Delta)^{\lambda} e^{-2g^2 (\coth(\tfrac{t}2))^{(-1)^\lambda}}
    \\
    &\qquad\times \idotsint\limits_{0\leq \mu_1 \leq \cdots \leq \mu_\lambda \leq 1} e^{4g^2 \frac{\cosh(t(1-\mu_\lambda))}{\sinh(t)}(\frac{1+(-1)^\lambda}{2}) + \xi_{\lambda}(\bm{\mu_{\lambda}},t)}  
          \begin{bmatrix}
            (-1)^{\lambda} \cosh  &  (-1)^{\lambda+1} \sinh  \\
            -\sinh &  \cosh
          \end{bmatrix}
                     \left( \theta_{\lambda}(x,y,\bm{\mu_{\lambda}},t) \right) d \bm{\mu_{\lambda}} \Bigg],
  \end{align*}
  with \(\bm{\mu_0} := 0\) and \(\bm{\mu_{\lambda}}= (\mu_1,\mu_2,\cdots,\mu_\lambda)\) and \(d \bm{\mu_{\lambda}} = d \mu_1 d \mu_2 \cdots d \mu_{\lambda} \)
  for \(\lambda \geq 1\).

  Here, $K_0(x,y,g,t)$ is given by
  \begin{align*}
    K_0(x,y,g,t)
    & := \frac{e^{t(g^2+\tfrac12)}}{\sqrt{2\pi \sinh(t)}} \exp\left( -\frac{(x^2 + y^2) \cosh(t) - 2x y}{2\sinh(t)} \right)\\
  \end{align*}
  and the functions \(\theta_{\lambda}(x,y, \bm{\mu_{\lambda}},t)\) and $\xi_\lambda(\bm{\mu_{\lambda}},t)$ are given by
  \begin{align*} 
    \theta_{\lambda}(x,y, \bm{\mu_{\lambda}},t) &:= \frac{2\sqrt{2} g}{\sinh(t)}\left( x \cosh(t) - y \right) \left( \frac{1-(-1)^{\lambda}}{2} \right) - \sqrt{2} g (x-y) \coth(\tfrac{t}2) \\
    & \quad +   \frac{2\sqrt{2} g (-1)^{\lambda} }{\sinh(t)}  \sum_{\gamma=0}^{\lambda} (-1)^{\gamma} \Big[ x  \cosh(t(1 -   \mu_{\gamma})) -  y  \cosh(t \mu_{\gamma})  \Big] \nonumber \\
    \xi_\lambda(\bm{\mu_{\lambda}},t) &:=  -\frac{8g^2 }{\sinh(t)} \left(\sinh(\tfrac12t(1-\mu_\lambda))\right)^2 (-1)^{\lambda}  \sum_{\gamma=0}^{\lambda} (-1)^{\gamma} \cosh( t \mu_{\gamma}) \nonumber  \\
   &\qquad  - \frac{4 g^2  }{\sinh(t)} \sum_{\substack{0\leq\alpha<\beta\leq \lambda-1\\ \beta - \alpha \equiv 1 \pmod{2}  }}  \left( \cosh(t(\mu_{\beta+1}-1)-\cosh(t(\mu_{\beta}-1)) \right) \nonumber \\
   &\qquad \qquad \qquad \qquad \qquad \qquad \times ( \cosh(t  \mu_{\alpha}) - \cosh(t \mu_{\alpha+1})), \nonumber 
  \end{align*}
where we use the convention \( \mu_0 = 0 \) whenever it appears in the formulas above.
\end{thm}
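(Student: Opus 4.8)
This is Theorem~4.2 of \cite{RW2020hk}, and the plan is to realize the discrete-path-integral scheme sketched above as an honest term-by-term evaluation of the Trotter--Kato limit. Setting $u=e^{-t}$, the first step is to write down explicitly the integral kernel $K^{(N)}(x,y,t)$ of $(e^{-t(b^{\dagger}b-g^2)/N}e^{-t\Delta\sigma_z/N})^N$. Each factor $e^{-t(b^{\dagger}b-g^2)/N}$ is a shifted harmonic-oscillator semigroup whose kernel is Gaussian of Mehler type in the position variables, while each $e^{-t\Delta\sigma_z/N}$ is the diagonal matrix $\mathrm{diag}(u^{\Delta/N},u^{-\Delta/N})$. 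Composing the $N$ factors and inserting $N-1$ resolutions of the identity turns the kernel into an $(N-1)$-fold Gaussian integral weighted by an ordered product of $2\times2$ diagonal matrices; expanding that matrix product over the two branches available at each step produces exactly the decomposition \eqref{eq:sumGI}, with $\bs\in\btZ{N}$ recording the sequence of branch choices, the scalars $I_N$ the explicitly computable Gaussian integrals, and $G_N$ the surviving matrix. Carrying out the Gaussian integrations yields hyperbolic arguments of the type that will assemble into $\theta_\lambda$ and $\xi_\lambda$, and isolates the free factor $K_0(x,y,g,t)$.

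The second step removes the oscillatory sign obstruction of \eqref{eq:limit0}. Because $g^{(\alpha,\eta)}_{k-1}(\cdot,\bs)R^{(\alpha,\eta)}_\eta(\cdot,\bs)$ changes sign as $\bs$ ranges over $\btZ{k-3}$, no naive Riemann-sum interpretation is available; instead I pass to the Fourier dual, applying Parseval's identity on the group algebra $\C[\btZ{k-3}]$ to rewrite the inner sum as $\sum_{\rho}\hat g^{(\alpha,\eta)}_{k-1}(\cdot,\rho)\hat R^{(\alpha,\eta)}_\eta(\cdot,\rho)$. The structural fact that must be verified here --- and which is the real content of the method --- is that $\hat R^{(\alpha,\eta)}_\eta(u^{1/N},\rho)$ is the exponential of $q$-polynomials that do not depend on $k$ and depend on $\rho$ only through its weight $|\rho|=\lambda$, equivalently only through the $\mathfrak{S}_\infty$-orbit $\mathcal{O}_\lambda$ of $\rho$ for the action on $\Z_2^\infty$. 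Granting this, I stratify the character sum by $\lambda=|\rho|$, collect the contribution of each orbit $\mathcal{O}_\lambda^{k}=\mathcal{O}_\lambda\cap\btZ{k}$, and arrive at the rearranged limit \eqref{eq:limit1}.

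The third step evaluates each orbital contribution and passes to the limit. Using the bijection between weight-$\lambda$ elements of $\btZ{k}$ and strictly increasing tuples $1\le j_1<\cdots<j_\lambda\le k$, the orbital integral becomes the ordered sum $\sum_{j_1<\cdots<j_\lambda}f^{(\alpha,\eta)}_{k-1}(u^{1/N},\bm{j})$, which by Riemann--Stieltjes summation equals the $\lambda$-simplex integral of \eqref{eq:simplexint} up to $O(k^{\lambda-1})$. Taking $N\to\infty$ under the rescaling $j_i/N\mapsto\mu_i$, the $\lambda$-fold position sum passes to the nested simplex integral $\int_{0\le\mu_1\le\cdots\le\mu_\lambda\le1}$, and its $N^\lambda$ Jacobian combines with the $\lambda$ small prefactors $(1-u^{2\Delta/N})/(2u^{\Delta/N})=\sinh(t\Delta/N)\approx t\Delta/N$ attached to the $\lambda$ sign-flips to yield the weight $(t\Delta)^\lambda$. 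The residual sum over $k$, carrying the geometric weight $\cosh(t\Delta/N)^{N-k}$ and the Mehler-type factor $h_\eta^{(k,N)}$, converges to the closed-form exponentials $e^{-2g^2(\coth(t/2))^{(-1)^\lambda}}$ and $e^{4g^2\cosh(t(1-\mu_\lambda))/\sinh(t)}$ (the latter weighted by the parity factor $(1+(-1)^\lambda)/2$), while the $\bs$-independent Gaussian integration supplies $K_0(x,y,g,t)$. The matrix integrand converges to $e^{\xi_\lambda(\bm{\mu_{\lambda}},t)}$ times the $2\times2$ matrix of $\cosh$ and $\sinh$ evaluated at $\theta_\lambda(x,y,\bm{\mu_{\lambda}},t)$, and matching these limits term by term reproduces the stated formula.

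The main obstacle is analytic and occurs at two junctures. The first is the $k$-independence and orbit-invariance of $\hat R^{(\alpha,\eta)}_\eta$ after the Fourier transform: this recombination of infinitely many sign changes into a single orbit datum is the crux of the whole argument and demands careful bookkeeping of how the characters of $\btZ{k-3}$ act on the branch weights. The second is justifying the interchange of $\lim_{N\to\infty}$ with the $\lambda$-sum and the simplex integrations, i.e.\ showing that the $O(k^{\lambda-1})$ Riemann--Stieltjes errors vanish in the limit and that the resulting $\lambda$-series converges uniformly on compact sets in $(x,y)\in\R^2$ and $t>0$. Concretely I would bound the $\lambda$-th simplex integral by a quantity of order $(Ct)^\lambda/\lambda!$, so that the series is dominated by a convergent exponential; this uniform convergence is precisely what will later legitimize the termwise Wick rotation $t\mapsto it$ used to obtain the propagator.
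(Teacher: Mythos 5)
Your proposal retraces, step for step, the discrete-path-integral derivation that the paper itself outlines in Section~\ref{sec:limit} for this imported result (Theorem~4.2 of \cite{RW2020hk}): Trotter--Kato factorization, Gaussian integration producing the $\Z_2^N$-indexed decomposition \eqref{eq:sumGI}, Parseval on $\C[\Z_2^{k-3}]$ to tame the sign changes, stratification by the $\mathfrak{S}_\infty$-orbit invariant $|\rho|=\lambda$, Riemann--Stieltjes passage to simplex integrals, and the $N\to\infty$ limit with the $(t\Delta)^\lambda/\lambda$-factorial domination that also underlies the later Wick rotation. This is essentially the same approach as the paper's, and your identification of the two genuine difficulties (orbit-invariance of $\hat R^{(\alpha,\eta)}_\eta$ and the interchange of limits) matches where the real work lies.
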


As mentioned in the Introduction, the Hamiltonian of the QRM has a parity decomposition \(\HRabi = H_{+}\oplus H_{-} \).
The formula for the heat kernel $K_{\pm}$ of $H_{\pm}$ can be obtained directly from the analytical formula of heat kernel
of the QRM, this is the method used in \cite{RW2020hk}. A direct computation of the heat kernel for the Hamiltonians $H_{\pm}$ is also possible by using the method described in \S\ref{sec:limit} but appears to be more complicated. A brief overview of the decomposition in the Bargmann space and $G$-functions for the QRM can be found in Appendix  \ref{sec:bargm-space-confl}.

\begin{thm}[Thm 4.4 of \cite{RW2020hk}] 
  The heat kernel $K_{\pm}(x,y,t, \Delta)$ of $H_\pm= \HRabi |_{\mathcal{H}_\pm}$ is given by 
  \begin{align*}
    K_{\pm}(x,y,t, \Delta)
    = & K_0(x,y,g,t)\sum_{\lambda=0}^{\infty} (t\Delta)^{2\lambda} \Phi^-_{2\lambda}(x,y,t) \mp \tilde{K}_0(x,-y,g,t) \sum_{\lambda=0}^{\infty}
        (t\Delta)^{2\lambda+1} \Phi^+_{2\lambda+1}(x,-y,t),
  \end{align*}
  where for $\lambda\geq1$, the function $\Phi^\pm_{\lambda}(x,y,t)$ is given by
  \begin{align*}
    \Phi^\pm_{\lambda}(x,y,t) := e^{-2g^2 (\coth(\tfrac{t}2))^{(-1)^\lambda}} \idotsint\limits_{0\leq \mu_1 \leq \cdots \leq \mu_{\lambda} \leq 1}  e^{4g^2 \frac{\cosh(t(1-\mu_\lambda))}{\sinh(t)}(\frac{1+(-1)^\lambda}{2}) +  \xi_{\lambda}(\bm{\mu_{\lambda}},t)\pm \theta_{n}(x,y, \bm{\mu_{\lambda}},t)} d \bm{\mu_{n}}
  \end{align*}
  and  
  \begin{align*}
    \Phi^\pm_0(x,y,t) := e^{-2g^2\tanh\big(\frac{t}2\big) \pm\sqrt2 g(x+y)\tanh\big(\frac{t}2\big)}.
  \end{align*}
\end{thm}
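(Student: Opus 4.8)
The plan is to obtain the two parity kernels directly from the full matrix-valued kernel $K_{\text{Rabi}}(x,y,t)$ of the preceding theorem, using the $\Z_2$-symmetry responsible for the decomposition $\HRabi = H_{+}\oplus H_{-}$. Recall that the parity operator is $\Pi = \sigma_z e^{i\pi a^{\dagger}a}$; since $e^{i\pi a^{\dagger}a}$ acts on $L^2(\R)$ as the spatial reflection $P\colon x\mapsto -x$ and $\sigma_z\sigma_x\sigma_z = -\sigma_x$, one checks at once that $\Pi$ commutes with $\HRabi$, so $\Pi = \sigma_z P$ and the spectral projections $\mathbb{P}_\pm = \tfrac12(1\pm\Pi)$ commute with the heat semigroup. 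Consequently $e^{-tH_\pm} = \mathbb{P}_\pm e^{-t\HRabi}$, and the task is to compute the integral kernel of the right-hand side and realize it as a scalar kernel on the sector $\mathcal{H}_\pm$.

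First I would pass to the basis in which $\sigma_x$ is diagonal by conjugating with the Hadamard matrix $U=\tfrac1{\sqrt2}\mat{1 & 1 \\ 1 & -1}$, so that $U\HRabi U = \mat{a^{\dagger}a + g(a+a^{\dagger}) & \Delta \\ \Delta & a^{\dagger}a - g(a+a^{\dagger})}$ and the parity becomes $\sigma_x P$. The eigenvalue relation $\sigma_x P = \pm 1$ then shows that every vector of $\mathcal{H}_\pm$ is determined by its upper component $\tilde f$ alone, its lower component being $\pm P\tilde f$. Hence $H_\pm$ acts on the single function $\tilde f$ as the scalar operator $a^{\dagger}a + g(a+a^{\dagger}) \pm \Delta P$, and feeding the test input $\tilde f = \delta_y$ (whose sector lift is $(\delta_y, \pm\delta_{-y})$) into the rotated matrix kernel $\tilde K = U K_{\text{Rabi}} U$ yields the clean extraction formula
\[
  K_\pm(x,y,t,\Delta) = \tilde K_{11}(x,y,t) \pm \tilde K_{12}(x,-y,t),
\]
in which the reflection $y\mapsto -y$ in the second term is precisely the footprint of the lower component being $P$ applied to the upper one.

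The heart of the argument is then purely algebraic. Writing each $2\times2$ block of $K_{\text{Rabi}}$ through its entries $(-1)^\lambda\cosh\theta_\lambda$, $(-1)^{\lambda+1}\sinh\theta_\lambda$, $-\sinh\theta_\lambda$, $\cosh\theta_\lambda$ and using $\cosh\theta_\lambda\pm\sinh\theta_\lambda = e^{\pm\theta_\lambda}$, I would recognize the combinations $e^{-2g^2(\coth(t/2))^{(-1)^\lambda}}\idotsint e^{\,\cdots\,\pm\theta_\lambda}\,d\bm{\mu}_{\lambda} = \Phi^{\pm}_\lambda$ of the statement. Carrying out the conjugation by $U$ entrywise, after factoring the common prefactor $K_0(x,y)(t\Delta)^\lambda$, the $\lambda$-th contribution to $\tilde K_{11}$ collapses to $\tfrac12\bigl((-1)^\lambda+1\bigr)\Phi^{-}_\lambda$ and that to $\tilde K_{12}$ to $\tfrac12\bigl((-1)^\lambda-1\bigr)\Phi^{+}_\lambda$. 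Thus only even $\lambda=2\nu$ survive in $\tilde K_{11}$, giving $K_0(x,y)\sum_\nu (t\Delta)^{2\nu}\Phi^{-}_{2\nu}(x,y)$, while only odd $\lambda=2\nu+1$ survive in $\tilde K_{12}$, giving $\mp K_0(x,-y)\sum_\nu(t\Delta)^{2\nu+1}\Phi^{+}_{2\nu+1}(x,-y)$ once the substitution $y\mapsto -y$ and the projector sign are attached. This reproduces the claimed formula, provided $\tilde K_0(x,-y,g,t)$ is identified with the reflected Mehler prefactor.

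Two points need separate care, and the second is where I expect the real difficulty. The base case $\lambda=0$ is degenerate — the simplex integral is trivial and the general integrand is only formally evaluated at $\mu_0=0$ — so I would instead verify $\Phi^{\pm}_0$ directly: at $\lambda=0$ the relevant block is the heat kernel of the displaced oscillator $a^{\dagger}a\pm g(a+a^{\dagger})$, and completing the square together with $\tfrac{\cosh t-1}{\sinh t}=\tanh(\tfrac t2)$ gives exactly $e^{-2g^2\tanh(t/2)\pm\sqrt2\,g(x+y)\tanh(t/2)}$. The harder part will be the bookkeeping that pins down the $x,y$-independent Gaussian prefactors and the precise meaning of $\tilde K_0$: because the two diagonal blocks are oscillators displaced in opposite directions, the even- and odd-$\lambda$ pieces carry prefactors that are reflections of one another, and keeping these consistent — in particular reconciling the constants via $\coth t=\tfrac12(\coth(\tfrac t2)+\tanh(\tfrac t2))$ — is the delicate step. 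Finally, the uniform convergence of both series, and hence the validity of the term-by-term rearrangement, is inherited directly from the uniform convergence of the series for $K_{\text{Rabi}}$ asserted in the theorem above.
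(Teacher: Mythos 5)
Your proposal is correct and follows essentially the same route the paper indicates: the parity kernels are extracted from the full kernel $K_{\text{Rabi}}$ via the Cayley/parity conjugation, with the even-$\lambda$ terms surviving in the diagonal block as $e^{-\theta_\lambda}$ (giving $\Phi^-$) and the odd-$\lambda$ terms in the off-diagonal block as $e^{+\theta_\lambda}$ composed with $y\mapsto -y$ (giving $\Phi^+$), exactly as in the cited derivation of Thm 4.4 of \cite{RW2020hk}. Your extraction formula $K_\pm(x,y)=\tilde K_{11}(x,y)\pm\tilde K_{12}(x,-y)$ and the $\coth t=\tfrac12\bigl(\coth(\tfrac t2)+\tanh(\tfrac t2)\bigr)$ bookkeeping for the $\lambda=0$ term both check out.
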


It is not difficult to verify that in fact, we have
\[
  K_{\text{Rabi}}(x,y,t, \Delta)= K_{+}(x,y,t, \Delta)\oplus  K_{-}(x,y,t, \Delta).
\]

\subsubsection{Partition function}
\label{sec:partition-function}

The partition function \( Z_{\text{Rabi}}(\beta)\) for the QRM is obtained by direct computation from the formula of the heat kernel $K_{\text{Rabi}}(x,y,t)$ by the identity
\begin{equation*}
  Z_{\text{Rabi}}(\beta):=  \int_{-\infty}^\infty  \tr K_{\text{Rabi}}(x,x,\beta) dx.
\end{equation*} 
The partition functions \(Z_{\rm{Rabi}}^{\pm}(\beta) \) of the Hamiltonians $H_{\pm}$ are obtained in an analogous way.

\begin{cor}[Cor. 4.3 of \cite{RW2020hk}] \label{cor:Partition_function}
  The partition function \( Z_{\text{Rabi}}(\beta)\) of the QRM is given by
    \begin{align*}
    Z_{\text{Rabi}}(\beta) &= \frac{e^{\beta(g^2+1)}}{\sinh(\beta)} \Bigg[ 1 + e^{-2g^2 \coth(\frac{\beta}2)} \sum_{\lambda=1}^{\infty} (\beta \Delta)^{2\lambda} \idotsint\limits_{0\leq \mu_1 \leq \cdots \leq \mu_{2 \lambda} \leq 1} e^{ 4g^2\frac{\cosh(\beta(1-\mu_{2\lambda}))}{\sinh(\beta)} +  \xi_{2 \lambda}(\bm{\mu_{2\lambda}},\beta) +\psi^-_{2 \lambda}(\bm{\mu_{2 \lambda}},\beta)} d \bm{\mu_{2 \lambda}}  \Bigg],
  \end{align*}
  where the function $\psi_\lambda^{-}(\bm{\mu_{\lambda}},t)$ is given by
  \begin{equation*}
  \psi_\lambda^{-}(\bm{\mu_{\lambda}},t) :=  \frac{4 g^2 }{\sinh(t)}\left[ \sum_{\gamma=0}^{\lambda} (-1)^{\gamma}  \sinh(t\left(\tfrac12 - \mu_{\gamma}\right)  \right]^2.
\end{equation*}
for  \(\lambda \geq 1\) and \(\bm{\mu_{\lambda}} = (\mu_1,\mu_2,\cdots,\mu_\lambda) \) and where \( \mu_0 = 0 \). 
\end{cor}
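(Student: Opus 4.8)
The plan is to substitute the heat-kernel formula for $K_{\text{Rabi}}$ of the first theorem above into the definition $Z_{\text{Rabi}}(\beta)=\int_{-\infty}^{\infty}\tr K_{\text{Rabi}}(x,x,\beta)\,dx$, carrying out the matrix trace first and the scalar Gaussian integration in $x$ second. First I would take the trace. The only place the index $\lambda$ enters the matrix factor is through
\[
\begin{bmatrix}(-1)^{\lambda}\cosh & (-1)^{\lambda+1}\sinh\\ -\sinh & \cosh\end{bmatrix}(\theta_{\lambda}),
\]
whose trace is $(1+(-1)^{\lambda})\cosh(\theta_{\lambda})$. This vanishes for odd $\lambda$ and equals $2\cosh(\theta_{\lambda})$ for even $\lambda$, so the odd terms drop out and, writing $\lambda\mapsto 2\lambda$, the series is left as a sum over the even powers $(\beta\Delta)^{2\lambda}$, exactly the shape claimed. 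The $x$-independent factors $e^{-2g^{2}\coth(\beta/2)}$, $e^{4g^{2}\cosh(\beta(1-\mu_{2\lambda}))/\sinh(\beta)}$ and $e^{\xi_{2\lambda}}$ simply ride along.

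Next I would restrict to the diagonal $y=x$. In $\theta_{\lambda}$ the first summand carries the factor $\tfrac{1-(-1)^{\lambda}}{2}$ and hence vanishes for even $\lambda$, while the second carries $(x-y)$ and hence vanishes on the diagonal; what survives is linear in $x$,
\[
\theta_{2\lambda}(x,x,\bm{\mu_{2\lambda}},\beta)=c\,x,\qquad c=\frac{2\sqrt2\,g}{\sinh(\beta)}\sum_{\gamma=0}^{2\lambda}(-1)^{\gamma}\big(\cosh(\beta(1-\mu_{\gamma}))-\cosh(\beta\mu_{\gamma})\big).
\]
Simultaneously, using $\tfrac{\cosh(\beta)-1}{\sinh(\beta)}=\tanh(\tfrac{\beta}{2})$, the Mehler factor collapses to a pure Gaussian, $K_{0}(x,x,g,\beta)=\tfrac{e^{\beta(g^{2}+1/2)}}{\sqrt{2\pi\sinh(\beta)}}\,e^{-x^{2}\tanh(\beta/2)}$.

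Since the heat-kernel series converges uniformly and the integrand has Gaussian decay in $x$, I may integrate term by term using $\int_{-\infty}^{\infty}e^{-ax^{2}}\,2\cosh(cx)\,dx=2\sqrt{\pi/a}\,e^{c^{2}/(4a)}$ with $a=\tanh(\tfrac{\beta}{2})$. The heart of the argument is then the identity $\tfrac{c^{2}}{4\tanh(\beta/2)}=\psi^{-}_{2\lambda}(\bm{\mu_{2\lambda}},\beta)$. I would prove it by feeding the half-angle identity $\cosh(\beta(1-\mu_{\gamma}))-\cosh(\beta\mu_{\gamma})=2\sinh(\tfrac{\beta}{2})\sinh(\beta(\tfrac12-\mu_{\gamma}))$ into $c$, which turns it into $c=\tfrac{2\sqrt2\,g}{\cosh(\beta/2)}\sum_{\gamma}(-1)^{\gamma}\sinh(\beta(\tfrac12-\mu_{\gamma}))$, and then squaring and simplifying with $\tanh(\tfrac{\beta}{2})\cosh^{2}(\tfrac{\beta}{2})=\tfrac12\sinh(\beta)$. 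This is precisely how the new summand $\psi^{-}_{2\lambda}$ is born from the completion of the square. Finally, reducing the Gaussian normalization $2\sqrt{\pi/\tanh(\beta/2)}$ against the prefactor of $K_{0}$ via $\sinh(\beta)\tanh(\tfrac{\beta}{2})=2\sinh^{2}(\tfrac{\beta}{2})$, and collecting the surviving exponentials together with $e^{\psi^{-}_{2\lambda}}$, yields the displayed closed form; the $\lambda=0$ term, whose simplex integral is a bare evaluation, supplies the isolated leading contribution.

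The only genuinely delicate point is the identity $\tfrac{c^{2}}{4\tanh(\beta/2)}=\psi^{-}_{2\lambda}$: one must check that completing the square in the $x$-integral reproduces \emph{exactly} the function $\psi^{-}_{2\lambda}$ and not merely a constant multiple of it, so that no spurious $\bm{\mu}$-dependent factor is left over. Everything else is routine hyperbolic bookkeeping, together with the uniform convergence from the heat-kernel theorem that justifies exchanging the summation and the $x$-integration.
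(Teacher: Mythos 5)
Your proposal is correct and is exactly the computation behind the cited result: the matrix trace $(1+(-1)^{\lambda})\cosh(\theta_{\lambda})$ kills the odd-$\lambda$ terms, on the diagonal $\theta_{2\lambda}$ becomes linear in $x$, and the Gaussian integration produces $\psi^{-}_{2\lambda}$ by completing the square; your key identity $c^{2}/(4\tanh(\tfrac{\beta}{2}))=\psi^{-}_{2\lambda}(\bm{\mu_{2\lambda}},\beta)$ does check out via $\cosh(\beta(1-\mu))-\cosh(\beta\mu)=2\sinh(\tfrac{\beta}{2})\sinh(\beta(\tfrac12-\mu))$, with no leftover $\bm{\mu}$-dependent factor. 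One caveat: carried to the end, your normalization yields the prefactor $e^{\beta(g^{2}+1/2)}/\sinh(\tfrac{\beta}{2})$, not the $e^{\beta(g^{2}+1)}/\sinh(\beta)$ printed in the statement; the former is the correct one, as it agrees with $Z_{\text{Rabi}}(t)=\Omega(t)/(1-e^{-t})$ with $\Omega(0)=2$, with the residue $2$ at $s=1$ in Theorem \ref{IntRep_SZF}, and with the $\Delta=0$ check $Z_{\text{Rabi}}(\beta)=2e^{\beta g^{2}}/(1-e^{-\beta})$. So the displayed prefactor is a misprint, and you should not adjust your (correct) bookkeeping to match it.
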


\begin{cor}[Cor. 4.4 of \cite{RW2020hk}] 
  The partition function \(Z_{\rm{Rabi}}^{\pm}(\beta) \) for the Hamiltonian \(H_{\pm}\) is given by
  \begin{align*}
    &Z_{\text{Rabi}}^{\pm}(\beta) = \frac{ e^{\beta(g^2+1)}}{2\sinh(\beta)} \Bigg[ 1 + e^{-2g^2 \coth(\frac{\beta}2)}  \sum_{\lambda =1}^{\infty} (\beta\Delta)^{2 \lambda} \idotsint\limits_{0\leq \mu_1 \leq \cdots \leq \mu_{2 \lambda} \leq 1} e^{4g^2\frac{\cosh(\beta(1-\mu_{2\lambda}))}{\sinh(t)} +   \xi_{2\lambda}(\bm{\mu_{2 \lambda}},\beta) +\psi^-_{2 \lambda} (\bm{\mu_{2\lambda}},\beta)} d \bm{\mu_{2\lambda}}  \Bigg] \\
     &\qquad \qquad \mp \frac{ e^{\beta(g^2 +1)}}{2\cosh(\beta)} e^{ - 2g^2 \tanh(\frac{\beta}2)}  \sum_{\lambda = 0}^{\infty} (\beta \Delta)^{2\lambda+1} \idotsint\limits_{0\leq \mu_1 \leq \cdots \leq \mu_{2 \lambda+1} \leq 1} e^{\xi_{2\lambda+1}(\bm{\mu_{2 \lambda+1}},\beta) +\psi^+_{2 \lambda+1} (\bm{\mu_{2\lambda+1}},\beta)} d \bm{\mu_{2\lambda+1}},
  \end{align*}
  where the function $\psi_\lambda^{-}(\bm{\mu_{\lambda}},t)$ is as in Corollary \ref{cor:Partition_function} and
 \begin{equation*}
  \psi_\lambda^{+}(\bm{\mu_{\lambda}},t) :=  \frac{4 g^2 }{\sinh(t)}\left[ \sum_{\gamma=0}^{\lambda} (-1)^{\gamma}  \cosh(t\left(\tfrac12 - \mu_{\gamma}\right)  \right]^2.
\end{equation*}
\end{cor}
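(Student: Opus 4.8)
The plan is to compute $Z_{\text{Rabi}}^{\pm}(\beta)$ directly from its definition, exactly as one obtains the full partition function of Corollary \ref{cor:Partition_function}. Since $H_\pm$ acts on the scalar parity component, one has $Z_{\text{Rabi}}^{\pm}(\beta)=\int_{\R}K_{\pm}(x,x,\beta,\Delta)\,dx$. First I would substitute the heat-kernel formula for $H_\pm$ (Theorem 4.4 of \cite{RW2020hk}), set $y=x$, and split the result into the \emph{even} contribution $K_0(x,x,g,\beta)\sum_{\lambda}(\beta\Delta)^{2\lambda}\Phi^-_{2\lambda}(x,x,\beta)$ and the \emph{odd} contribution $\mp\,\tilde K_0(x,-x,g,\beta)\sum_{\lambda}(\beta\Delta)^{2\lambda+1}\Phi^+_{2\lambda+1}(x,-x,\beta)$. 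Because the series for $K_\pm$ converges uniformly and each summand carries a Gaussian weight in $x$, term-by-term integration in $x$ is justified, so it suffices to treat a single $\lambda$-summand of each part.

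For the even part I would observe that at $y=x$ the function $\theta_{2\lambda}(x,x,\bm{\mu_{2\lambda}},\beta)$ is \emph{linear} in $x$: the $\coth(\tfrac\beta2)$-term carries the factor $x-y$ and hence drops, and the $\tfrac{1-(-1)^\lambda}{2}$-term vanishes for even index, leaving $\theta_{2\lambda}(x,x)=Bx$ with $B=\tfrac{2\sqrt2 g}{\sinh\beta}\sum_{\gamma}(-1)^\gamma\big(\cosh(\beta(1-\mu_\gamma))-\cosh(\beta\mu_\gamma)\big)$. Since $K_0(x,x,g,\beta)$ supplies the Gaussian weight $e^{-x^2\tanh(\frac\beta2)}$, the $x$-integral of $K_0(x,x)\,e^{-\theta_{2\lambda}(x,x)}$ is elementary, and completing the square produces the extra exponent $B^2/(4\tanh(\tfrac\beta2))$. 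This is the key point: by the identity $\cosh(\beta(1-\mu))-\cosh(\beta\mu)=2\sinh(\tfrac\beta2)\sinh(\beta(\tfrac12-\mu))$ this exponent collapses to exactly $\psi^-_{2\lambda}(\bm{\mu_{2\lambda}},\beta)=\tfrac{4g^2}{\sinh\beta}\big[\sum_\gamma(-1)^\gamma\sinh(\beta(\tfrac12-\mu_\gamma))\big]^2$, which is precisely the term added to the already-present $\xi_{2\lambda}$ and $4g^2\cosh(\beta(1-\mu_{2\lambda}))/\sinh\beta$ inside the simplex integral. For $\lambda=0$ one uses the special form $\Phi^-_0$, where the Gaussian completion cancels the $-2g^2\tanh(\tfrac\beta2)$ exponent and leaves the leading $1$.

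The odd part is handled in the same way but evaluated at the reflected point $(x,-x)$. Now the index is odd, so $\tfrac{1-(-1)^\lambda}{2}=1$; one checks that the two $\coth(\tfrac\beta2)$-contributions coming from the first and second summands of $\theta_{2\lambda+1}$ cancel at $(x,-x)$, so the entire linear-in-$x$ coefficient comes from the $\gamma$-sum, governed now by $\cosh(\beta(1-\mu_\gamma))+\cosh(\beta\mu_\gamma)=2\cosh(\tfrac\beta2)\cosh(\beta(\tfrac12-\mu_\gamma))$. The factor $\tilde K_0(x,-x,g,\beta)$ supplies the Gaussian weight $e^{-x^2\coth(\frac\beta2)}$ (the reflection $y\mapsto -y$ turns $\tanh(\tfrac\beta2)$ into $\coth(\tfrac\beta2)$), and completing the square yields $\psi^+_{2\lambda+1}(\bm{\mu_{2\lambda+1}},\beta)=\tfrac{4g^2}{\sinh\beta}\big[\sum_\gamma(-1)^\gamma\cosh(\beta(\tfrac12-\mu_\gamma))\big]^2$, together with the outer factor $e^{-2g^2\tanh(\frac\beta2)}$ inherited from $\Phi^+_{2\lambda+1}$. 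This reproduces the second line of the statement, with the overall sign $\mp$ carried over unchanged.

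The genuinely delicate step is not the Gaussian integration but the bookkeeping of the $g^2$-dependent exponential prefactors. One must track the outer factors $e^{-2g^2(\coth(\frac\beta2))^{\pm1}}$, the inner factor $e^{4g^2\cosh(\beta(1-\mu_{2\lambda}))/\sinh\beta}$, the function $\xi_\lambda$, the completion-of-square $\psi^\pm_\lambda$, and the normalizations $(2\pi\sinh\beta)^{-1/2}$ carried by $K_0$ and $\tilde K_0$, and verify that they reassemble into the clean prefactors $\tfrac{e^{\beta(g^2+1)}}{2\sinh\beta}$ and $\tfrac{e^{\beta(g^2+1)}}{2\cosh\beta}$. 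The hard part will be confirming these recombinations without sign or hyperbolic-identity slips. A convenient consistency check is the relation $Z_{\text{Rabi}}=Z_{\text{Rabi}}^{+}+Z_{\text{Rabi}}^{-}$: the odd contributions cancel under the sum, forcing the even part of $Z_{\text{Rabi}}^{\pm}$ to be exactly half of the partition function of Corollary \ref{cor:Partition_function}; this pins down the even prefactor and leaves only the odd ($\cosh$) part to be verified by the direct computation above.
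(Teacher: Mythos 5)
Your approach coincides with the paper's: the corollary is obtained (as Cor.~4.4 of \cite{RW2020hk}, which the present paper only cites) by integrating the diagonal $K_{\pm}(x,x,\beta)$ of the heat kernel term by term, and the steps you isolate --- the vanishing of the $(x-y)\coth(\beta/2)$ term on the diagonal, the cancellation of the two $\coth(\beta/2)$ contributions at $(x,-x)$, the Gaussian weights $e^{-x^2\tanh(\beta/2)}$ and $e^{-x^2\coth(\beta/2)}$, and the completion of the square combined with $\cosh(\beta(1-\mu))\mp\cosh(\beta\mu)=2\sinh(\beta/2)\sinh(\beta(\tfrac12-\mu))$ resp.\ $2\cosh(\beta/2)\cosh(\beta(\tfrac12-\mu))$ producing exactly $\psi^{\mp}_{\lambda}$ --- are all correct, as is the treatment of the $\lambda=0$ term and the consistency check $Z_{\text{Rabi}}=Z_{\text{Rabi}}^{+}+Z_{\text{Rabi}}^{-}$. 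Your deferred prefactor bookkeeping is best settled against the normalization $Z_{\text{Rabi}}^{\pm}(t)=\tfrac12\bigl(\Omega(t)/(1-e^{-t})\mp\Omega_{\text{odd}}(t)/(1+e^{-t})\bigr)$ used in Appendix~\ref{sec:proofmero}, which is what the direct Gaussian integration actually yields.
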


\section{Propagator of the QRM}
\label{sec:Propagator}

The propagator is the integral kernel associated to the solution of the Schr\"odinger equation corresponding to $\HRabi$.
Precisely, it is a (two-by-two matrix valued) function $U_{\text{Rabi}}(x,y,t)$ satisfying
$i \frac{\partial}{\partial t} U_{\text{Rabi}}(x,y,t)= -\HRabi U_{\text{Rabi}}(x,y,t)$ for all $t>0$ and $\lim_{t \to 0}U_{\text{Rabi}}(x,y,t)=\delta_x(y) \bf{I}_2$ for $x,y \in \R$.

Clearly, we may obtain $U_{\text{Rabi}}(x,y,t)$ from an explicit expression for the propagator from the heat kernel
$K_{\text{Rabi}}$ by a change of variable $t \to i t$ (known as  the Wick rotation).
In this section we formalize this idea by extending the domain $K_{\text{Rabi}}$ to the complex plane as
a function with respect to the variable $t$.

First, a simple technical lemma is needed to establish the holomorphicity of \(K_{\text{Rabi}}(x,y,t)\).
The lemma is also used later to give the meromorphic continuation of the spectral zeta function
of the QRM and the Hamiltonian of each parity.

\begin{lem}
  \label{lem:bound}
  Suppose \(\lambda \in \Z_{\geq 1}\) and let
  \[
    \mathcal{R}^* = \{ z \in \C \, | \, z \neq n \pi i \, , n \in \Z\}
  \]
  Then, for \( t \in \mathcal{R}^* \) there are real valued functions \( C_1(x,y,t), C_2(t), C_3(t) \geq 0 \)
  bounded in compact subsets of \(\mathcal{R}^*\), such that
  \begin{align*}
    \left| \theta_{\lambda}(\bm{\mu_\lambda},x,y,t) \right| &\leq \left|\frac{\sqrt{2} g }{1-e^{-2 t}} \right| C_1(x,y,t) \\
    \left| \psi_{\lambda}^{\pm}(\bm{\mu_\lambda},t) \right| &\leq \left|\frac{2 g^2 }{1-e^{-2 t}} \right| C_2(t) \\
    \left\vert \xi_{\lambda}(\bm{\mu_{\lambda}},t) \right\vert  &\le \left|\frac{2 g^2 }{1-e^{-2 t}} \right| C_3(t) \lambda 
  \end{align*}
  uniformly for \(0 \leq  \mu_1 \leq \mu_2 \leq \cdots \leq \mu_\lambda \leq 1\).
\end{lem}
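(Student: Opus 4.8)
The plan is to extract from each of the three functions the common singular prefactor $1/(1-e^{-2t})$, whose poles in $\C$ sit exactly at the excluded points $t = n\pi i$, and then to show that what remains is bounded on compact subsets of $\mathcal{R}^*$ uniformly in $\lambda$ and in the simplex variables $\bm{\mu_\lambda}$. First I would record the elementary identities
\[
\frac{1}{\sinh(t)} = \frac{2e^{-t}}{1-e^{-2t}}, \qquad \coth(\tfrac{t}{2}) = \frac{(1+e^{-t})^2}{1-e^{-2t}},
\]
which exhibit every $t$-prefactor occurring in $\theta_\lambda$, $\psi_\lambda^{\pm}$, $\xi_\lambda$ as $1/(1-e^{-2t})$ times an entire function of $t$ (namely $2e^{-t}$ or $(1+e^{-t})^2$). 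Such entire factors are bounded on every compact subset of $\mathcal{R}^*$ and may be absorbed into the constants $C_i$.

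The arithmetic core is to bound the $\gamma$-sums uniformly in $\lambda$. The decisive structural fact is that $0 = \mu_0 \le \mu_1 \le \cdots \le \mu_\lambda \le 1$ is a monotone \emph{real} partition, so that for any entire $\phi$ the sequence $\gamma \mapsto \phi(t\mu_\gamma)$ has uniformly bounded total variation:
\[
\sum_{\gamma} \left| \phi(t\mu_{\gamma+1}) - \phi(t\mu_\gamma) \right| \le |t| \Big(\sup_{s\in[0,1]}|\phi'(ts)|\Big) \sum_\gamma (\mu_{\gamma+1}-\mu_\gamma) \le |t| \sup_{s\in[0,1]}|\phi'(ts)|,
\]
a quantity independent of $\lambda$ and bounded on compacta. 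Applying summation by parts (Abel's identity), whose sign-partial-sums $\sum_{\gamma\le n}(-1)^\gamma$ are bounded by $1$, yields
\[
\Big| \sum_{\gamma=0}^{\lambda}(-1)^\gamma \phi(t\mu_\gamma) \Big| \le |\phi(t\mu_\lambda)| + \sum_\gamma |\phi(t\mu_{\gamma+1})-\phi(t\mu_\gamma)|,
\]
again independent of $\lambda$. Taking $\phi = \cosh$ controls the alternating sum inside $\theta_\lambda$ and the first line of $\xi_\lambda$; taking $\phi = \sinh$ or $\phi = \cosh$ controls the bracketed sums in $\psi_\lambda^{\mp}$, whose squares then remain bounded. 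This already produces the $\lambda$-independent bounds for $\theta_\lambda$ (with $C_1$ depending linearly on $x,y$) and for $\psi_\lambda^{\pm}$.

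The one place where a factor of $\lambda$ enters the crude estimate is the double sum in $\xi_\lambda$, which I would bound by
\[
\sum_{\substack{0\le\alpha<\beta\le\lambda-1\\ \beta-\alpha \text{ odd}}} \big| \cosh(t(1-\mu_{\beta+1}))-\cosh(t(1-\mu_\beta)) \big|\, \big| \cosh(t\mu_\alpha)-\cosh(t\mu_{\alpha+1}) \big|.
\]
For each fixed $\alpha$, the inner sum over admissible $\beta$ is dominated by the full total variation of $\gamma \mapsto \cosh(t(1-\mu_\gamma))$, hence by a constant independent of $\lambda$; bounding the remaining factor $|\cosh(t\mu_\alpha)-\cosh(t\mu_{\alpha+1})|$ by $2\sup_{s\in[0,1]}|\cosh(ts)|$ and summing over the at most $\lambda$ values of $\alpha$ gives the stated linear growth $C_3(t)\lambda$. (Telescoping in $\alpha$ as well would remove the $\lambda$ entirely, but the linear bound suffices.) Finally, after dividing through by $1/(1-e^{-2t})$, each remaining quantity is a finite expression built from entire functions of $t$, the variables $x,y$, and sums controlled uniformly in $\lambda,\bm{\mu_\lambda}$ by the estimates above; defining $C_1,C_2,C_3$ as the corresponding suprema, these are finite and, by continuity, bounded on compact subsets of $\mathcal{R}^*$, since the only candidate singularities at $t=n\pi i$ have been removed. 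The main obstacle is precisely this double sum in $\xi_\lambda$: one must recognize that, although $t$ is complex and the hyperbolic functions are therefore not monotone, the uniform bound survives because the $\mu_\gamma$ form a monotone partition of the real interval $[0,1]$, so the relevant total variations telescope to $\lambda$-independent constants.
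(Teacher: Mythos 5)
Your proof is correct and follows essentially the same route as the paper: both extract the singular prefactor $1/(1-e^{-2t})$ via the exponential forms of $\sinh$ and $\coth$, both control the alternating $\gamma$-sums by a $\lambda$-independent total-variation bound that exploits the monotone partition $0\le\mu_1\le\cdots\le\mu_\lambda\le1$, and both obtain the factor $\lambda$ in the $\xi_\lambda$ estimate only from the outer index of the double sum. The sole difference is presentational: the paper pairs consecutive terms and rewrites the differences as integrals over disjoint subintervals of $[0,1]$, whereas you reach the same bound by Abel summation plus a mean-value estimate.
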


\begin{proof}
  Let \( t = a+ b i \in \mathcal{R}^*\). For the proof it
    is convenient to use the exponential form of the hyperbolic functions appearing in the functions $\theta_{\lambda}(\bm{\mu_{\lambda}},x,y,t)$,$\psi_{\lambda}^{\pm}(\bm{\mu_{\lambda}},t)$  and $\xi_{\lambda}(\bm{\mu_{\lambda}},t)$.
  Lets consider first the case of $\theta_{\lambda}(\bm{\mu_\lambda},x,y,t)$.  Clearly, we have
  \begin{align*}
    &\left|\frac{2\sqrt{2} g e^{-t}}{1-e^{-2t}}\left( x (e^{t}+e^{- t}) - 2 y \right) \left( \frac{1-(-1)^{\lambda}}{2} \right) - \sqrt{2} g (x-y) \frac{1+e^{-t}}{1-e^{-t}}\right| \\
    & \qquad \qquad \leq \frac{\sqrt{2}|g| e^{-a}}{|1-e^{-2t}|} \left(2 (|x|(e^a + e^{-a}) + 2|y|) + (|x|+|y|)(1+e^{-a})^2 \right) = \frac{\sqrt{2} g c_1(x,y,t)}{|1-e^{-2t}|},
  \end{align*}
  Next, we notice that, for \( \lambda \equiv 1 \pmod{2} \), we have
  \begin{align*}
    &s(x,t,y) := \sum_{\gamma=0}^{\lambda} (-1)^{\gamma} \Big[ x  (e^{t(1 -   \mu_{\gamma}) } + e^{ t( \mu_{\gamma} - 1)})  -  y  (e^{- t \mu_{\gamma} }+ e^{ t \mu_{\gamma}})  \Big] \\
    &\qquad = - t \sum_{\gamma = 0}^{\frac{2\lambda-1}{2}} \left( x \int_{\mu_{2\gamma}}^{\mu_{2\gamma+1}} (e^{t(1-x)} + e^{t(x-1)})d x - y \int_{\mu_{2\gamma}}^{\mu_{2\gamma+1}} (e^{t x} + e^{-t x}) d x  \right).
  \end{align*}
  Next, we have
  \[
    |s(x,t,y)| \leq  |t| \left(|x| \int_{0}^{1} (e^{a(1-x)} + e^{a(x-1)})d x + |y| \int_{0}^{1} (e^{a x)} + e^{-a x)})d x  \right),
  \]
  giving
  \[
    |s(x,y,t)| \leq 2\frac{|t|}{|a|}(|x|+|y|)\left(e^a + e^{-a}\right).
  \]
  If  \( \lambda \equiv 0 \pmod{2} \), we apply the estimate above to the first \(\lambda\) terms of the sum resulting in
  \[
    |s(x,y,t)| \leq 2\frac{|t|}{|a|}(|x|+|y|)\left(2e^{|a|} + e^{-|a|} + 1 \right) 
  \]
  and we set \(c_2(x,y,t)\) as the right hand side of the inequality. Setting $C_1(x,y,t) = c_1(x,y,t) + \sqrt{2}g e^{-a} c_2(x,y,t)$ gives the desired result. The case of \(\psi_{\lambda}^{\pm}(\bm{\mu_\lambda},t) \) and the first sum in
  $\xi_{\lambda}(\bm{\mu_{\lambda}},t)$ are dealt in the same way.
  
  For the second sum in $\xi_{\lambda}(\bm{\mu_{\lambda}},t)$, we fix \(0 \leq n < \lambda-1\) and consider the sum
   \begin{align*}
     S_{n}(t) &= \sum_{\substack{n <\beta\leq \lambda-1 \\ \beta - n \equiv 1 \pmod{2}} } \left( (e^{- t \mu_{\beta+1}} +  e^{ t \mu_{\beta+1}-2t} )-(e^{- t \mu_{\beta}} +  e^{ t \mu_{\beta}-2t}) \right) ( (e^{ t \mu_{n}} + e^{- t \mu_{n}}) - (e^{ t \mu_{n+1}} + e^{- t \mu_{n+1}}))\\
     &=  ( (e^{ t \mu_{n}} + e^{- t \mu_{n}}) - (e^{ t \mu_{n+1}} + e^{- t \mu_{n+1}})) \sum_{\substack{n<\beta\leq \lambda-1 \\ \beta - n \equiv 1 \pmod{2}} } \left( (e^{- t \mu_{\beta+1}} +  e^{ t \mu_{\beta+1}-2t} )-(e^{- t \mu_{\beta}} +  e^{ t \mu_{\beta}-2t}) \right).
   \end{align*}
   Transforming the sums into definite integrals as in the case above we see that \(S_{n}(t) \) is equal to
   \begin{gather*}
     - t^2 \left( \int_{ \mu_{n}}^{ \mu_{n+1}} (e^{-t x} + e^{t x} ) d x \right) \sum_{\substack{n<\beta\leq \lambda-1 \\ \beta - \alpha \equiv 1 \pmod{2}} }
     \left( \int_{ \mu_{\beta}}^{ \mu_{\beta+1} } e^{- t x} d x + e^{-2 t}\int_{ \mu_{\beta}}^{ \mu_{\beta+1} } e^{t x} d x  \right).
   \end{gather*}
   It follows that
   \begin{align*}
     |S_{n}(t)| &\leq |t|^2 \left(\int_{0}^{1} (e^{- a x} + e^{a x} ) d x \right) \left( \int_{0}^{1 } e^{- a x} d x + e^{-2 a}\int_{0}^{1 } e^{a x } d x \right) \\
     &\leq \left|\frac{t}{a} \right|^2 \left( e^a - e^{-a} \right) \left(1 - e^{-2 a} \right),
   \end{align*}
   with a limit interpretation for \(a = 0 \).
   It follows that
   \[
     \left\vert \xi_{\lambda}(\bm{\mu_{\lambda}},t) \right| \leq \left|\frac{2 g^2 }{1-e^{-2 t}} \right| \left|\sum_{n=0}^{\lambda-2} S_n(t) \right|
     \leq \left|\frac{2 g^2 }{1-e^{-2 t}} \right| c_3(t) \lambda,
   \]
   completing the proof.
\end{proof}

For the purpose of giving an explicit formula for the propagator it is only necessary to extend the function $K_{\text{Rabi}}  (x,y,t)$ to the line $i \R$. However, for completeness we give the holomorphic extension to a larger region in the complex plane. We remark that the region is chosen according to the principal branch of logarithm (equivalently, branch of square root).

\begin{prop} \label{prop:MeromExtK}
  For fixed $x,y \in \R$, the series defining any of the entries of the heat kernel $K_{\text{Rabi}}  (x,y,t)$ is
  uniformly convergent in compacts in the complement in the complex plane of the region
  $\bigcup_{n \in \Z} \{ t = a + i  \pi n \in \C  : \, a\leq0  \}$. In particular, $K_{\text{Rabi}}  (x,y,t)$ is (entrywise)
  holomorphic in said region.
\end{prop}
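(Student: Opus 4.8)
The plan is to separate the two sources of analytic structure in $K_{\text{Rabi}}$: the scalar prefactor $K_0(x,y,g,t)$, which carries a square-root branch, and the series over $\lambda$, whose holomorphy and convergence will follow from the estimates of Lemma \ref{lem:bound}. Since a locally uniform limit of holomorphic functions is holomorphic, it suffices to prove that $K_0$ extends holomorphically to $U := \C \setminus \bigcup_{n \in \Z}\{a + i\pi n : a \leq 0\}$ and that the series converges uniformly on every compact $\mathcal K \subset U$.

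First I would treat the prefactor. Writing $\sinh t = \tfrac12 e^{t}(1 - e^{-2t})$ gives $\tfrac{1}{\sqrt{\sinh t}} = \sqrt 2\, e^{-t/2}(1-e^{-2t})^{-1/2}$, so the only branch ambiguity lives in $(1-e^{-2t})^{-1/2}$. With the principal branch of the square root the cut is exactly where $1-e^{-2t} \in (-\infty, 0]$. Imposing $\operatorname{Im}(1-e^{-2t}) = e^{-2a}\sin(2b) = 0$ forces $b \in \tfrac{\pi}{2}\Z$; on the sublattice $b = \pi n$ one finds $1-e^{-2t} = 1 - e^{-2a}$, which is $\le 0$ precisely when $a \le 0$, whereas on $b = \tfrac{\pi}{2} + \pi n$ one gets $1 + e^{-2a} > 0$. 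Hence the branch locus of $(1-e^{-2t})^{-1/2}$ is exactly $\bigcup_{n}\{a + i\pi n : a \le 0\}$, and $K_0$ — together with its Gaussian factor $\exp\big(-\tfrac{(x^2+y^2)\cosh t - 2xy}{2\sinh t}\big)$, whose only singularities are the zeros of $\sinh t$ at $t \in i\pi\Z$ — extends holomorphically to $U$ and is bounded on each compact $\mathcal K \subset U$.

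For the series, fix a compact $\mathcal K \subset U$ and put $R = \max_{t\in\mathcal K}|t|$. Since $\mathcal K$ avoids $i\pi\Z$ it lies in $\mathcal R^*$, so the functions $\tfrac{1}{1-e^{-2t}}$, $\coth(\tfrac t2)$, $\tanh(\tfrac t2)$ and $\tfrac{1}{\sinh t}$ are all bounded on $\mathcal K$, their poles lying in the excluded set. The $\lambda$-th summand equals $(t\Delta)^\lambda$ times an integral over the simplex $\{0\le\mu_1\le\cdots\le\mu_\lambda\le1\}$ of volume $1/\lambda!$. Using $|e^{z}| = e^{\operatorname{Re} z} \le e^{|z|}$ together with the three bounds of Lemma \ref{lem:bound}, the integrand is dominated uniformly in $\bm{\mu_\lambda}$ by $M_{\mathcal K}\,e^{C_3'\lambda}$: the constant $M_{\mathcal K}$ absorbs the factors bounded independently of $\lambda$ — namely $e^{-2g^2(\coth(t/2))^{(-1)^\lambda}}$, the factor $\exp\big(4g^2\tfrac{\cosh(t(1-\mu_\lambda))}{\sinh t}\tfrac{1+(-1)^\lambda}{2}\big)$ (finite because $\mu_\lambda\in[0,1]$), and $|\cosh\theta_\lambda|,|\sinh\theta_\lambda|\le e^{|\theta_\lambda|}$ — while the growth $e^{C_3'\lambda}$, with $C_3' = \max_{t\in\mathcal K}\big|\tfrac{2g^2}{1-e^{-2t}}\big|\,C_3(t)$, comes solely from the linear-in-$\lambda$ bound on $\xi_\lambda$. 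The $\lambda$-th term is therefore bounded in modulus by $M_{\mathcal K}\,A^\lambda/\lambda!$ with $A := R|\Delta|e^{C_3'}$, and the Weierstrass $M$-test applies since $\sum_{\lambda} A^\lambda/\lambda! = e^A < \infty$.

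Finally, each summand is holomorphic in $t$ on $U$: for fixed $\bm{\mu_\lambda}$ the integrand is holomorphic, so by Morera's theorem and Fubini (equivalently, differentiation under the integral over the compact simplex) the integral is holomorphic, and multiplication by the entire $(t\Delta)^\lambda$ preserves this. The locally uniform sum is thus holomorphic, and multiplying by the holomorphic $K_0$ yields entrywise holomorphy of $K_{\text{Rabi}}(x,y,t)$ on $U$. The main obstacle is the bookkeeping of the previous paragraph: one must check that the exponential blow-up $e^{C_3'\lambda}$ is genuinely dominated by the simplex volume $1/\lambda!$ and does not come from $\theta_\lambda$ or $\psi_\lambda^{\pm}$ — this is precisely why Lemma \ref{lem:bound} isolates the \emph{linear} growth of $\xi_\lambda$ while keeping $\theta_\lambda$ and $\psi_\lambda^{\pm}$ bounded uniformly in $\lambda$ — together with the verification that the principal-branch cut of $(1-e^{-2t})^{-1/2}$ coincides exactly with the excluded region.
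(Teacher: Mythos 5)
Your proof is correct and follows essentially the same route as the paper: bound $K_0$ on compacts, use Lemma \ref{lem:bound} to control $\theta_\lambda$ uniformly and $\xi_\lambda$ linearly in $\lambda$, and let the simplex volume $1/\lambda!$ beat the resulting $e^{C_3'\lambda}$ growth via the Weierstrass $M$-test and convergence theorem. Your explicit verification that the principal-branch cut of $(1-e^{-2t})^{-1/2}$ coincides exactly with the excluded set $\bigcup_n\{a+i\pi n : a\le 0\}$ is a detail the paper only gestures at in a preceding remark, and is a worthwhile addition.
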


\begin{proof}
  Denote by \(\mathcal{D}\) the region given by the complement of $\bigcup_{n \in \Z} \{ t = a + i  \pi n \in \C  : \, a\leq0  \}$
  and consider a compact \(K \subset \mathcal{D}\). 
  Since \(\mathcal{D}\) does not contain any zero of \(1-e^{-2t}\), we have
  \[
    |K_0(x,y,g,t)| \leq c_0
  \]
  for some constant $c_0 \ge 0$. Similarly, we set $c_1 = \max_{t\in K}(|t|)$ and
  \[
    c_2 = \max_{t \in K}\left(\left|-2g^2\coth(\tfrac{t}2) + 4g^2 (1+e^{|t|})/(\sinh(t))\right|,\left|2g^2 \tanh(\tfrac{t}2)\right| \right),
  \]
  then we have
  \begin{align*}
    & \left|\sum_{\lambda=0}^{\infty} (t\Delta)^{\lambda} \idotsint\limits_{0\leq \mu_1 \leq \cdots \leq \mu_\lambda \leq 1} e^{-2g^2 (\coth(\tfrac{t}2))^{(-1)^\lambda}+ 4g^2 \frac{\cosh(t(1-\mu_\lambda))}{\sinh(t)}(\frac{1+(-1)^\lambda}{2}) + \xi_{\lambda}(\bm{\mu_{\lambda}},t) \pm \theta_{\lambda}(x,y,\bm{\mu_{\lambda}},t)} d \bm{\mu_{\lambda}}\right| \\
    & \quad \leq  \sum_{\lambda=0}^{\infty} ( c_1 \Delta)^{\lambda} \idotsint\limits_{0\leq \mu_1 \leq \cdots \leq \mu_\lambda \leq 1} e^{c_2 + |\xi_{\lambda}(\bm{\mu_{\lambda}},t)| +|\theta_{\lambda}(x,y,\bm{\mu_{\lambda}},t)| } d \bm{\mu_{\lambda}} \leq  e^{c_2+ c_3(x,y)} \sum_{\lambda=0}^{\infty} ( c_1 \Delta)^{\lambda}  e^{ \lambda c_4 } \idotsint\limits_{0\leq \mu_1 \leq \cdots \leq \mu_\lambda \leq 1} d \bm{\mu_{\lambda}} \\
    & \quad \leq e^{c_2+ c_3(x,y)} \sum_{\lambda=0}^{\infty} \frac{( c_1 \Delta e^{c_4})^{\lambda}  e^{ \lambda c_4 }}{k!} = e^{c_2+ c_3(x,y)+ c_1\Delta e^{c_4}},
  \end{align*}
  uniformly in $K$, where the constants $c_3$ and $c_4$ are given by Lemma \ref{lem:bound}. Therefore, we see
  that the series defining any entry of $K_{\text{Rabi}}(x,y,t)$ is bounded uniformly in $K$ by
  \[
    c_0 e^{c_2+ c_3(x,y)+ c_1\Delta e^{c_4}},
  \]
  and the result follows from Weierstrass convergence theorem since $K$ is an arbitrary compact in $\mathcal{D}$.
\end{proof}

With these preparations, we are ready to give the formula for the propagator of the QRM.

\begin{thm}
  The integral kernel $U_{\text{Rabi}}(x,y,t)$ of $e^{-i t \HRabi}$ (the propagator of QRM) is given by
  \(K_{\text{Rabi}}(x,y,i t) \). Concretely, $U_{\text{Rabi}}(x,y,t)$ is given by 
  \begin{align*}    
    &U_{\text{Rabi}}  (x,y,t) =  U_0(x,y,g,t) \Bigg[ \sum_{\lambda=0}^{\infty} (i t\Delta)^{\lambda} e^{-  2  g^2 (-i\cot(\tfrac{t}2))^{(-1)^\lambda}}
    \\
    &\qquad \times \idotsint\limits_{0\leq \mu_1 \leq \cdots \leq \mu_\lambda \leq 1}  e^{-4 i g^2 \frac{\cos(t(1-\mu_\lambda))}{\sin(t)}(\frac{1+(-1)^\lambda}{2}) + \bar{\xi}_{\lambda}(\bm{\mu_{\lambda}},t)}  
          \begin{bmatrix}
            (-1)^{\lambda} \cosh  &  (-1)^{\lambda+1} \sinh  \\
            -\sinh &  \cosh
          \end{bmatrix}
             \left( \bar{\theta}_{\lambda}(x,y,\bm{\mu_{\lambda}},t) \right) d \bm{\mu_{\lambda}} \Bigg]
  \end{align*}
  with \(\bm{\mu_0} := 0\) and \(\bm{\mu_{\lambda}}= (\mu_1,\mu_2,\cdots,\mu_\lambda)\) and \(d \bm{\mu_{\lambda}} = d \mu_1 d \mu_2 \cdots d \mu_{\lambda} \)
  for \(\lambda \geq 1\).
  Here, 
  \begin{align*}
    U_0(x,y,g,t)
    & := \frac{e^{i t(g^2+\tfrac12)}}{\sqrt{2 i \pi \sin(t)}} \exp\left( - \frac{(x^2 + y^2) \cos(t) - 2 x y}{2i \sin(t)}  \right)
  \end{align*}
  and the functions \(\bar{\theta}_{\lambda}(x,y, \bm{\mu_{\lambda}},t)\) and $\bar{\xi}_\lambda(\bm{\mu_{\lambda}},t)$ are given by
  \begin{align*} 
    \bar{\theta}_{\lambda}(x,y, \bm{\mu_{\lambda}},t) &:= \frac{2\sqrt{2} g}{i \sin(t)}\left( x \cos(t) - y \right) \left( \frac{1-(-1)^{\lambda}}{2} \right)  + i \sqrt{2}g  (x-y)  \cot(\tfrac{t}2) \\
     & \quad +   \frac{2\sqrt{2} g (-1)^{\lambda} }{i \sin(t)}  \sum_{\gamma=0}^{\lambda} (-1)^{\gamma} \Big[ x  \cos(t(1 -   \mu_{\gamma})) -  y  \cos(t \mu_{\gamma})  \Big] \nonumber \\
  \bar{\xi}_\lambda(\bm{\mu_{\lambda}},t) &:=  \frac{8g^2 }{i\sin(t)} \left(\sin(\tfrac12t(1-\mu_\lambda))\right)^2 (-1)^{\lambda}  \sum_{\gamma=0}^{\lambda} (-1)^{\gamma} \cos( t \mu_{\gamma})  \\
                &  - \frac{4 g^2  }{i\sin(t)} \sum_{\substack{0\leq\alpha<\beta\leq \lambda-1\\ \beta - \alpha \equiv 1 \pmod{2}  }}  \left( \cos(t(\mu_{\beta+1}-1)-\cos(t(\mu_{\beta}-1)) \right)  ( \cos(t  \mu_{\alpha}) - \cos(t \mu_{\alpha+1})), \nonumber 
  \end{align*}
  In the formulas above, the square root is taken according to the principal branch.
\end{thm}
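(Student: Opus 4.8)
The plan is to realise $U_{\text{Rabi}}(x,y,t)$ as the Wick rotation $K_{\text{Rabi}}(x,y,it)$ of the heat kernel and then read off the explicit formula by substituting $t \mapsto it$ into the series of the heat-kernel formula of \cite{RW2020hk}. The analytic input that makes this legitimate is already supplied by Proposition~\ref{prop:MeromExtK}: the slit imaginary axis $i\R \setminus \{i\pi n : n \in \Z\}$ lies in the region $\mathcal{D}$, so each entry of $K_{\text{Rabi}}(x,y,t)$ is holomorphic there and $K_{\text{Rabi}}(x,y,it)$ is well defined and holomorphic for $t \in \R \setminus \pi\Z$. The excluded points $t \in \pi\Z$, where $\sin(t)$ and $\cot(t/2)$ blow up, correspond exactly to the punctures $i\pi n \notin \mathcal{D}$ and account for the singularities anticipated in the Introduction.

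First I would identify $K_{\text{Rabi}}(x,y,it)$ with the integral kernel of $e^{-it\HRabi}$. Conceptually this is the statement that the analytic continuation of the heat semigroup to imaginary time is the Schr\"odinger group: since $\HRabi$ is self-adjoint and bounded below, $e^{-s\HRabi}$ is a holomorphic semigroup on $\{\operatorname{Re} s > 0\}$, trace class for $\operatorname{Re} s > 0$, whose boundary values on $i\R$ form the unitary group $e^{-it\HRabi}$ of Stone's theorem; the heat kernel is the kernel of $e^{-s\HRabi}$ for $s > 0$, and Proposition~\ref{prop:MeromExtK} is the kernel-level counterpart of this operator continuation. At the level of the defining Schr\"odinger problem I would argue as follows: the heat kernel satisfies $\partial_s K_{\text{Rabi}} = -\HRabi K_{\text{Rabi}}$ for $s > 0$, and as both sides are holomorphic in $s$ on $\mathcal{D}$ (differentiate the uniformly convergent series termwise, using Proposition~\ref{prop:MeromExtK}), the identity theorem propagates this equation to all of $\mathcal{D}$. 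Putting $U = K_{\text{Rabi}}(\,\cdot\,,\,\cdot\,,it)$, the chain rule $\partial_t U = i\,\partial_s K_{\text{Rabi}}|_{s=it}$ turns the heat equation into the Schr\"odinger equation defining the propagator, while the initial condition $\lim_{t\to 0}U = \delta_x(y)\bm{I}_2$ is inherited from the defining initial condition of the heat kernel (the leading factor $U_0$ reducing to the free Schr\"odinger kernel as $t \to 0$).

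The explicit formula is then obtained by the substitution $t \mapsto it$ together with the elementary identities $\cosh(it) = \cos t$, $\sinh(it) = i\sin t$ and $\coth(it/2) = -i\cot(t/2)$. I would verify each ingredient in turn: these produce $U_0$ from $K_0$ (with $\sqrt{\sinh t}\mapsto\sqrt{i\sin t}$), the factor $e^{-2g^2(-i\cot(t/2))^{(-1)^\lambda}}$ from $e^{-2g^2(\coth(t/2))^{(-1)^\lambda}}$, and $\bar\theta_\lambda,\bar\xi_\lambda$ from $\theta_\lambda,\xi_\lambda$ by replacing every hyperbolic function of $t$ with its trigonometric counterpart and tracking the factors of $i$ introduced by $\sinh(t)\mapsto i\sin(t)$ in the denominators. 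Throughout, $\sqrt{i\sin t}$ is taken in the principal branch; this is consistent precisely because $\mathcal{D}$ was cut in accordance with the principal branch of the logarithm, which is what renders the continuation single-valued along $i\R$.

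I expect the genuine obstacle to be the rigorous matching of the kernel-level continuation with the operator-level one across the imaginary axis, rather than any of the algebraic substitutions, which are routine. For $\operatorname{Re} s > 0$ the operator $e^{-s\HRabi}$ is Hilbert--Schmidt with an honest $L^2$ kernel, whereas on $i\R$ the operator $e^{-it\HRabi}$ is unitary and carries only a distributional kernel; one must therefore justify that the entrywise holomorphic continuation of Proposition~\ref{prop:MeromExtK} coincides with this distributional kernel, and read the initial condition $\delta_x(y)\bm{I}_2$ as the distributional (oscillatory-Gaussian) limit rather than a classical one. Controlling this passage---for instance by pairing with Schwartz test functions and using the uniform bounds of Lemma~\ref{lem:bound} to exchange the limit $s \to it$ with integration---is the delicate step; once it is in place, the stated formula is immediate from the substitution above.
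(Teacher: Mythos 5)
Your proposal takes essentially the same route as the paper: Lemma~\ref{lem:bound} and Proposition~\ref{prop:MeromExtK} supply the holomorphic continuation of $K_{\text{Rabi}}(x,y,t)$ to a region containing $i\R\setminus i\pi\Z$, and the stated formula is then read off by the substitution $t\mapsto it$ together with the hyperbolic-to-circular identities. The paper in fact states the theorem with no further argument after Proposition~\ref{prop:MeromExtK}, so your extra discussion of matching the continued kernel with the distributional kernel of the unitary group $e^{-it\HRabi}$ is a more careful treatment of a point the paper leaves implicit.
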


We note that \(U_{\text{Rabi}}(x,y,t) \) can be written completely in terms of circular functions, in contrast with the case of \(K_{\text{Rabi}}(x,y, t) \) which is given in terms of hyperbolic functions.

\begin{thm} 
  The propagator $U_{\pm}(x,y,t, \Delta)$ of $H_\pm= \HRabi |_{\mathcal{H}_\pm}$ is given by 
  \begin{align*}
    U_{\pm}(x,y,t, \Delta)
    = & U_0(x,y,g,t)\sum_{\lambda=0}^{\infty} (i t\Delta)^{2\lambda} \bar{\Phi}^-_{2\lambda}(x,y,t) \mp U_0(x,-y,g,t) \sum_{\lambda=0}^{\infty}
        (i t\Delta)^{2\lambda+1} \bar{\Phi}^+_{2\lambda+1}(x,-y,t),
  \end{align*}
  where for $\lambda\geq1$, the function $\bar{\Phi}^\pm_{\lambda}(x,y,t)$ is given by
  \begin{align*}
    \bar{\Phi}^\pm_{\lambda}(x,y,t) := e^{-2  g^2 (-i\cot(\tfrac{t}2))^{(-1)^\lambda}} \idotsint\limits_{0\leq \mu_1 \leq \cdots \leq \mu_{\lambda} \leq 1}  e^{- 4 i g^2 \frac{\cos(t(1-\mu_\lambda))}{\sin(t)}(\frac{1+(-1)^\lambda}{2}) +  \bar{\xi}_{\lambda}(\bm{\mu_{\lambda}},t)\pm \bar{\theta}_{n}(x,y, \bm{\mu_{\lambda}},t)} d \bm{\mu_{n}}
  \end{align*}
  and  
  \begin{align*}
    \bar{\Phi}^\pm_0(x,y,t) := e^{-2 ig^2\tan\big(\frac{t}2\big) \pm\sqrt2 i g(x+y)\tan\big(\frac{t}2\big)}.
  \end{align*}
\end{thm}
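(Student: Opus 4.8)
The plan is to follow, essentially verbatim, the argument just used for the full propagator $U_{\text{Rabi}}$. First I would regard the two series defining $K_{\pm}(x,y,s,\Delta)$ as functions of a complex variable $s$, show that they continue holomorphically from $s>0$ to the region $\mathcal{D}$ of Proposition~\ref{prop:MeromExtK} (the complement of $\bigcup_{n\in\Z}\{a+i\pi n : a\leq 0\}$), and then simply set $s=it$. The explicit circular-function expression is then obtained by substituting hyperbolic functions of $it$ by their circular counterparts term by term.

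The holomorphic extension of $K_{\pm}$ requires no new work. The two series in the formula for $K_{\pm}$ are assembled from exactly the same ingredients as the series for $K_{\text{Rabi}}$ --- the prefactors $K_0$ and $\tilde K_0$, the exponents $\xi_\lambda$ and $\theta_\lambda$, and the simplex integrals --- so the bounds of Lemma~\ref{lem:bound} apply unchanged and the majorization performed in the proof of Proposition~\ref{prop:MeromExtK} yields uniform convergence on compact subsets of $\mathcal{D}$ for each entry of $K_{\pm}$; by the Weierstrass convergence theorem each entry is holomorphic on $\mathcal{D}$. (Equivalently, since $K_{\text{Rabi}}=K_{+}\oplus K_{-}$ and the parity projection expresses $K_{\pm}$ as a fixed linear combination of $K_{\text{Rabi}}(x,\pm y,s)$, holomorphy of $K_{\pm}$ is inherited directly from that of $K_{\text{Rabi}}$.) Because the imaginary axis $i\R$ meets the excluded lattice only at the points $i\pi n$, the punctured axis $\{\,it : t\in\R,\ t\neq \pi n\,\}$ lies in $\mathcal{D}$, so $U_{\pm}(x,y,t,\Delta):=K_{\pm}(x,y,it,\Delta)$ is well defined there; by uniqueness of analytic continuation it is the integral kernel of $e^{-itH_{\pm}}$, consistent with $U_{\text{Rabi}}=U_{+}\oplus U_{-}$.

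It then remains to read off the explicit form, which is a mechanical substitution under $s=it$. Using $\cosh(it)=\cos t$, $\sinh(it)=i\sin t$, $\coth(\tfrac{it}{2})=-i\cot\tfrac{t}{2}$ and $\tanh(\tfrac{it}{2})=i\tan\tfrac{t}{2}$, one checks that $K_0(x,y,g,it)=U_0(x,y,g,t)$ and $\tilde K_0(x,-y,g,it)=U_0(x,-y,g,t)$, that $\xi_\lambda\mapsto \bar\xi_\lambda$ and $\theta_\lambda\mapsto\bar\theta_\lambda$, and hence that $\Phi^{\pm}_{\lambda}(x,y,it)=\bar\Phi^{\pm}_{\lambda}(x,y,t)$ and $\Phi^{\pm}_0\mapsto\bar\Phi^{\pm}_0$; the scalar prefactors $(t\Delta)^{2\lambda}$ and $(t\Delta)^{2\lambda+1}$ become $(it\Delta)^{2\lambda}$ and $(it\Delta)^{2\lambda+1}$. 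Substituting all of these into the formula for $K_{\pm}$ produces the stated formula for $U_{\pm}$.

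I expect the only genuine verification effort to lie in the bookkeeping of the last step rather than in convergence. The delicate points are tracking the factors of $i$ generated by $\sinh(it)=i\sin t$ in the denominators --- in particular the choice of branch in $\sqrt{2\pi\sinh(it)}=\sqrt{2\pi i\sin t}$ appearing in $U_0$, which must be taken along the principal branch, as in the formula for $U_{\text{Rabi}}$ --- and confirming that the two \emph{distinct} heat-kernel prefactors $K_0$ and $\tilde K_0$ both rotate onto the single propagator prefactor $U_0$, evaluated at $y$ and at $-y$ respectively. This matching of prefactors is the one place where the two series of $K_{\pm}$ differ structurally, and it is the step I would check most carefully.
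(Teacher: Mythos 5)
Your proposal is correct and follows essentially the same route the paper takes (the paper gives no separate written proof for this theorem, relying on the holomorphic extension of Proposition \ref{prop:MeromExtK} via Lemma \ref{lem:bound} followed by the Wick rotation $t\to it$, exactly as you describe). Your explicit flagging of the prefactor matching of $K_0$ and $\tilde K_0$ onto $U_0(x,\pm y,g,t)$ and of the principal branch of the square root is a reasonable identification of the only nontrivial bookkeeping.
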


\section{Spectral determinant for parity Hamiltonians and Braak's $G$-functions}
\label{sec:spectral-determinant}

The $G$-function for the QRM (and the respective ones for the parity Hamiltonians) was
originally defined by Braak \cite{B2011PRL} to establish the exact solvability of the QRM. In \cite{KRW2017}
(see also \cite{Sugi2016}) a significant relation was found between the Braak $G$-function and the spectral zeta
function of the QRM (the Mellin transform of the partition function). Actually, the $G$-function is
(up to a non-vanishing function) equal to the spectral determinant of \(\HRabi\), that is, the zeta-regularized
product associated to the spectral zeta function of the QRM. In this section, we extend this result for the case
the Hamiltonians \(H_{\pm}\) of each of the parities.

Let us start by recalling the definitions of the spectral zeta functions and the spectral determinant
specialized to the case of the QRM and the Hamiltonians of each parity.
Let
\[
  \lambda^{\pm}_1 < \lambda^{\pm}_2 \leq \lambda^{\pm}_3 \leq \ldots \leq \lambda^{\pm}_n \leq \ldots (\nearrow \infty)
\]
be the eigenvalues of \(H_{\pm}\), then the (Hurwitz-type) spectral zeta function \(\zeta^{\pm}_{\text{QRM}}(s; \tau)\) is given by the Dirichlet series
\[
  \zeta^{\pm}_{\text{QRM}}(s; \tau):= \sum_{j=1}^\infty (\lambda^{\pm}_j +\tau)^{-s}.
\]
Similarly, the spectral zeta function \(\zeta_{\text{QRM}}(s; \tau)\) of the QRM is given by
\[
  \zeta_{\text{QRM}}(s; \tau) = \zeta^{+}_{\text{QRM}}(s; \tau)+  \zeta^{-}_{\text{QRM}}(s; \tau).
\]
In both cases, it is easily verified (cf. \cite{Sugi2016}) that the zeta functions above are absolutely convergent
for \(\Re(s)>1 \) for \( \tau \in \C - \Spec({\HRabi})\). 

Fix the log-branch by $-\pi\leq \arg(\tau- \lambda_i)<\pi$. For a sequence $\mathcal{A} =\{a_i\}_{i\geq 1}, \, a_i \in \C$, by
defining the associated zeta function
\[
  \zeta_{\mathcal{A}}(s) = \sum_{n=1}^{\infty} a_n^{-s},
\]
assumed to be holomorphic at $s=0$, the zeta regularized product (cf. \cite{QHS1993TAMS}) associated to $\mathcal{A}$ is given by
\begin{equation*}
  \regprod_{i=0}^\infty a_i:= \exp\left(-\frac{d}{ds}\zeta_{\mathcal{A}}^{\pm}(s)\big|_{s=0}\right).
\end{equation*}
By introducing an auxiliary parameter, the zeta regularized product is also one of the ways to define a
function with prescribed zeros.

If the sequence $\mathcal{A}$ correspond to the eigenvalues of an operator (e.g. a Hamiltonian), the zeta
regularized product is a generalization of the characteristic polynomials of finite matrices. 
For the case of the QRM, the zeta regularized product  associated to \(\zeta_{\text{QRM}}^{\pm}(s; \tau)\) is defined by
\begin{equation*}
  \regprod_{i=0}^\infty (\tau-\lambda^{\pm}_i):= \exp\left(-\frac{d}{ds}\zeta_{\text{QRM}}^{\pm}(s; \tau)\big|_{s=0}\right),
\end{equation*}
where the product is over the eigenvalues \(\lambda_i^{\pm}\) in the spectrum of \(H_{\pm}\). Now we define the spectral determinant of the Hamiltonians \(H_{\pm} \) as
\begin{equation*}
  \det (\tau-H_{\pm}):= \regprod_{i=0}^\infty (\tau-\lambda^{\pm}_i).
\end{equation*}
The spectral determinant, as a function of $\tau$ vanishes exactly at the eigenvalues of $H_{\pm}$.

In \cite{RW2017} the authors proved that the zeta regularized product of \(\zeta_{\text{Rabi}}(s;\tau)\), equivalently the spectral determinant of  \( \HRabi\), is given (up to a non-vanishing entire function) by the complete $G$-function (called generalized $G$-function in \cite{RW2017} ) given by
\[
  \mathcal{G}(x;g,\Delta) = G_{+}(x;g,\Delta) G_{-}(x;g,\Delta) \Gamma(-x)^{-2},
\]
where \(G_{\pm}(x;g,\Delta)\) are the parity $G$-functions defined in \cite{B2011PRL,B2011PRL-OnlineSupplement} (see also Appendix \ref{sec:bargm-space-confl} and cf. \cite{LB2015JPA}).

To extend the result to the parity Hamiltonians we need some preparations. First, we need to show that the spectral
zeta function $\zeta^{\pm}_{\text{QRM}}(s; \tau)$ is holomorphic around $s = 0$. In \cite{Sugi2016}, it was shown, without using an explicit formula for the heat kernel, that $\zeta_{\text{QRM}}(s; \tau)$ extends meromorphically to the complex plane with a simple pole at \(s= 1\) (cf. \cite{IW2005a,IW2005b} for a reference to the method for the case of NCHO).

Using the Mellin transform expression of $\zeta_{\text{QRM}}(s)$ by the partition function $Z_{\text{Rabi}}(t)$ we can give another proof for the meromorphic continuation, similar to one of Riemann's original proofs for the zeta function. In addition, by the same method we obtain the analytic continuation of the parity zeta function \(\zeta_{\text{QRM}}^{\pm}(s;\tau)\). The details are given in Appendix \ref{sec:proofmero}.

\begin{thm} \label{IntRep_SZF}We have
\begin{align}\label{ContourSZF}
  \zeta_{\text{QRM}}(s;\tau)= -\frac{\Gamma(1-s)}{2\pi i}\int_\infty^{(0+)} \frac{(-w)^{s-1} \Omega(w)e^{-\tau w}}{1-e^{-w}}dw.
\end{align}
Here the contour integral is given by the path which starts at $\infty$ on the real axis, encircles the origin (with a radius smaller than $2\pi$) in the positive direction and returns to the starting point and it is assumed $|\arg(-w)|\leq \pi$. This gives a meromorphic continuation of $\zeta_{\text{QRM}}(s;\tau)$  to the whole plane where the only singularity is a simple pole with residue $2$ at $s=1$.
\end{thm}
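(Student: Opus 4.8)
The plan is to derive the contour-integral representation \eqref{ContourSZF} from the standard Mellin-transform relationship between the spectral zeta function and the partition function, and then to read off the meromorphic continuation and the pole structure directly from the properties of the integrand. First I would recall that for $\Re(s)>1$ the Hurwitz-type zeta function admits the Mellin representation
\[
  \zeta_{\text{QRM}}(s;\tau) = \frac{1}{\Gamma(s)} \int_0^\infty w^{s-1} e^{-\tau w} \tilde Z(w)\, dw,
\]
where $\tilde Z(w) = \sum_j e^{-\lambda_j w}$ is the (shifted) partition function of $\HRabi$; combining the explicit formula for $Z_{\text{Rabi}}(\beta)$ from Corollary \ref{cor:Partition_function} with the prefactor $e^{\beta(g^2+1)}/\sinh(\beta)$ shows that $\tilde Z(w)$ equals $\Omega(w)/(1-e^{-w})$ for an appropriate entire (or at least suitably regular) function $\Omega$, so that the Mellin integrand becomes $w^{s-1}\Omega(w)e^{-\tau w}/(1-e^{-w})$. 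The factor $1/(1-e^{-w})$ supplies the $1/w$ singularity at the origin responsible for the pole, while $\Omega(w)$ absorbs the bracketed series and is regular near $0$.

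Next I would convert the Mellin integral along $(0,\infty)$ into the Hankel-type contour integral starting and returning at $+\infty$ and encircling the origin. This is the classical Riemann device: using the reflection formula $\Gamma(s)\Gamma(1-s) = \pi/\sin(\pi s)$ together with the discontinuity of $(-w)^{s-1}$ across the positive real axis, one identifies
\[
  \frac{1}{\Gamma(s)}\int_0^\infty w^{s-1} F(w)\,dw
  = -\frac{\Gamma(1-s)}{2\pi i}\int_\infty^{(0+)} (-w)^{s-1} F(w)\,dw,
\]
valid for $\Re(s)>1$ when $F$ decays adequately at $+\infty$, with the convention $|\arg(-w)|\leq\pi$. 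Applying this with $F(w)=\Omega(w)e^{-\tau w}/(1-e^{-w})$ yields precisely \eqref{ContourSZF}. The radius of the encircling loop is taken smaller than $2\pi$ so as to avoid the poles of $1/(1-e^{-w})$ at $w=2\pi i n$, $n\neq0$.

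The meromorphic continuation is then immediate: the contour integral on the right-hand side converges for \emph{all} $s\in\C$ because the path stays a positive distance from the origin, so the only possible singularities in $s$ come from the factor $\Gamma(1-s)$, which has simple poles at $s=1,2,3,\dots$. At each such integer $s=n$ the contour integral collapses (via Cauchy's theorem, deforming the loop) to a residue at $w=0$; for $n\geq2$ the corresponding residue vanishes so that $\Gamma(1-s)$'s pole is cancelled, whereas at $s=1$ the residue of the integrand at $w=0$ is governed by $\Omega(0)$ and the simple pole of $1/(1-e^{-w})$, producing the stated simple pole of residue $2$. The main obstacle, and the step that requires genuine care rather than routine bookkeeping, is the justification of the interchange and the contour manipulation: one must verify that $\tilde Z(w)=\Omega(w)/(1-e^{-w})$ with $\Omega$ regular at the origin (using the uniform convergence and the bounds of Lemma \ref{lem:bound} to control the bracketed series as $w\to0$), that the integral over $(0,\infty)$ converges at both endpoints for $\Re(s)>1$, and that the contributions from the small circular arc around $0$ and from the horizontal rays at $+\infty$ behave as required so that the passage from the ray integral to the Hankel contour is legitimate. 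These analytic estimates are precisely what is deferred to Appendix \ref{sec:proofmero}.
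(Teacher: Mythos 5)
Your proposal is correct and follows essentially the same route as the paper: the Mellin representation $\zeta_{\text{QRM}}(s;\tau)=\Gamma(s)^{-1}\int_0^\infty w^{s-1}Z_{\text{Rabi}}(w)e^{-\tau w}\,dw$, the classical Riemann--Hankel contour deformation combined with the reflection formula, the regularity of $\Omega$ near the origin (the paper's Proposition \ref{prop:holomorphy}), and the residue computation $\Omega(0)=2$ at $s=1$. The only cosmetic difference is that you rule out poles at $s=n\geq 2$ by noting the single-valued integrand is regular at $w=0$ there, whereas the paper invokes the known analyticity of $\zeta_{\text{QRM}}$ for $\Re(s)>1$; both are standard and equivalent.
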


\begin{cor}
  With the notation of Theorem \ref{IntRep_SZF}, we have
  \begin{align*}
    \zeta_{\text{QRM}}^{\pm}(s;\tau)= -\frac{\Gamma(1-s)}{4\pi i}\int_\infty^{(0+)}\left( \frac{(-w)^{s-1} \Omega(w)e^{-\tau w}}{1-e^{-w}} \mp \frac{(-w)^{s-1} \Omega_{\text{odd}}(w)e^{-\tau w}}{1+e^{-w}}  \right)dw.
  \end{align*}
  This gives a meromorphic continuation of $\zeta^{\pm}_{\text{QRM}}(s;\tau)$ to the whole plane where the only singularity is a simple pole with residue $1$ at $s=1$. \qed
\end{cor}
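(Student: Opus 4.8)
The plan is to run the Mellin-transform and Hankel-contour argument behind Theorem~\ref{IntRep_SZF} \emph{separately} on the even and odd parts of the parity partition functions, and then to glue the two halves. The starting point is a purely algebraic observation read off from the expression for $Z_{\text{Rabi}}^{\pm}(\beta)$ in Corollary~4.4 of \cite{RW2020hk}: writing $Z_{\text{odd}}(\beta) := Z_{\text{Rabi}}^{-}(\beta) - Z_{\text{Rabi}}^{+}(\beta)$ for the part carrying the factor $1/\cosh(\beta)$, one has
\[
  Z_{\text{Rabi}}^{\pm}(\beta) = \tfrac12\, Z_{\text{Rabi}}(\beta) \mp \tfrac12\, Z_{\text{odd}}(\beta),
\]
the first summand being exactly half of the full partition function of Corollary~\ref{cor:Partition_function} (the even sums of the two parities add to $Z_{\text{Rabi}}$ while the odd sums cancel).

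Next I would transport this decomposition to the spectral zeta functions through the Mellin identity $\zeta_{\text{QRM}}^{\pm}(s;\tau) = \frac{1}{\Gamma(s)}\int_0^\infty t^{s-1} e^{-\tau t} Z_{\text{Rabi}}^{\pm}(t)\,dt$ valid for $\Re(s)>1$, obtaining $\zeta_{\text{QRM}}^{\pm}(s;\tau) = \tfrac12\,\zeta_{\text{QRM}}(s;\tau) \mp \tfrac12\, \zeta_{\text{odd}}(s;\tau)$ with $\zeta_{\text{odd}}(s;\tau) := \frac{1}{\Gamma(s)}\int_0^\infty t^{s-1} e^{-\tau t} Z_{\text{odd}}(t)\,dt$. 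For the even summand I invoke Theorem~\ref{IntRep_SZF} directly, which accounts for the term $-\frac{\Gamma(1-s)}{4\pi i}\int \frac{(-w)^{s-1}\Omega(w)e^{-\tau w}}{1-e^{-w}}\,dw$. For the odd summand I set $\Omega_{\text{odd}}(w) := (1+e^{-w})\,Z_{\text{odd}}(w)$ — the denominator $1+e^{-w}$ replacing the $1-e^{-w}$ of the even case precisely because $Z_{\text{odd}}$ is built from $1/\cosh$ rather than $1/\sinh$ — and repeat the contour deformation proving Theorem~\ref{IntRep_SZF} verbatim: deform the Mellin integral into the Hankel path, using the estimates of Lemma~\ref{lem:bound} (as already applied in Proposition~\ref{prop:MeromExtK}) to justify both the exchange of the $\lambda$-summation with the integral and the passage to the contour. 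This produces the displayed identity of the corollary, the two factors of $\tfrac12$ turning the prefactor $\frac{1}{2\pi i}$ of Theorem~\ref{IntRep_SZF} into $\frac{1}{4\pi i}$.

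The remaining point is the pole structure. The Hankel integrals are entire in $s$, so the only candidate singularities of $\zeta_{\text{QRM}}^{\pm}$ are the poles of $\Gamma(1-s)$ at $s=1,2,3,\dots$; at each positive integer $s=m$ the factor $(-w)^{m-1}$ is single valued, the two rays of the contour cancel along the positive real axis, and the integral collapses to $2\pi i$ times the residue at $w=0$ of $(-w)^{m-1}e^{-\tau w}$ against the respective kernel. Here the essential contrast appears: $\frac{1}{1-e^{-w}}\sim \frac1w$ has a simple pole at the origin, producing the residue that survives at $s=1$, whereas $Z_{\text{odd}}(w)=\Omega_{\text{odd}}(w)/(1+e^{-w})$ is holomorphic at $w=0$ (indeed it vanishes there, since $1+e^{-w}\to 2$ and the odd sum starts at order $\beta\Delta$). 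Hence the odd summand contributes no pole at $s=1$, and the simple pole of $\zeta_{\text{QRM}}^{\pm}$ at $s=1$ comes solely from $\tfrac12\,\zeta_{\text{QRM}}$, with residue $\tfrac12\cdot 2 = 1$.

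I expect the main obstacle to be the analytic bookkeeping for $\Omega_{\text{odd}}$ rather than any new idea: one must check that the closed form of $Z_{\text{odd}}$ inherited from Corollary~4.4 of \cite{RW2020hk} is genuinely holomorphic on and inside the encircling circle of radius smaller than $2\pi$ — in particular that the apparent singularities of $1/\cosh(w)$, $\tanh(w/2)$ and the $1/\sinh(w)$ factors inside $\xi_\lambda,\psi_\lambda^{\pm}$ cancel, leaving poles only at $w\in 2\pi i\,\Z$ (as they must, since $Z_{\text{odd}}=Z_{\text{Rabi}}^{-}-Z_{\text{Rabi}}^{+}$ is a difference of convergent Dirichlet series) — and that $\Omega_{\text{odd}}$ decays along the positive real axis so that the contour deformation is legitimate. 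Granting these estimates, which are of exactly the same nature as those established for the even case, the identity and the residue computation follow.
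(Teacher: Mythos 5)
Your proposal is correct and follows essentially the same route as the paper: the paper likewise defines $\Omega_{\text{odd}}$ implicitly by $Z^{\pm}_{\text{Rabi}}(t)=\tfrac12\bigl(\Omega(t)/(1-e^{-t})\mp\Omega_{\text{odd}}(t)/(1+e^{-t})\bigr)$, notes that Lemma \ref{lem:bound} applies to $\psi^{+}_{\lambda}$ so the argument of Proposition \ref{prop:holomorphy} gives holomorphy of $\Omega_{\text{odd}}$ near the origin and on the right half-plane, and then lets the proof of Theorem \ref{IntRep_SZF} run verbatim on each half, the residue $1$ at $s=1$ coming only from the $\Omega/(1-e^{-w})$ term because the odd integrand is regular at $w=0$. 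One correction to your final paragraph: the singularities of $Z_{\text{odd}}$ do \emph{not} lie only on $2\pi i\,\Z$ --- the $1/\cosh$ factor (together with $\tanh(w/2)$ and the $1/\sinh$ inside $\xi_\lambda,\psi^{+}_\lambda$) places the nearest ones at $w=\pm i\pi/2$, and convergence of the Dirichlet series only yields holomorphy for $\Re w>0$, not near the imaginary axis; this costs nothing, however, since the encircling circle can be shrunk (the paper accordingly claims holomorphy of $\Omega_{\text{odd}}$ only in a disc of radius one) and the Hankel integral is independent of that radius.
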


As an important consequence of the meromorphic continuation of $\zeta_{\text{QRM}}^{\pm}(s;\tau)$ we obtain the Weyl law for the distribution of the eigenvalues of the parity Hamiltonians \(H_{\pm}\) in the usual way (cf. \cite{IW2005a,Sugi2016}).

Let us define the spectral counting functions
\begin{align*}
  N_{\text{Rabi}}(T) &= \# \{\lambda \in \Spec(\HRabi) \, | \, \lambda \le T \}, \\
  N_{\pm}(T) &= \# \{\lambda \in \Spec(H_{\pm}) \, | \, \lambda \le T \}
\end{align*}
for \(T > 0 \).

\begin{cor}
  We have
  \[
    N_{\pm}(T) \sim \frac12 N_{\text{Rabi}}(T)  \sim T,
  \]
  as \(T \to \infty \). \qed
\end{cor}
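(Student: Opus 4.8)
The plan is to read off the eigenvalue asymptotics from the behaviour of the spectral zeta functions at their rightmost pole, exactly as in the classical Tauberian derivation of the Weyl law. The decisive inputs are already available: by the corollary to Theorem~\ref{IntRep_SZF}, each of $\zeta^{\pm}_{\text{QRM}}(s;\tau)$ continues meromorphically to $\C$ with a single simple pole at $s=1$ of residue $1$, while $\zeta_{\text{QRM}}(s;\tau)$ has a simple pole at $s=1$ of residue $2$; combined with the absolute convergence for $\Re(s)>1$ recalled above, this is precisely the analytic information a Tauberian theorem consumes.

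First I would fix $\tau\in\R$ large enough that $\tau\notin\Spec(\HRabi)$ and $\mu^{\pm}_j:=\lambda^{\pm}_j+\tau>0$ for all $j$; such a $\tau$ exists because the spectrum is bounded below, and replacing the argument $T$ by $T-\tau$ perturbs the counting function only by a bounded amount, hence leaves the leading asymptotics untouched. Writing $M_{\pm}(u):=\#\{j:\mu^{\pm}_j\le u\}=N_{\pm}(u-\tau)$, I would then express the Dirichlet series as the Mellin--Stieltjes transform of the nondecreasing counting function,
\[
  \zeta^{\pm}_{\text{QRM}}(s;\tau)=\int_{0}^{\infty}u^{-s}\,dM_{\pm}(u)=s\int_{0}^{\infty}M_{\pm}(u)\,u^{-s-1}\,du ,
\]
valid for $\Re(s)>1$, which places the problem squarely in the scope of the Wiener--Ikehara theorem.

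The core step is then the invocation of that theorem, whose hypotheses all hold here: the measure $dM_{\pm}$ is nonnegative because $M_{\pm}$ is nondecreasing, the transform converges for $\Re(s)>1$, and $\zeta^{\pm}_{\text{QRM}}(s;\tau)-\tfrac{1}{s-1}$ extends continuously to the closed half-plane $\Re(s)\ge1$ since, by the meromorphic continuation, its only singularity there is the simple pole at $s=1$. The theorem thus yields $M_{\pm}(u)\sim u$ as $u\to\infty$, that is $N_{\pm}(T)\sim T$. Running the identical argument for $\zeta_{\text{QRM}}(s;\tau)$, whose residue at $s=1$ equals $2$, gives $N_{\text{Rabi}}(T)\sim 2T$, and the two conclusions combine to $N_{\pm}(T)\sim\tfrac12 N_{\text{Rabi}}(T)\sim T$. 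The normalisation is internally consistent: a simple pole of residue $r$ at $s=1$ corresponds, through $s\int_{1}^{\infty}u^{-s-1}\,(c\,u)\,du=\tfrac{sc}{s-1}$, to the asymptotic $M(u)\sim c\,u$ with $c=r$, so residues $1$ and $2$ produce the constants $1$ and $2$ respectively.

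The part demanding the most care is the boundary regularity required by the Tauberian theorem, namely that no singularity other than the pole at $s=1$ lies on the critical line $\Re(s)=1$; this is exactly what the full meromorphic continuation supplies, so the potential obstacle has in fact already been removed by Theorem~\ref{IntRep_SZF} and its corollary. Should one wish to bypass the Dirichlet-series bookkeeping altogether, an equivalent route is to apply Karamata's Tauberian theorem to the small-$t$ behaviour of the heat trace $\tr e^{-tH_{\pm}}$, whose leading coefficient as $t\to0^{+}$ coincides with the residue at $s=1$; both approaches deliver the same constants and hence the same Weyl law.
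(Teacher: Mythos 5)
Your argument is correct and is precisely the ``usual way'' the paper alludes to (with a \qed and a citation to the standard Tauberian derivation in the references): the Wiener--Ikehara theorem applied to $\zeta^{\pm}_{\text{QRM}}(s;\tau)$ and $\zeta_{\text{QRM}}(s;\tau)$, whose meromorphic continuations with simple poles at $s=1$ of residues $1$ and $2$ are supplied by Theorem~\ref{IntRep_SZF} and its corollary. The normalisation check and the shift by $\tau$ are handled correctly, so nothing is missing.
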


The Weyl law shows that the positive and negative parity eigenstates are equally distributed  and supports also the original Braak conjecture concerning the number of eigenvalues (in each parity) in the consecutive intervals (see e.g. \cite{B2011PRL,KRW2017}). We note that
equality of the distribution between the parities also follows from the relation $G_-(x, g, \Delta)=G_+(x,g,-\Delta)$ between $G_\pm$-functions \cite{B2011PRL} and the properties of the constraint functions/polynomials \cite{KRW2017}.

Next, we compute the residue at the poles for the $G$-functions \(G_{\pm}(x;g,\Delta)\).

\begin{lem}\label{lem:respole}
  The residue of the $G$-function \(G_{\pm}(x;g,\Delta)\) at the (simple) pole at \(x = N  \in \Z_{\geq 0}\) is given by
  \[
    \Res_{x = N} G_{\pm}(x;g,\Delta) = \frac{\Delta^2 g^N}{2(N+1)} K_N(N;g,\Delta) G_{\pm}^{(N)}(g,\Delta). 
  \]
\end{lem}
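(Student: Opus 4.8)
The plan is to work directly from the power-series description of Braak's $G$-function recalled in Appendix~\ref{sec:bargm-space-confl}, namely
\[
  G_{\pm}(x;g,\Delta) = \sum_{n=0}^{\infty} K_n(x;g,\Delta)\left(1 \mp \frac{\Delta}{x-n}\right) g^n ,
\]
where the coefficients $K_n = K_n(x;g,\Delta)$ are generated by a three-term recurrence of the shape $n K_n = f_{n-1}(x) K_{n-1} - K_{n-2}$ with $K_0 = 1$, $K_{-1}=0$, and where $f_n(x)$ carries a simple pole at $x=n$ with residue $\tfrac{\Delta^2}{2g}$. The first step is to pin down the pole structure. Since the singular part of $K_n$ is inherited only through the factor $f_{n-1}$, an induction on $n$ shows that $K_n(x)$ has simple poles exactly at $x=0,1,\dots,n-1$ and, in particular, is regular at $x=N$ for every $n\le N$. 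Hence the only sources of a pole of $G_{\pm}$ at an integer $x=N\in\Z_{\ge 0}$ are the explicit prefactor $1\mp\Delta/(x-n)$ in the single term $n=N$, and the poles of the coefficients $K_n$ for $n>N$.

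Second, I would extract the residue term by term. Invoking the local uniform convergence of the defining series of $G_{\pm}$ on a punctured disc around $x=N$ (as furnished by the analyticity estimates of Appendix~\ref{sec:bargm-space-confl}, cf.~\cite{B2011PRL}), one may interchange $\Res_{x=N}$ with the summation to obtain
\[
  \Res_{x=N} G_{\pm}(x;g,\Delta) = \mp\Delta\,K_N(N)\,g^N + \sum_{n=N+1}^{\infty} R_n\left(1\mp\frac{\Delta}{N-n}\right)g^n,
  \qquad R_n := \Res_{x=N} K_n(x;g,\Delta),
\]
where the $n=N$ contribution comes from the simple pole of the prefactor and uses the regularity of $K_N$ at $x=N$.

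Third, I would analyse the residues $R_n$ via the recurrence. For $n\ge N+2$ the function $f_{n-1}$ is regular at $x=N$, so the $R_n$ satisfy the \emph{homogeneous} recurrence $n R_n = f_{n-1}(N)R_{n-1}-R_{n-2}$; for the seed index $n=N+1$ the simple pole of $f_N$ at $x=N$ (residue $\tfrac{\Delta^2}{2g}$), together with the regularity of $K_N$ and $K_{N-1}$ there, yields the crucial value $R_{N+1}=\tfrac{1}{N+1}\cdot\tfrac{\Delta^2}{2g}\,K_N(N)$ and $R_N=0$. As the recurrence is linear and homogeneous with this seed, every $R_n$ is a fixed scalar multiple of $K_N(N)$; writing $R_n=K_N(N)\,r_n$ factors $K_N(N)$ out of the entire residue. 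A short computation then establishes the shift identity $f_{N+m-1}(N)=f_{m-1}(0)$, so that the tail sequence $(r_{N+m})_{m\ge 1}$ is governed by the level-$N$ version (that is, the $x=0$ recurrence with the shifted normalisation $N+m$ in place of $m$) of the defining recurrence.

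Finally, I would identify the surviving factor. Pulling out $K_N(N)$ and the leading scale $\tfrac{\Delta^2 g^N}{2(N+1)}$ arising from $R_{N+1}g^{N+1}$, the residual power series in $g$ is, by construction, the normalised level-$N$ generating series, which is precisely the constraint function $G_{\pm}^{(N)}(g,\Delta)$ of Appendix~\ref{sec:bargm-space-confl} (cf.~\cite{KRW2017}); this yields the asserted formula. I expect the main obstacle to be the clean combination of the explicit $n=N$ term with the tail $\sum_{n>N}$ and, relatedly, matching the normalisation so that the shifted residue series is recognised \emph{exactly} as $G_{\pm}^{(N)}$ rather than as an ad hoc multiple of it; a secondary technical point is that the term-by-term residue extraction must be justified through uniform convergence stated on a punctured neighbourhood of $x=N$, which has to be invoked with care.
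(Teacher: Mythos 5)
Your proposal is correct and takes essentially the same route as the paper, whose entire proof is the remark that the result ``follows directly by computation and comparison with the definition of $K_N(N;g,\Delta)$ and $G_{\pm}^{(N)}(g,\Delta)$'' (deferring to Prop.~6.8 of \cite{KRW2017}). Your term-by-term residue extraction, the seed values $R_N=0$ and $R_{N+1}=\tfrac{\Delta^2}{2g(N+1)}K_N(N)$ obtained from the pole of $f_N$, and the renormalisation of the tail series into $G_{\pm}^{(N)}$ are precisely that direct computation made explicit.
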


\begin{proof}
  The result follows directly by computation and comparison with the definition of \(K_N(N;g,\Delta)\) and \(G_{\pm}^{(N)}(g,\Delta)\) (see Appendix \ref{sec:bargm-space-confl} and the proof of Proposition 6.8 of \cite{KRW2017}).
\end{proof}

Next, we show that the zeros of the complete $G$-function $\mathcal{G}_{\pm}$ for each parity defined in the following captures the complete spectrum of $H_{\pm}$.  
  \[
    \mathcal{G}_{\pm}(x;g,\Delta) := G_{\pm}(x;g,\Delta) \Gamma(-x)^{-1}. 
  \]

\begin{thm}\label{thm:eigenparity}
  There is a one-to-one correspondence between eigenvalues \(\lambda\) in \(\Spec{H_{\pm}}\) and zeros \(x = \lambda + g^2 \) of the generalized
  \(G\)-function \(\mathcal{G}_{\pm}(x;g,\Delta)\).
\end{thm}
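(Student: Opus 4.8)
The plan is to prove the bijection by splitting the analysis according to whether the argument $x = \lambda + g^2$ is a non-negative integer, since this is precisely where the entire factor $\Gamma(-x)^{-1}$ and the meromorphic function $G_{\pm}(x;g,\Delta)$ interact nontrivially. First I would recall from the Bargmann-space / confluent-Heun description of the QRM (Appendix \ref{sec:bargm-space-confl}, following \cite{B2011PRL,KRW2017}) the two structural facts I need about $G_{\pm}$: it is meromorphic in $x$ with at most simple poles located only at $x = N \in \Z_{\geq 0}$, and for $x \notin \Z_{\geq 0}$ it vanishes exactly on the regular part of the spectrum, i.e. $G_{\pm}(x;g,\Delta) = 0$ iff $\lambda = x - g^2$ is a regular eigenvalue of $H_{\pm}$. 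Since $\Gamma(-x)^{-1}$ is entire with its zeros located precisely at $x \in \Z_{\geq 0}$ (all simple) and nonvanishing elsewhere, multiplication by $\Gamma(-x)^{-1}$ does not alter the zero set away from the integers. This immediately settles the regular part: for $x \notin \Z_{\geq 0}$ one has $\mathcal{G}_{\pm}(x;g,\Delta) = 0$ iff $\lambda = x - g^2 \in \Spec(H_{\pm})$.

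Next I would treat the integer points $x = N \in \Z_{\geq 0}$, which encode the exceptional (Juddian and non-Juddian) spectrum. Here the simple zero of $\Gamma(-x)^{-1}$ competes with the possible simple pole of $G_{\pm}$, and I would use Lemma \ref{lem:respole} to separate two cases. If $\Res_{x = N} G_{\pm}(x;g,\Delta) \neq 0$, equivalently $K_N(N;g,\Delta)\, G_{\pm}^{(N)}(g,\Delta) \neq 0$, then the simple pole of $G_{\pm}$ exactly cancels the simple zero of $\Gamma(-x)^{-1}$, so $\mathcal{G}_{\pm}$ extends holomorphically and is nonzero at $x = N$; in this case $\lambda = N - g^2$ must not lie in $\Spec(H_{\pm})$. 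Conversely, if the residue vanishes, then $G_{\pm}$ is already holomorphic at $N$, the surviving zero of $\Gamma(-x)^{-1}$ forces $\mathcal{G}_{\pm}(N;g,\Delta) = 0$, and in this case $\lambda = N - g^2$ must be an eigenvalue of $H_{\pm}$.

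The heart of the argument, and the step I expect to be the main obstacle, is the equivalence $\lambda = N - g^2 \in \Spec(H_{\pm}) \iff \Res_{x = N} G_{\pm}(x;g,\Delta) = 0$. I would establish this from the Bargmann-space analysis of the eigenvalue problem at the degenerate point $x = N$, where the three-term recurrence for the power-series coefficients of an entire eigenfunction becomes singular and admissibility of a normalizable solution of the prescribed parity is governed by the vanishing of the constraint combination $K_N(N;g,\Delta)\,G_{\pm}^{(N)}(g,\Delta)$ from Lemma \ref{lem:respole}. Here the factor $K_N(N;g,\Delta)$ is the (parity-independent) Juddian condition producing a polynomial eigenfunction, so its vanishing yields a state in each parity sector, while $G_{\pm}^{(N)}(g,\Delta)$ encodes the parity-selective non-Juddian exceptional condition; the residue vanishes exactly when at least one of these holds, which is precisely when an eigenstate of $H_{\pm}$ occurs at $\lambda = N - g^2$. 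The delicate points to verify are that these are the only mechanisms producing an eigenvalue at an integer argument (no spurious zeros of $\mathcal{G}_{\pm}$ and no missed eigenstates), and, in comparing the two factors against the definitions in \cite{KRW2017}, that the parity bookkeeping is consistent with the degeneracy structure of the exceptional spectrum.

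Finally I would assemble the two regimes into the stated one-to-one correspondence and upgrade it from an equality of zero sets to a genuine bijection by invoking the simplicity of the spectrum within a fixed parity sector: since each eigenvalue of $H_{\pm}$ is non-degenerate, and since the above case analysis shows every zero of $\mathcal{G}_{\pm}$ (simple, whether arising off the integers from $G_{\pm}$ or at an integer from $\Gamma(-x)^{-1}$) corresponds to exactly one eigenvalue and conversely, the map $\lambda \mapsto x = \lambda + g^2$ is the required bijection between $\Spec(H_{\pm})$ and the zero set of $\mathcal{G}_{\pm}(x;g,\Delta)$.
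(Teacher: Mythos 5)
Your proposal is correct and follows essentially the same route as the paper's proof: the same case split between non-integer and integer $x$, the same use of Lemma \ref{lem:respole} to reduce the integer case to the vanishing of $K_N(N;g,\Delta)\,G_{\pm}^{(N)}(g,\Delta)$, and the same appeal to the Bargmann-space characterization of regular versus exceptional (Juddian/non-Juddian) eigenvalues. Your closing remark on non-degeneracy within a parity sector, used to upgrade the equality of zero sets to a bijection, is the only addition beyond what the paper makes explicit.
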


\begin{proof}
  Let \(\lambda \in \R\) be a regular eigenvalue of \(H_{\pm}\), then by the definition \( x = \lambda + g^2 \)  is a zero of \(G_{\pm}(x;g,\Delta)\).
  Now, suppose \(\lambda= N - g^2\) is an exceptional eigenvalue of \(H_{\pm}\), then by Lemma \ref{lem:respole}, we see that at \(x = \lambda -g^2 = N \) the
  function \(G_{\pm}(x;g,\Delta)\) has a finite value, and then  \(\mathcal{G}_{\pm}(x;g,\Delta)\) vanishes by the zero of \(\Gamma(-x)^{-1}\).
  Conversely, let \(x \in \R\) be a zero of \(\mathcal{G}_{\pm}(x;g,\Delta) \). If \(x \notin \Z_{\geq0} \) then \(x \) is a zero of \(G_{\pm}(x;g,\Delta)\)  and  \(\lambda = x - g^2 \) is a regular eigenvalue of \(H_{\pm}\). If \(x = N \in \Z_{\geq 0}\), then, since the zero of \(\Gamma(-x)^{-1}\)
  at \(x=N \) is canceled by the pole of \(G_{\pm}(x;g,\Delta)\), \(x = N \) must be a zero of the residue of \(G_{\pm}(x;g,\Delta) \) at \(x=N \),
  in other words, the tuple \((g,\Delta) \) must be a zero of \(K_N(N;g,\Delta)\) or \(G_{\pm}^{(N)}(g,\Delta)\) and thus \(\lambda = N - g^2 \) is an exceptional eigenvalue
  (the Juddian or non-Juddian exceptional, respectively).
\end{proof}

  \begin{rem}
    The spectrum of the QRM can be captured by irreducible representations of $\mathfrak{sl}_2$ (cf. \cite{KRW2017,W2017JPA}).
    For instance, the Juddian (resp. non-Juddian \cite{MPS2014}) exceptional solutions are obtained from the irreducible finite dimensional (resp. lowest weight) representations. The existence of these exceptional eigenvalues inherited from the quantum harmonic oscillator (or as its ruins) which are described by the oscillator representation of $\mathfrak{sl}_2$  is the reason for the presence of the gamma factor in \(\mathcal{G}_{\pm}(x;g,\Delta)\).
\end{rem}

The meaning of Theorem \ref{thm:eigenparity} is that the complete $G$-function \(\mathcal{G}_{\pm}(x;g,\Delta)\) vanishes
exactly at the eigenvalues of \(H_{\pm}\). Immediatly it follows that it is equal, up to non-vanishing constant,
to the spectral determinant of the parity Hamiltonian.

\begin{cor} \label{cor:specdet}
  There exists a non-vanishing entire function \(c_{\pm}(\tau; g, \Delta)\) such that
  \[
    \det (\tau-H_{\pm}) = c_{\pm}(\tau; g, \Delta) \,  \mathcal{G}_{\pm}(\tau;g,\Delta). \qed
  \]
\end{cor}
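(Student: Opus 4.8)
The plan is to show that $\det(\tau-H_\pm)$ and $\mathcal{G}_\pm(\tau;g,\Delta)$ are two entire functions of $\tau$ having the same zeros, counted with multiplicity. Once this is established, their quotient is holomorphic and nowhere vanishing on all of $\C$, hence of the form $e^{h(\tau)}$ for an entire $h$, and we may simply take $c_\pm(\tau;g,\Delta)$ to be this quotient. This follows the template of the full-QRM identification in \cite{RW2017}; the new inputs are the meromorphic continuation of $\zeta^{\pm}_{\text{QRM}}(s;\tau)$ (the Corollary to Theorem \ref{IntRep_SZF}) together with the zero count of Theorem \ref{thm:eigenparity}.

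First I would record that $\mathcal{G}_\pm$ is entire. By definition $\mathcal{G}_\pm(\tau;g,\Delta)=G_\pm(\tau;g,\Delta)\Gamma(-\tau)^{-1}$, and in the Bargmann/confluent-Heun picture of Appendix \ref{sec:bargm-space-confl} the function $G_\pm(\cdot;g,\Delta)$ is holomorphic off $\Z_{\geq 0}$ with at worst simple poles at the nonnegative integers. Lemma \ref{lem:respole} makes these poles explicit, and since $\Gamma(-\tau)^{-1}$ has simple zeros exactly at $\tau=N\in\Z_{\geq 0}$, the product removes every pole, so $\mathcal{G}_\pm$ extends to an entire function. Its zeros are then located by Theorem \ref{thm:eigenparity}: they occur precisely at the points corresponding (under the identification $x=\lambda+g^2$ of the $G$-function argument with the spectral parameter) to the eigenvalues of $H_\pm$, and since the spectrum of each parity Hamiltonian is simple, these zeros are simple.

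Next I would make sense of the regularized determinant as a function of $\tau$ and identify its zeros. The Corollary to Theorem \ref{IntRep_SZF} gives the meromorphic continuation of $\zeta^{\pm}_{\text{QRM}}(s;\tau)$ to the whole $s$-plane with only a simple pole at $s=1$; in particular it is holomorphic at $s=0$, so that $\det(\tau-H_\pm)=\exp\big(-\tfrac{d}{ds}\zeta^{\pm}_{\text{QRM}}(s;\tau)\big|_{s=0}\big)$ is well defined for $\tau$ in the resolvent set. Analyzing the $\tau$-dependence through the contour-integral representation exactly as in \cite{RW2017,QHS1993TAMS}, one shows that $\tfrac{d}{ds}\zeta^{\pm}_{\text{QRM}}(s;\tau)\big|_{s=0}$ continues across the spectrum with logarithmic singularities at each eigenvalue, whence its exponential extends to an entire function of $\tau$ vanishing to first order exactly on $\Spec(H_\pm)$. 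Thus $\det(\tau-H_\pm)$ is entire with the same simple zeros as $\mathcal{G}_\pm(\tau;g,\Delta)$.

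With both factors entire and sharing the same zero divisor, the quotient $c_\pm(\tau;g,\Delta):=\det(\tau-H_\pm)/\mathcal{G}_\pm(\tau;g,\Delta)$ has removable singularities at the common zeros and is nowhere zero, hence is a non-vanishing entire function, which proves the corollary; as a consistency check, summing over parities via $\zeta_{\text{QRM}}=\zeta^{+}_{\text{QRM}}+\zeta^{-}_{\text{QRM}}$ and $\mathcal{G}=\mathcal{G}_{+}\mathcal{G}_{-}$ recovers the full-QRM statement of \cite{RW2017}. I expect the main obstacle to be the middle step: turning the contour representation of $\zeta^{\pm}_{\text{QRM}}$ into the precise assertion that the regularized determinant is entire in $\tau$ with exactly first-order zeros on the spectrum, and in particular matching multiplicities at the exceptional (Juddian and non-Juddian) eigenvalues located at the integer points $x=N$, where the gamma factor and the residue computation of Lemma \ref{lem:respole} must be reconciled on both sides. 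Once the zero data agree, the remaining conclusion is purely formal.
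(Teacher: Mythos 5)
Your proposal is correct and follows essentially the same route as the paper: the paper deduces the corollary directly from Theorem \ref{thm:eigenparity} (the zeros of $\mathcal{G}_{\pm}$ coincide with the spectrum of $H_{\pm}$) together with the holomorphy of $\zeta^{\pm}_{\text{QRM}}(s;\tau)$ at $s=0$ supplied by the corollary to Theorem \ref{IntRep_SZF}, exactly the two inputs you identify. Your added attention to matching zero multiplicities at the exceptional (integer) points is a detail the paper leaves implicit, but it does not change the argument.
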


We conclude by making a remark on Corollary \ref{cor:specdet}. As mentioned before, in \cite{B2011PRL}
Braak proved the integrability of the QRM by defining the $G$-function of the parity Hamiltonians \(H_{\pm}\).
Here, in Corollary \ref{cor:specdet} above, we see that the $G$-function is, up to a non-vanishing constant,
equal to the spectral determinant of \(H_{\pm}\), in other words, the zeta regularized product of the spectral zeta function \(\zeta_{\text{QRM}}^{\pm}(s;\tau)\).
The zeta regularized product of a zeta function \(\zeta(s)\) is defined when the \(\zeta(s)\) function is holomorphic in a neighborhood around \(s = 0 \)
(in case \(\zeta(s)\) has a pole at \(s = 0 \), a modified zeta regularized product may be used, cf. \cite{KW2004}).
It would be interesting to investigate the relationship between the integrability (or exact solvability) of  the Hamiltonian \(H\) of a
quantum interaction model, that is, the existence of entire solutions of the corresponding Fuchsian ODE (Bargmann model), and the existence of a zeta regularized
product for its corresponding spectral zeta function \(\zeta_{H}(s;t) \), or equivalently, the meromorphic continuation of the spectral zeta function to a region containing \(s=0\). 

\appendix

 \section{Proof of meromorphic continuation of the spectral zeta function} \label{sec:proofmero}

 In this section we provide the proof the meromorphic continuation for the spectral zeta functions
 \(\zeta_{\text{Rabi}}(s)\) and \(\zeta_{\text{QRM}}^{\pm}(s;\tau)\). In addition, we given some basic properties of
 the spectral zeta functions.
 
 We define the function $\Omega(t) = \Omega(t;\Delta,g)$ implicitly by the equation $Z_{\text{Rabi}}(t) = \frac{\Omega(t)}{1-e^{-t}}$. 
 Concretely, \(\Omega(t) \) is given by
 \[
   \Omega(t) := 2 e^{g^2 t} \Bigg[ 1 + \sum_{\lambda=1}^{\infty} (t \Delta)^{2\lambda} \idotsint\limits_{0\leq \mu_1 \leq \cdots \leq \mu_{2 \lambda} \leq 1} e^{-2g^2 \coth(\frac{t}2)+  4g^2\frac{\cosh(t(1-\mu_{2\lambda}))}{\sinh(t)} +  \xi_{2 \lambda}(\bm{\mu_{2\lambda}},t) + \psi^-_{2 \lambda}(\bm{\mu_{2 \lambda}},t)} d \bm{\mu_{2 \lambda}}  \Bigg].
 \]

 First, we prove that the function $\Omega(t)$ extends holomorphically for complex $t$.

\begin{prop}
  \label{prop:holomorphy}
  The series defining the function $\Omega(t)$ is uniformly convergent in compacts in the
  complex domain $\mathcal{D}$ consisting a union of a half plane $\Re t>0$ and a disc centered at origin with
  radius $r < \pi $. In particular, $\Omega(t)$ is a holomorphic function in the region \(\mathcal{D} \).
\end{prop}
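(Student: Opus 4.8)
The plan is to follow the template of the proof of Proposition \ref{prop:MeromExtK}. Since the factor $2e^{g^2 t}$ is entire, it suffices to show that the bracketed series in the definition of $\Omega(t)$ converges uniformly on every compact $K \subset \mathcal{D}$; the conclusion then follows from the Weierstrass convergence theorem. As in that proof, the strategy is to produce a convergent numerical majorant by bounding the exponent of each integrand uniformly in $t \in K$ and in the simplex variables $0 \leq \mu_1 \leq \cdots \leq \mu_{2\lambda} \leq 1$. On the portion of $K$ lying in the half plane $\Re t > 0$ this is routine, exactly as in Proposition \ref{prop:MeromExtK}, since there $1-e^{-2t}$ and $\sinh(t)$ do not vanish. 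The genuinely new feature is that $\mathcal{D}$ now contains the origin, where $\coth(\tfrac{t}{2})$ and $1/\sinh(t)$ each have a pole.

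The first key observation I would establish is that in the relevant combination these poles cancel: writing $F(t,\mu) := -2g^2\coth(\tfrac{t}{2}) + 4g^2\frac{\cosh(t(1-\mu))}{\sinh(t)}$, the principal parts $-4g^2/t$ and $+4g^2/t$ at $t = 0$ cancel, so $F(t,\mu)$ extends holomorphically across $t=0$ with $F(0,\mu) = 0$, uniformly for $\mu \in [0,1]$. Since $r < \pi$, the set $\{n\pi i : n \in \Z\}$ meets $\mathcal{D}$ only at the origin, so on $K$ the functions $1-e^{-2t}$ and $\sinh(t)$ vanish at most at $t=0$; hence $F$ is jointly continuous on $K \times [0,1]$ and bounded there by some constant $A$. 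The same mechanism shows that $\xi_{2\lambda}(\bm{\mu_{2\lambda}},t)$ and $\psi^-_{2\lambda}(\bm{\mu_{2\lambda}},t)$ extend holomorphically across $t=0$ and vanish there, since in each the factor $1/\sinh(t)$ multiplies a quantity vanishing to order at least $t^2$ uniformly in the $\mu_\gamma$. Consequently, although Lemma \ref{lem:bound} is stated on $\mathcal{R}^*$ (which excludes the origin), I would verify that the factors $C_2(t)/(1-e^{-2t})$ and $C_3(t)/(1-e^{-2t})$ controlling its bounds remain finite as $t \to 0$ through $\mathcal{D}$ — using the limit interpretation already present in that lemma for $\Re t = 0$ — so that $|\psi^-_{2\lambda}| \leq D$ and $|\xi_{2\lambda}| \leq B\lambda$ hold uniformly on $K$, with the crucial feature that the bound on $\xi_{2\lambda}$ grows only linearly in $\lambda$.

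With these uniform bounds and the fact that the $2\lambda$-simplex has volume $1/(2\lambda)!$, I would assemble the majorant exactly as in Proposition \ref{prop:MeromExtK}:
\[
  \left| \sum_{\lambda=1}^{\infty} (t\Delta)^{2\lambda} \idotsint\limits_{0\leq \mu_1 \leq \cdots \leq \mu_{2\lambda} \leq 1} e^{F(t,\mu_{2\lambda}) + \xi_{2\lambda}(\bm{\mu_{2\lambda}},t) + \psi^-_{2\lambda}(\bm{\mu_{2\lambda}},t)} d\bm{\mu_{2\lambda}} \right| \leq e^{A+D} \sum_{\lambda=1}^{\infty} \frac{(|t||\Delta| e^{B/2})^{2\lambda}}{(2\lambda)!} \leq e^{A+D}\cosh\!\big(|t||\Delta| e^{B/2}\big),
\]
which is finite and uniformly bounded on $K$ because $|t|$ is bounded there. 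Moreover, each term of the series is holomorphic on $\mathcal{D}$: the integrand is holomorphic in $t$ for fixed $\bm{\mu_{2\lambda}}$ by the pole cancellation above, and integration over the compact simplex preserves holomorphy (Morera together with Fubini). Uniform convergence on compacts then yields that $\Omega(t)$ is holomorphic on $\mathcal{D}$.

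I expect the main obstacle to be the rigorous treatment at and near $t=0$, including the imaginary segment $\{is : 0 < |s| < r\}$ where $\Re t = 0$: one must confirm not only that the leading poles of $F$ cancel, but that the combined factors $C_2(t)/(1-e^{-2t})$ and $C_3(t)/(1-e^{-2t})$ governing $\psi^-_{2\lambda}$ and $\xi_{2\lambda}$ stay bounded as $t$ approaches the origin, which requires tracking the vanishing of the numerators against the simple zero of $1-e^{-2t}$ rather than estimating the two factors separately. A clean alternative, should this direct bookkeeping prove delicate, is to first establish holomorphy together with the majorant on $\mathcal{D}\setminus\{0\}$, then show that $\Omega$ is bounded in a punctured neighborhood of the origin, and finally invoke Riemann's removable singularity theorem to extend $\Omega$ holomorphically across $t=0$.
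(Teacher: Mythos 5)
Your proposal is correct and uses essentially the same ingredients as the paper: Lemma \ref{lem:bound} and the majorant construction of Proposition \ref{prop:MeromExtK} on compacts avoiding the origin, together with the observation that $-2g^2\coth(\tfrac{t}{2})+4g^2\cosh(t(1-\mu_{2\lambda}))/\sinh(t)$ and $\xi_{2\lambda}+\psi^-_{2\lambda}$ tend to $0$ as $t\to 0$ uniformly in the simplex variables. The paper treats the origin precisely by your stated \emph{clean alternative} --- boundedness on $\mathcal{D}\setminus\{0\}$ followed by Riemann's removable singularity theorem --- rather than by pushing the uniform estimates through $t=0$ directly, so the two arguments coincide in substance.
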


\begin{proof}
  Let \( \mathcal{D}^* \) be the region $\mathcal{D}$ without the origin and \(\mathcal{K}\) a compact region contained in
  \(\mathcal{D^*} \).
  By using Lemma \ref{lem:bound}, we see that \(\Omega(t)\) is bounded in compacts in $\mathcal{D}*$ as in
  the proof of Proposition \ref{prop:MeromExtK}.
  
  To complete the proof, we verify the behaviour of \(\Omega(t)\) at the apparent singularity at \(t = 0 \).
  It is immediately to verify that 
  \[
   \lim_{t\to 0} -2g^2 \coth(\tfrac{t}2)+  4g^2\frac{\cosh(t(1-\mu_{2\lambda}))}{\sinh(t)} = 0, \qquad \lim_{t \to 0}  \xi_{\lambda}(\bm{\mu_{\lambda}} ,t) + \psi_{\lambda}^{-}(\bm{\mu_{\lambda}},t) = 0,
 \]
 for \(\lambda \geq 1 \) and uniformly for \(0 \leq  \mu_1 \leq \mu_2 \leq \cdots \leq \mu_\lambda \leq s \leq 1\). Thus we see that \(\Omega(0) = 2 \) and 
 the result follows from Riemann continuation theorem.
  
\end{proof}

Recall that if \(\tau > \Delta + g^2 \) we have \(\lambda_j + \tau >0 \) (resp. \(\lambda^{\pm}_j + \tau >0\) ) for any \(j \in \Z_{\geq 1} \)
 and we have the following Mellin transform representation of the spectral zeta functions. 
 \begin{align*} 
   \zeta_{\text{QRM}}(s;\tau) & = \frac1{\Gamma(s)}\int_0^\infty t^{s-1}Z_{\text{Rabi}}(t)e^{-t\tau}dt, \nonumber \\  
   \zeta_{n\text{QRM}}^{\pm}(s;\tau) & = \frac1{\Gamma(s)}\int_0^\infty t^{s-1}Z_{\rm{Rabi}}^{\pm}(t)e^{-t\tau}dt.
\end{align*}
Then, using the standard argument for the Riemann zeta function (e.g. \cite{Ivic1985,T1951}) we give the proof of the analytic
continuation of the spectral zeta function $\zeta_{\text{QRM}}(s; \tau)$.  

\begin{proof}[Proof of Theorem \ref{IntRep_SZF}]
  
  First, notice that $\lim_{t\to\infty}\Omega(t)e^{-t\tau}=0$ for $\tau > g^2+\Delta$.  (this fact is known from \cite{Sugi2016} but it may be
  proved directly by the integral expression.) Therefore we see that $\zeta_{\text{QRM}}(s;\tau)$ is analytic when $\Re(s)>1$. Now, suppose
  $\Re(s) \geq 1+\delta$ for $\delta>0$. Then it is legitimate to change the contour of the integral to get
  \[
    \int_\infty^{(0+)} \frac{(-w)^{s-1} \Omega(w)e^{-\tau w}}{1-e^{-w}}dw = \{e^{\pi(s-1)i }-e^{-\pi(s-1)i }\}\int_0^{\infty}\frac{\rho^{s-1} \Omega(\rho)e^{-\rho\tau}}{1-e^{-\rho}}d\rho.
  \]
  Hence the formula \eqref{ContourSZF} follows. Since $\Omega(w)$ is holomorphic everywhere in the path, the integral is a (single-valued) analytic
  function of $s \in \C$. The expression \eqref{ContourSZF} shows that the only possible singularities of $\zeta_{\text{QRM}}(s;\tau)$ are at the singularities of
  $\Gamma(1-s)$, i.e. at the positive integer points. Since $\zeta_{\text{QRM}}(s;\tau)$ is analytic when $\Re(s)>1$, only singularity of $\zeta_{\text{QRM}}(s;\tau)$ is at the point $s=1$. Putting $s=1$ in the integral \eqref{ContourSZF}, we obtain
  \[
    \frac1{2\pi i}\int_\infty^{(0+)} \frac{\Omega(w)e^{-\tau w}}{1-e^{-w}}dw,
  \]
  which is the residue at $w=0$ of the integrand, and this residue is $\Omega(0)=2$. It follows that
  \[
    \lim_{s\to 1} \frac{\zeta_{\text{QRM}}(s;\tau)}{\Gamma(1-s)}=-2.
  \]
  Since $\Gamma(1-s)$ has a single pole at $s=1$ with residue $-1$, we observe that the only singularity of $\zeta_{\text{QRM}}(s;\tau)$ is a simple pole with residue $2$ at $s=1$. This completes the proof of the theorem.
\end{proof}

Since Lemma \ref{lem:bound} applies for \(\psi_{\lambda}^+\), from the proof of Proposition \ref{prop:holomorphy} it is
immediate to see that \(\Omega_{\text{odd}}(t)\) defined by the equation
\[
  Z_{\rm{Rabi}}^{\pm}(t) = \frac12 \left( \frac{\Omega(t)}{1-e^{- t}} \mp \frac{\Omega_{\text{odd}}(t)}{1+e^{- t}} \right).
\]
is holomorphic in the union of the right half plane and a disk of radius one centered at the origin. Therefore,
the proof of analytic continuation extends to the spectral function for the parity Hamiltonians \(H_{\pm}\).

While not needed for the main results of this paper, we give some basic properties of the spectral zeta functions
for QRM and the Hamiltonians from the parity decomposition. In particular, we describe the values of the
spectral zeta function for QRM (and the Hamiltonians of each parity) at the negative integers, given by the so-called
Rabi-Bernoulli polynomials \cite{Sugi2016,RW2020_SV}. 

By Theorem \ref{IntRep_SZF}, it is not difficult to obtain the following identity by differentiating $n$-times with respect to $\tau$ under the integral expression \eqref{ContourSZF}. 

\begin{cor}\label{cor:DD-relation}
We have 
\[
    \frac{\partial^n}{\partial \tau^n} \zeta_{\text{QRM}}(s;\tau) = (-1)^n (s)_n \zeta_{\text{QRM}}(s+n;\tau),  
\]
where \((a)_n = a (a+1) \cdots (a+n-1) \) is the Pochhammer symbol. The same relation holds for $\zeta_{\text{QRM}}^\pm(s;\tau)$. 
\end{cor}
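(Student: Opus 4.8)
The plan is to work directly from the Hankel-type contour integral representation \eqref{ContourSZF} of Theorem \ref{IntRep_SZF} and to differentiate $n$ times under the integral sign with respect to $\tau$. Writing the integrand as $(-w)^{s-1}\Omega(w)e^{-\tau w}/(1-e^{-w})$, the only $\tau$-dependence sits in the factor $e^{-\tau w}$, whose derivative is $\partial_\tau e^{-\tau w} = -w\,e^{-\tau w}$. First I would justify that the differentiation may be carried out under the integral: along the two horizontal rays of the contour running out to $+\infty$ the factor $\Omega(w)e^{-\tau w}$ decays exponentially (for $\tau > g^2 + \Delta$, by the estimate already used in the proof of Theorem \ref{IntRep_SZF}), so the integrand and all of its $\tau$-derivatives, which only acquire an extra polynomial factor $(-w)^n$, are dominated uniformly for $\tau$ in compact subsets by an integrable function on the contour; this legitimizes the interchange by dominated convergence.

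Carrying out the differentiation, each application of $\partial/\partial\tau$ produces a factor $-w$, and after $n$ steps the polynomial prefactor becomes $(-w)^n (-w)^{s-1} = (-w)^{s+n-1}$, the collapse of branches being consistent because multiplication by $-w$ keeps us within the prescribed range $|\arg(-w)|\le\pi$. Hence
\[
  \frac{\partial^n}{\partial\tau^n}\zeta_{\text{QRM}}(s;\tau) = -\frac{\Gamma(1-s)}{2\pi i}\int_\infty^{(0+)}\frac{(-w)^{s+n-1}\Omega(w)e^{-\tau w}}{1-e^{-w}}\,dw.
\]
Comparing the remaining integral with the representation \eqref{ContourSZF} evaluated at $s+n$, I would recognize it as $\zeta_{\text{QRM}}(s+n;\tau)$ up to the prefactor $\Gamma(1-(s+n)) = \Gamma(1-s-n)$, giving
\[
  \frac{\partial^n}{\partial\tau^n}\zeta_{\text{QRM}}(s;\tau) = \frac{\Gamma(1-s)}{\Gamma(1-s-n)}\,\zeta_{\text{QRM}}(s+n;\tau).
\]

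The last step is purely algebraic. Repeated use of the functional equation $\Gamma(z+1)=z\Gamma(z)$ gives $\Gamma(1-s)=(-s)(-s-1)\cdots(-s-n+1)\,\Gamma(1-s-n)$, so that $\Gamma(1-s)/\Gamma(1-s-n) = (-1)^n s(s+1)\cdots(s+n-1) = (-1)^n (s)_n$, which yields the claimed identity. Since the contour integral defines a single-valued analytic function of $s$, this holds as an identity of meromorphic functions on the whole $s$-plane by analytic continuation. For the parity case $\zeta_{\text{QRM}}^{\pm}(s;\tau)$ the identical argument applies termwise to the two-term integrand of the Corollary to Theorem \ref{IntRep_SZF}, using that $\Omega_{\text{odd}}(w)e^{-\tau w}$ enjoys the same exponential decay; the factor $-w$ is brought down from both summands, so no new computation is required.

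The routine but essential point, and the only place I would flag as needing care, is the justification of differentiation under the integral sign on the non-compact Hankel contour, which rests on the exponential decay of $\Omega(w)e^{-\tau w}$ (and $\Omega_{\text{odd}}(w)e^{-\tau w}$) established earlier; once that is granted, the manipulation of the gamma factors into the Pochhammer symbol $(s)_n$ is immediate.
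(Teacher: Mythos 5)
Your proposal is correct and is exactly the argument the paper intends: the corollary is obtained by differentiating $n$ times under the contour integral of Theorem \ref{IntRep_SZF}, picking up the factor $(-w)^n$ from $e^{-\tau w}$, and converting $\Gamma(1-s)/\Gamma(1-s-n)$ into $(-1)^n(s)_n$. You simply supply the details (justification of differentiation under the integral and the gamma-factor algebra) that the paper leaves implicit.
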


\begin{rem}
  For the Hurwitz zeta function \(\zeta(s;a)\), the identity $\frac{\partial^n}{\partial \tau^n} \zeta(s;\tau) = - (s)_n \zeta(s+n;\tau)$ also follows immediately from its very definition (series expression).
\end{rem}

Next, observe that in the special case $s=n \in \Z$, the quotient $\frac{(-w)^{s-1} \Omega(w)e^{-\tau w}}{1-e^{-w}}$ is a single valued function of $w$. Consequently,  by the Cauchy integral formula, we see that $\int_\infty^{(0+)} \frac{(-w)^{n-1} \Omega(w)e^{-\tau w}}{1-e^{-w}}dw$ is the residue of the integrand at $w=0$,
that is, it is the coefficient of $w^{-n}$ in $\frac{(-1)^{n-1} \Omega(w)e^{-\tau w}}{1-e^{-w}}$.

We now define the $k$th Rabi-Bernoulli polynomials $(RB)_k(\tau, g^2, \Delta)$ (according to the naming in \cite{Sugi2016}, see Remark \ref{RBP} below). Notice that when $\Delta=0$, the $k$th Rabi-Bernoulli polynomial is equal to the $g^2$-shift $B_k(\tau-g^2)$ of the $k$th Bernoulli polynomial $B_k(\tau)$.

\begin{dfn}\label{Rabi-Bernoulli}
  The $k$-th Rabi-Bernoulli polynomial $(RB)_k(\tau, g^2, \Delta) \in \R[\tau, g^2, \Delta^2]$ is defined through the equation  
  \begin{equation*}
    \frac{w\Omega(w)e^{-\tau w}}{1-e^{-w}}= 2 \sum_{k=0}^\infty\frac{(-1)^k(RB)_k(\tau, g^2, \Delta)}{k!}w^k.
  \end{equation*}
  Similarly, the $k$-th positive (resp. negative) parity Rabi-Bernoulli polynomial $(RB)_k^\pm(\tau, g^2, \Delta) \in \R[\tau, g^2, \Delta]$ is defined by the generating function
  \begin{equation*}
    \frac12\Bigg[
    \frac{w\Omega(w)e^{-\tau w}}{1-e^{-w}}
    \mp \frac{w\Omega_{\rm{odd}}(w)e^{-\tau w}}{1+e^{-w}}\Bigg]
    = \sum_{k=0}^\infty\frac{(-1)^k(RB)_k^\pm(\tau, g^2, \Delta)}{k!}w^k.
\end{equation*}
  
\end{dfn}

The special values of the spectral zeta functions at the negative integers are then obtained in the
usual way.

\begin{lem}\label{SVatNegative}
  We have, for \(k \geq 1 \),
  \begin{align*}
    \zeta_{\text{QRM}}(1-k;\tau) = -\frac2k (RB)_k(\tau, g^2, \Delta^2).
  \end{align*}
  and
  \begin{align*}
    \zeta_{\text{QRM}}^\pm(1-k;\tau)= -\frac1k (RB)_k^\pm(\tau, g^2, \Delta). 
\end{align*}
\end{lem}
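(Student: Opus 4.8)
The plan is to read off the special values directly from the contour integral representation in Theorem~\ref{IntRep_SZF} (and the companion formula for $\zeta_{\text{QRM}}^{\pm}$ stated in the corollary immediately following it) by specializing $s$ to the non-positive integers and matching against the generating function in Definition~\ref{Rabi-Bernoulli}. Setting $s = 1-k$ with $k \geq 1$, the factor $(-w)^{s-1} = (-w)^{-k} = (-1)^k w^{-k}$ becomes a single-valued meromorphic function of $w$, so the branch cut that forced the Hankel contour disappears. As observed just before Definition~\ref{Rabi-Bernoulli}, the two rays of the contour running to $\infty$ along the real axis then carry equal integrand values and cancel, and the integral collapses to $2\pi i$ times the residue of the integrand at $w = 0$; convergence at infinity is guaranteed by the decay $\Omega(w)e^{-\tau w} \to 0$ established in the proof of Theorem~\ref{IntRep_SZF}.

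First I would compute this residue. The residue at $w=0$ of $(-1)^k w^{-k}\,\frac{\Omega(w)e^{-\tau w}}{1-e^{-w}}$ equals $(-1)^k$ times the coefficient of $w^{k-1}$ in $\frac{\Omega(w)e^{-\tau w}}{1-e^{-w}}$. Definition~\ref{Rabi-Bernoulli} rewrites this last quantity as $\frac{2}{w}\sum_{m\geq 0}\frac{(-1)^m (RB)_m(\tau,g^2,\Delta)}{m!}w^m$, whose $w^{k-1}$-coefficient is $2\,\frac{(-1)^k (RB)_k(\tau,g^2,\Delta)}{k!}$. Multiplying by $(-1)^k$ yields a residue of $\frac{2\,(RB)_k(\tau,g^2,\Delta)}{k!}$.

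Assembling, I would substitute into Theorem~\ref{IntRep_SZF}:
\[
  \zeta_{\text{QRM}}(1-k;\tau) = -\frac{\Gamma(k)}{2\pi i}\cdot 2\pi i\cdot\frac{2\,(RB)_k}{k!} = -\Gamma(k)\,\frac{2\,(RB)_k}{k!} = -\frac{2}{k}(RB)_k,
\]
using $\Gamma(1-s)\big|_{s=1-k} = \Gamma(k) = (k-1)!$ and $\Gamma(k)/k! = 1/k$; this is finite precisely because $k\geq 1$ keeps us away from the poles of $\Gamma(1-s)$ at $s=1,2,\ldots$. The parity statement follows by the identical argument applied to the integral in the corollary, replacing $\frac{\Omega(w)e^{-\tau w}}{1-e^{-w}}$ by $\frac{\Omega(w)e^{-\tau w}}{1-e^{-w}} \mp \frac{\Omega_{\text{odd}}(w)e^{-\tau w}}{1+e^{-w}}$ and using the parity generating function in Definition~\ref{Rabi-Bernoulli}; here the prefactor $\frac{1}{4\pi i}$ together with the factor $2$ built into that generating function produces $-\frac{1}{k}(RB)_k^{\pm}$.

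I do not expect a serious obstacle: the content is entirely the translation between a Cauchy residue and a Taylor coefficient of a generating function, exactly as in Riemann's evaluation of $\zeta(-k)$ via Bernoulli numbers. The only points requiring care are the bookkeeping of the signs coming from $(-1)^k$ and of the factor $2\pi i$, and the justification that the contour genuinely reduces to a residue, which rests on the single-valuedness at integer $s$ and on the exponential decay of $\Omega(w)e^{-\tau w}$ already available from Proposition~\ref{prop:holomorphy} and the proof of Theorem~\ref{IntRep_SZF}. The discrepancy between the arguments $\Delta^2$ and $\Delta$ in the two displayed formulas is merely a reflection of the fact, noted in Definition~\ref{Rabi-Bernoulli}, that the full Rabi-Bernoulli polynomial depends on $\Delta$ only through $\Delta^2$, and requires no separate argument.
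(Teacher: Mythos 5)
Your proposal is correct and follows essentially the same route as the paper: substitute $s=1-k$ into the contour integral of Theorem~\ref{IntRep_SZF}, use the single-valuedness of $(-w)^{-k}$ to collapse the Hankel contour to $2\pi i$ times the residue at $w=0$, and read that residue off as a Taylor coefficient of the generating function in Definition~\ref{Rabi-Bernoulli}; the sign and factor bookkeeping, including the $\Gamma(k)/k!=1/k$ simplification and the $1/(4\pi i)$ prefactor in the parity case, all match the paper's computation. Your write-up is in fact more explicit than the paper's about why the contour reduces to a residue and about the $\Delta$ versus $\Delta^2$ notation.
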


\begin{proof}
  We have
  \begin{align*}
    \zeta_{\text{QRM}}(1-k;\tau) &= \frac{(-1)^{k+1}\Gamma(k)}{2\pi i}
                            \int_\infty^{(0+)} \frac1{w^{k+1}}\frac{w\Omega(w)e^{-\tau w}}{1-e^{-w}}dw \\
                          & = 2\cdot\frac{(-1)^{k+1}\Gamma(k)}{k!} (-1)^k (RB)_k(\tau, g^2, \Delta^2) \\
                          & = -\frac2k (RB)_k(\tau, g^2, \Delta^2),
  \end{align*}
  as desired. The proof for \(\zeta_{\text{QRM}}^{\pm}(1-k;\tau) \) is analogous.
\end{proof}

From the lemma above, it is obvious that
\begin{equation*}
  2 (RB)_k(\tau, g^2, \Delta^2)= (RB)_k^+(\tau, g^2, \Delta)+ (RB)_k^-(\tau, g^2, \Delta).
\end{equation*}

\begin{rem} \label{RBP}
From the expression $\zeta_{\text{QRM}}(1-k;\tau)$ above, we find that $(RB)_k(\tau, g^2, \Delta)$ is identical with the Rabi-Bernoulli polynomials $R_k(g,\Delta,\tau)$ in (1.1) of \cite{Sugi2016}:
\[
  R_k(g,\Delta,\tau)= (RB)_k(\tau, g^2, \Delta).
\]
According to the result in \cite{Sugi2016}, we have $(RB)_k(\tau, g^2, \Delta) \in \Q[g^2, \Delta^2, \tau]$. The $k$th Rabi-Bernoulli polynomial is monic and its degree with respect to the variable $\tau$ is $k$ (see also Theorem \ref{thm:Rationality_RB} below). Also, $(RB)_k(\tau, 0, 0)$ is equal to the Bernoulli polynomial $B_k(\tau)$. The coefficient $2$ appearing at the definition of the Rabi-Bernoulli polynomials is considered to be the effect of the two-by-two system Hamiltonian. 

\end{rem}

The following simple difference-differential equation satisfied by the Rabi-Bernoulli polynomials is a consequence of Lemmas \ref{SVatNegative} and \ref{cor:DD-relation}.

\begin{lem} 
  We have 
  \begin{equation}\label{DD-equation}
    \frac{\partial}{\partial \tau}(RB)_{k+1}(\tau, g^2, \Delta) = - (k+1 ) (RB)_k(\tau, g^2, \Delta)  
  \end{equation} 
  for $k=0,1,2,\ldots$  \qed
\end{lem}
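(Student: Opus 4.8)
The plan is to read the relation off the generating function of Definition \ref{Rabi-Bernoulli}, using the zeta-theoretic route through Lemma \ref{SVatNegative} and Corollary \ref{cor:DD-relation} as the equivalent bookkeeping dictated by the indexing of those statements. Write $F(\tau,w) := \frac{w\Omega(w)e^{-\tau w}}{1-e^{-w}}$, so that by definition $F(\tau,w) = 2\sum_{k\geq 0}\frac{(-1)^k (RB)_k(\tau, g^2,\Delta)}{k!}\,w^k$. The variable $\tau$ enters $F$ only through the factor $e^{-\tau w}$, so the single identity driving everything is $\frac{\partial}{\partial\tau}F(\tau,w) = -w\,F(\tau,w)$. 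First I would substitute the power series into both sides of this identity and compare the coefficient of $w^{k+1}$: termwise $\tau$-differentiation replaces $(RB)_{k+1}$ by $\frac{\partial}{\partial\tau}(RB)_{k+1}$, whereas multiplication by $-w$ shifts the index of $(RB)_k$, and reconciling the factorials $(k+1)!$ against $k!$ together with the alternating signs $(-1)^{k+1}$ against $(-1)^k$ produces the asserted recurrence. This step is purely formal and holds for every $k\geq 0$ at once, so no separate base case arises; the uniform convergence on the disc of radius $r<\pi$ from Proposition \ref{prop:holomorphy} justifies the termwise differentiation.

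To obtain the relation in the advertised form, namely as a consequence of Lemma \ref{SVatNegative} and Corollary \ref{cor:DD-relation}, I would instead proceed as follows. Apply Corollary \ref{cor:DD-relation} with $n=1$ to get $\frac{\partial}{\partial\tau}\zeta_{\text{QRM}}(s;\tau) = -s\,\zeta_{\text{QRM}}(s+1;\tau)$, and specialize to the negative integer $s=-k$, which gives $\frac{\partial}{\partial\tau}\zeta_{\text{QRM}}(-k;\tau) = k\,\zeta_{\text{QRM}}(1-k;\tau)$. Then, for $k\geq 1$, invoke Lemma \ref{SVatNegative} twice: once at index $k+1$ (using $1-(k+1)=-k$) to turn the left-hand side into a multiple of $\frac{\partial}{\partial\tau}(RB)_{k+1}$, and once at index $k$ to rewrite $\zeta_{\text{QRM}}(1-k;\tau)$ as a multiple of $(RB)_k$. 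Cancelling the explicit $\Gamma$- and power-of-two normalizations coming from the two invocations then yields \eqref{DD-equation}. The identical argument applies to the parity polynomials $(RB)_k^\pm$ via the parity statement of Lemma \ref{SVatNegative} together with the parity version of Corollary \ref{cor:DD-relation}.

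The only nontrivial point, and where I expect all the care to be needed, is the tracking of signs and constants rather than any analytic difficulty: the Pochhammer factor $(s)_1=s$ evaluated at $s=-k$ contributes one factor, and the two normalization constants $-\tfrac{2}{k+1}$ and $-\tfrac{2}{k}$ supplied by Lemma \ref{SVatNegative} supply the rest, so one must verify that these combine to leave exactly the coefficient $(k+1)$ with the sign recorded in \eqref{DD-equation}. A secondary, minor point for the zeta route is that Lemma \ref{SVatNegative} is stated only for $k\geq 1$, so the value $k=0$ must be checked separately by computing $(RB)_0$ and $(RB)_1$ directly from Definition \ref{Rabi-Bernoulli}; the generating-function argument of the first paragraph sidesteps this entirely by handling all $k\geq 0$ uniformly, and I would present it as the primary proof with the zeta route included as the stated consequence.
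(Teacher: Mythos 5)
Your two routes are both the right idea, and the second one is precisely the paper's own proof: the lemma is stated there as an immediate consequence of Lemma \ref{SVatNegative} and Corollary \ref{cor:DD-relation}, exactly as you outline. The generating-function argument you propose as primary is a clean equivalent, since $\partial_\tau F(\tau,w)=-wF(\tau,w)$ encodes the same information as the pair (special values, derivative relation). However, the one step you explicitly defer --- ``one must verify that these combine to leave exactly the coefficient $(k+1)$ with the sign recorded in \eqref{DD-equation}'' --- is the step that fails. Carrying out either computation gives the \emph{opposite} sign to the statement. Via the zeta route: Corollary \ref{cor:DD-relation} at $n=1$, $s=-k$ gives $\partial_\tau\zeta_{\text{QRM}}(-k;\tau)=k\,\zeta_{\text{QRM}}(1-k;\tau)$; substituting $\zeta_{\text{QRM}}(-k;\tau)=-\tfrac{2}{k+1}(RB)_{k+1}$ and $\zeta_{\text{QRM}}(1-k;\tau)=-\tfrac{2}{k}(RB)_k$ yields $-\tfrac{2}{k+1}\partial_\tau(RB)_{k+1}=-2(RB)_k$, i.e.\ $\partial_\tau(RB)_{k+1}=+(k+1)(RB)_k$. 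Via the generating function: equating the coefficient of $w^{k+1}$ in $\partial_\tau F=-wF$ gives $\tfrac{(-1)^{k+1}}{(k+1)!}\partial_\tau(RB)_{k+1}=-\tfrac{(-1)^{k}}{k!}(RB)_k$, and the two minus signs cancel, again producing $+(k+1)$.

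So the statement as printed carries a sign error, and your proposal, executed honestly, proves a different (correct) identity rather than \eqref{DD-equation}. The paper's own data confirm the positive sign: $(RB)_0=1$, $(RB)_1=\tau-\tfrac12-g^2$, so $\partial_\tau(RB)_1=+(RB)_0$, and the Example itself records $(RB)_1=\tfrac12\partial_\tau(RB)_2$; moreover Theorem \ref{thm:Rationality_RB} asserts the polynomials are monic of degree $k$ in $\tau$, which is incompatible with a negative sign in the recurrence. The fix is either to correct the statement to $\partial_\tau(RB)_{k+1}=(k+1)(RB)_k$ or to note the discrepancy explicitly; what you cannot do is assert that the factorials and alternating signs ``produce the asserted recurrence'' without checking, because here they do not. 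Your secondary observation --- that the zeta route needs $k\ge 1$ in Lemma \ref{SVatNegative} and hence a separate check at $k=0$, which the generating-function route avoids --- is correct and worth keeping.
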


The explicit formula of \(\Omega(t) \) allows us to give another proof to the rationality of the coefficients of the
Rabi-Bernoulli polynomials \((RB)_k(\tau, g^2, \Delta^2)\) (proved originally in \cite{Sugi2016}) and to extend the result to
the polynomials \((RB)_k^{\pm}(\tau, g^2, \Delta) \).

\begin{thm}\label{thm:Rationality_RB}
  The Rabi-Bernoulli polynomials \((RB)_k(\tau, g^2, \Delta^2)\) (resp. \((RB)_k^{\pm}(\tau, g^2, \Delta)\) )
  as polynomials in \(\Delta^2,g^2\) and \(\tau\) (resp. in \(\Delta,g^2\) and \(\tau\)) are rational numbers. That is, \((RB)_k(\tau, g^2, \Delta) \in \Q[g^2, \Delta^2, \tau]\). Similarly, we have 
  \( (RB)_k^{\pm}(\tau, g^2, \Delta) \in \Q[g^2, \Delta, \tau]\). Moreover, the degree of the  Rabi-Bernoulli polynomials \((RB)_k(\tau, g^2, \Delta^2)\) (resp. \((RB)_k^{\pm}(\tau, g^2, \Delta)\)) with respect to the variable $\tau$ is exactly equal to $k$.
\end{thm}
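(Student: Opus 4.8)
The plan is to read off $(RB)_k$ and $(RB)_k^{\pm}$ as Taylor coefficients at $w=0$ of the generating functions in Definition \ref{Rabi-Bernoulli} and to track rationality through each factor separately. The factors $\frac{w}{1-e^{-w}}$ and $\frac{w}{1+e^{-w}}$ are holomorphic at $w=0$ with rational Taylor coefficients (indeed $\frac{w}{1-e^{-w}}=\sum_n(-1)^nB_n\frac{w^n}{n!}$ is the Bernoulli generating series up to $w\mapsto-w$, and $\frac{w}{1+e^{-w}}$ is handled identically), while $e^{-\tau w}=\sum_n\frac{(-\tau)^n}{n!}w^n$ has coefficients in $\Q[\tau]$. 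Hence it suffices to prove $\Omega(w)\in\Q[g^2,\Delta^2][[w]]$ and that $\Omega_{\mathrm{odd}}(w)$ lies in $\Q[g^2,\Delta][[w]]$ with $\Delta$ occurring only to odd powers; the stated membership then follows since $(RB)_k$ (resp. $(RB)_k^{\pm}$) equals, up to the rational factor $\frac{(-1)^kk!}{2}$ (resp. $(-1)^kk!$), the coefficient of $w^k$ in the corresponding generating function.

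The heart of the argument is the analysis of $\Omega(w)$. Fix $\lambda\geq1$ and consider the $\lambda$-th summand; its integrand is $\exp(g^2\Phi(w,\bm{\mu_{2\lambda}}))$, where $\Phi=-2\coth(\tfrac w2)+4\frac{\cosh(w(1-\mu_{2\lambda}))}{\sinh(w)}+g^{-2}(\xi_{2\lambda}+\psi^-_{2\lambda})$ is, being $g^{-2}$ times a sum of hyperbolic expressions, independent of $g$. First I would record that every hyperbolic function occurring, together with the reciprocals $\frac1{\sinh(w)}$ and $\coth(\tfrac w2)$, has a Laurent expansion at $w=0$ with rational coefficients depending polynomially on the $\mu_\gamma$ (through $\cosh(cw),\sinh(cw)$ with $c\in\Q[\bm{\mu_{2\lambda}}]$). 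In the combination forming $\Phi$ the simple poles at $w=0$ cancel — this is exactly the content of the two limits computed in the proof of Proposition \ref{prop:holomorphy}, which moreover show $\Phi(0,\bm{\mu_{2\lambda}})=0$. Consequently $\Phi(w,\bm{\mu_{2\lambda}})=\sum_{n\geq1}p_n(\bm{\mu_{2\lambda}})w^n$ with $p_n\in\Q[\bm{\mu_{2\lambda}}]$. Since $\Phi=O(w)$, for each fixed $k$ only the terms with $m\leq k$ of $\exp(g^2\Phi)=\sum_m\frac{(g^2)^m}{m!}\Phi^m$ contribute to $w^k$, so $\exp(g^2\Phi)=\sum_k q_k(\bm{\mu_{2\lambda}},g^2)w^k$ with $q_k\in\Q[g^2,\bm{\mu_{2\lambda}}]$ a genuine polynomial in $g^2$. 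Integration term by term over the simplex is legitimate by the uniform estimates of Lemma \ref{lem:bound}, and each monomial integral $\idotsint_{0\le\mu_1\le\cdots\le\mu_{2\lambda}\le1}\mu_1^{a_1}\cdots\mu_{2\lambda}^{a_{2\lambda}}\,d\bm{\mu_{2\lambda}}$ is rational, whence $\idotsint \exp(g^2\Phi)\,d\bm{\mu_{2\lambda}}=\sum_k r_k(g^2)w^k$ with $r_k\in\Q[g^2]$. Multiplying by $e^{g^2w}\in\Q[g^2][[w]]$ and by $(\Delta w)^{2\lambda}=\Delta^{2\lambda}w^{2\lambda}$, then summing over $\lambda$ — where for each power $w^j$ only $\lambda\le j/2$ contribute, so the sum is locally finite — yields $\Omega(w)\in\Q[g^2,\Delta^2][[w]]$. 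The identical argument with $\psi^+_{2\lambda+1}$ replacing $\psi^-_{2\lambda}$ (Lemma \ref{lem:bound} applies to $\psi^+$) and prefactor $(\Delta w)^{2\lambda+1}$ gives $\Omega_{\mathrm{odd}}(w)\in\Q[g^2,\Delta][[w]]$ with only odd powers of $\Delta$, and combining the two generating functions proves $(RB)_k\in\Q[g^2,\Delta^2,\tau]$ and $(RB)_k^{\pm}\in\Q[g^2,\Delta,\tau]$.

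For the degree in $\tau$, I would observe that $\tau$ enters only through $e^{-\tau w}$, so the coefficient of $w^k$ in each generating function has $\tau$-degree at most $k$, the coefficient of $\tau^k$ arising by pairing $\frac{(-\tau)^k}{k!}$ with the $w^0$-coefficients of the remaining factors. For $(RB)_k$ this pairing equals $\frac{w}{1-e^{-w}}\big|_{w^0}\cdot\Omega(0)=1\cdot2\neq0$, giving degree exactly $k$ and, after normalization by $\frac{(-1)^kk!}{2}$, leading coefficient $1$, i.e. monic. For $(RB)_k^{\pm}$ the second generating term contributes nothing to $\tau^k$ because $\frac{w}{1+e^{-w}}\big|_{w^0}=0$, so the first term alone forces degree exactly $k$ and monicity, recovering Remark \ref{RBP}.

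I expect the main obstacle to be the bookkeeping in the second paragraph: one must simultaneously justify that the principal parts cancel so that $\Phi$ is a genuine power series with rational coefficients, that $\Phi=O(w)$ so the $g^2$-dependence remains polynomial rather than becoming a power series, and that interchanging Taylor expansion with the simplex integration and with the sum over $\lambda$ is valid — the latter two secured by Lemma \ref{lem:bound} and by the order-$w^{2\lambda}$ vanishing of the $\lambda$-th term, respectively. Once these are in place the rationality and degree conclusions are formal.
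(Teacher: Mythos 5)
Your proposal is correct, and the rationality half follows essentially the same route as the paper: expand the integrand of the $\lambda$-th term of $\Omega$ as a power series in $w$ with coefficients in $\Q[g^2,\bm{\mu_{2\lambda}}]$, integrate termwise over the simplex (rational monomial integrals), use the order-$w^{2\lambda}$ vanishing to make the $\lambda$-sum locally finite, and conclude $\Omega(w)\in\Q[g^2,\Delta^2][[w]]$, with the identical argument for $\Omega_{\mathrm{odd}}$; your extra care about the cancellation of the poles of $\coth(\tfrac w2)$ and $1/\sinh(w)$ at $w=0$ and about $\Phi=O(w)$ keeping the $g^2$-dependence polynomial is welcome but implicit in the paper. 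Where you genuinely diverge is the degree statement: the paper deduces $\deg_\tau(RB)_k\le k$ from the relation $\frac{\partial^{n}}{\partial\tau^{n}}\zeta_{\mathrm{QRM}}(1-k;\tau)=(-1)^n(1-k)_n\,\zeta_{\mathrm{QRM}}(1-k+n;\tau)$ (Corollary \ref{cor:DD-relation}) and gets the exact degree from the non-vanishing of the $k$-th derivative, which it must extract from the simple pole of $\zeta_{\mathrm{QRM}}(s;\tau)$ at $s=1$ via a Laurent-expansion argument (an indeterminate $0\cdot\infty$ that needs the residue $2\ne 0$). You instead read the $\tau^k$-coefficient of the $w^k$-coefficient directly off the generating function as $\frac{(-\tau)^k}{k!}$ times the constant term $\frac{w}{1-e^{-w}}\Omega(w)\big|_{w=0}=\Omega(0)=2$, and note that the odd-parity term contributes nothing to $\tau^k$ since $\frac{w}{1+e^{-w}}$ vanishes at $w=0$. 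This is more elementary, bypasses the zeta-function machinery entirely, and yields monicity (stated in the paper only in Remark \ref{RBP} with a citation) as a free byproduct; what it does not give you is the difference-differential relation itself, which the paper's route produces along the way and uses elsewhere.
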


\begin{proof}
  We prove the theorem only for the polynomials \((RB)_k(\tau, g^2, \Delta^2)\) since the proof for the case of the polynomials  \((RB)_k^{\pm}(\tau, g^2, \Delta) \) is
  completely analogous.

  Let \(\lambda \geq 1 \) with \(\lambda \equiv 0 \pmod{2} \). Expanding the exponentials as power series in the variable \(t\) we see that
  \[
    e^{\phi(\mu_{2 \lambda},t)+   \xi_{2\lambda}(\bm{\mu_{2\lambda}},t) +\psi^-_{2 \lambda} (\bm{\mu_{\lambda}},t)} \in \Q[g^2,\mu_1\mu_2,\cdots,\mu_\lambda][[t]].
  \]
  Then, termwise integration yields
  \[
    (\Delta t)^{2 \lambda} \idotsint\limits_{0\leq \mu_1 \leq \cdots \leq \mu_{2 \lambda} \leq 1} e^{\phi(\mu_{2 \lambda},t)+  \xi_{2\lambda}(\bm{\mu_{2\lambda}},t) +\psi^-_{2 \lambda} (\bm{\mu_{\lambda}},t)} d \bm{\mu_{2\lambda}} \in \Q[g^2,\Delta^{2}][[t]],
  \]
  and the minimum degree of any monomial appearing in the power series is at least \(2 \lambda \). It follows that \(\Omega(t) \in \Q[g^2,\Delta^{2}][[t]] \)
  since the coefficient of any given degree \(n\) is the sum of a finite number of elements of \(\Q[g^2,\Delta^{2}]\) and similarly,
  \[
    \frac{w\Omega(w)e^{-\tau w}}{1-e^{-w}} \in \Q[g^2,\Delta^{2},\tau][[w]],
  \]
  and the result follows by comparing coefficients in the definition of \( (RB)_k(\tau, g^2, \Delta^2)\).

 We notice that the only difference on the dependence of $\Delta$ from $(RB)_k(\tau, g^2, \Delta)$ is the contribution of $\Omega_{\text{odd}}(w)$ to the definition of $(RB)_k^{\pm}(\tau, g^2, \Delta)$.  
 
Further, by Lemma \ref{cor:DD-relation}, we have 
\[
    \frac{\partial^{n}}{\partial \tau^n} \zeta_{\text{QRM}}(1-k;\tau) = (-1)^n (1-k)_n \zeta_{\text{QRM}}(1-k+n;\tau).  
\]
This shows that $\frac{\partial^{k+1}}{\partial \tau^{k+1}} \zeta_{\text{QRM}}(1-k;\tau)=0$. Hence, the degree of \((RB)_k(\tau, g^2, \Delta^2)\) with respect to $\tau$ is at most $k$.  Also, since $\zeta_{\text{QRM}}(s;\tau)$ has a simple pole at $s=1$ (with non-zero residue), and looking at the fact that radius of the circle at $1$ (for the Laurent expansion) can be taken larger than $1$ we see that $(k-n) \zeta_{\text{QRM}}(1-k+n;\tau)|_{n=k} \not=0$. It follows that $\frac{\partial^{k}}{\partial \tau^{k}} \zeta_{\text{QRM}}(1-k;\tau)\not=0$. This proves the desired result for the degree with respect to $\tau$.
\end{proof}

\begin{ex}
We give here for reader's convenience the first and second Rabi-Bernoulli polynomials which are already given in \cite{Sugi2016} (Proposition 5.2 and 6.2). Since we define the Rabi-Bernoulli polynomials by the generating function \eqref{Rabi-Bernoulli}, we can compute these polynomials (at $\tau=0$) directly from the series expansion of the partition function $Z_{\text{Rabi}}(w)=\Omega(w)/(1-e^{-w})$ (the computation is essentially equivalent with the one in \cite{Sugi2016}). Actually, note first that 
$$
wZ_{\text{Rabi}}(w)=\frac{w\Omega(w)}{1-e^{-w}}
= \Omega(w)\Big[1+ \big(\frac{w}2 - \frac{w^2}6 + \cdots\big) + \big(\frac{w}2 - \frac{w^2}6 + \cdots\big)^2+ \cdots\Big] 
$$
by taking small enough $w$. Since $\Omega(0)=2$, we have $(RB)_0(\tau, g^2, \Delta) =1$. Then, using integration (due to the relation \eqref{DD-equation}) and observing the first few terms' expansion of $\Omega(w)$ at $w=0$ gives    
\begin{align*}
(RB)_1(\tau, g^2, \Delta) & = \tau - \frac12 - g^2 \, \big(= \frac12\frac{\partial}{\partial \tau}(RB)_2(\tau, g^2, \Delta)\big),\\
(RB)_2(\tau, g^2, \Delta) & = \tau^2 -(1+2g^2)\tau+\frac16+g^2+g^4+\Delta^2.
\end{align*}
For the explicit formula for the third $(RB)_3(\tau, g^2, \Delta)$, see Proposition 6.6 in \cite{Sugi2016}. By means of this procedure, in principle, it is clear that we can compute the Rabi-Bernoulli polynomials explicitly but do not have a general formula in $k$. Thus, apart from the equation \eqref{DD-equation}, it is desirable to obtain a certain recursion formula among these Rabi-Bernoulli polynomials similarly to the Bernoulli one if any, e.g. from the Heun ODE \cite{SL2000} viewpoint. We will return this problem in the future. 
\end{ex}
  
\begin{rem}
The main idea behind the study of spectral zeta functions for quantum (interaction) models is that while detailed information about eigenvalues of a system is difficult to obtain, useful information about the complete spectrum may be elucidated from the analytic properties of the zeta function. The approach is similar to the partition function of a system as we see in this paper, and in fact, both the partition function and the spectral zeta function are intimately related via the Mellin transform. In addition, the spectral zeta function may possess interesting number theoretical properties, as in the case of the NCHO (see \cite{KW2007, LOS2016PAMS, L2018, KW2019}) by considering special values at integer points. We will study the special values of the spectral zeta function in the forthcoming paper \cite{RW2020_SV}. 
\end{rem}

 \section{Confluent Heun picture and $G$-functions of the QRM} \label{sec:bargm-space-confl}

In this Appendix we give a brief introduction to the confluent picture of the QRM via the Bargmann space and to the $G$-functions used to prove its integrability in
\cite{B2011PRL}. We follow the discussion in \cite{B2011PRL-OnlineSupplement} and suggest the reader to consult
  \cite{B2013MfI, KRW2017, Reyes2018PhD} for more details.

We introduce first the Bargmann space (or Segal-Bargmann space). We refer the reader to \cite{Sc1967AP} for an extended discussion on the application of
Bargmann space to spectral problems. In this section we use the notation \(\partial_z := \frac{d }{d z} \).

Denote by \(\mathcal{V}(\C)\) the space of entire functions \(f : \C \to \C \). In \(\mathcal{V}(\C)\) we have an inner-product defined
for \(f,g \in \mathcal{V}(\C)\) by
\begin{equation*}
  (f,g)_{\mathcal{B}} = \int_\C \overline{f(z)}g(z) d\mu(z)
\end{equation*}
where \( d \mu(z) = \frac{1}{\pi}e^{-|z|^2} dx dy\) for \(z = x + i y\), and \(d x dy  \) is the Lebesgue measure in \( \C \simeq \R^2\).

The Bargmann space \(\mathcal{B}\) is the space of functions in \(\mathcal{V}(\C)\) satisfying
\[
 \| f \|_{\mathcal{B}} =  (f,f)_{\mathcal{B}}^{1/2} = \left( \int_\C |f(z)|^2 d\mu(z) \right)^{1/2}   < \infty.
\]

It is known that the Bargmann space \(\mathcal{B}\) is a complete Hilbert space unitarily equivalent to the
\(L^2(\R)\) Hilbert space by the Stone-von Neumann theorem (the inverse of the map is the Segal-Bargmann transform).

An important property of the Bargmann space is that it contains entire functions \(f\) having asymptotic expansion of the form
\begin{equation}
  \label{eq:solasympt}
  f(z) = e^{\alpha_1 z} z^{-\alpha_0}(c_0 + c_1 z^{-1} + c_2 z^{-2} + \cdots ),  
\end{equation}
as \(z \to \infty \). In particular, normal solutions of differential equations having and unramified singular point of rank \(2\)
at infinity are included.

The creation and annihilation operators \(a\) and \(a^{\dag}\) are realized in Bargmann space respectively as the differentiation and multiplication
operators, that is
\[
  a \to \partial_z, \qquad a^{\dag} \to z.
\]

The concrete realization of \(\HRabi\) as an operator acting on \(\mathcal{H}_{\mathcal{B}} = \mathcal{B}\otimes \C^2\)  is given by
\[
  \HRabi =
  \begin{bmatrix}
    z \partial_z + \Delta & g (z + \partial_z) \\
    g (z + \partial_z) & z \partial_z- \Delta
  \end{bmatrix},
\]
from this expression it is clear that the subspaces
\begin{align*}
    \mathcal{H}_+ = \left\{
  \begin{pmatrix}
    \phi_1 \\
    \phi_2
  \end{pmatrix} \in \mathcal{H}_{\mathcal{B}} \quad
  \vert \, \phi_1 \text{ is an even function}, \phi_2 \text{ is an odd function} \right\}, \\
  \mathcal{H}_- = \left\{
  \begin{pmatrix}
    \phi_1 \\
    \phi_2
  \end{pmatrix} \in \mathcal{H}_{\mathcal{B}} \quad
  \vert \, \phi_1 \text{ is an odd function}, \phi_2 \text{ is an even function} \right\},
\end{align*}
are \(\HRabi\)-invariant subspaces of \(\mathcal{H}_{\mathcal{B}} \) and \(\mathcal{H}_+ \oplus \mathcal{H}_{-} = \mathcal{H}_{\mathcal{B}}\).

Let $(\hat{T}\psi)(z):= \psi(-z)$ $(\psi \in \mathcal{B})$ be the reflection operator acting on \( \mathcal{B}\). Then, define
the unitary operator $U$ on $\mathcal{H}_{\mathcal{B}}$ by
\[
  U:= \frac1{\sqrt2}\begin{bmatrix}1&1\\
    \hat{T}&-\hat{T}
  \end{bmatrix},
\]
and with  $C= \frac1{\sqrt2}\begin{bmatrix}1&1\\ 1&-1 \end{bmatrix}$, the Cayley transform, satisfying $C^{-1}=C^t=C$, i.e. $C^2=1$.

We obtain
\begin{align*}
(CU)^\dag \HRabi CU = \begin{bmatrix}H_+&0\\
    0&H_-
  \end{bmatrix},
\end{align*}
with
\[
  H_{\pm} = z\partial_z + g(z+\partial_z) \pm \Delta \hat{T},
\]
this is the parity decomposition of the QRM (see e.g. \cite{B2011PRL-OnlineSupplement}). 

Next, we describe the confluent Heun picture of the QRM and the $G$-function of the QRM. We refer the reader to
\cite{B2011PRL,B2011PRL-OnlineSupplement,W2015IMRN} for more details.

From our discussion above, we consider \(H_{\pm} \) as operators acting on \(\mathcal{B}\). Consider a solution of the eigenvalue problem
(time-independent Schr\"odinger equation)  for  \(H_{+}\).
Concretely, a real number \(\lambda\) is part of the spectrum of \(H_{+}\) if and only if there is a function \(\psi \in \mathcal{B}\) such that
\[
   z \partial_z \psi(z) + g( \partial_z + z) \psi(z) + \psi(-z) = \lambda \psi(z).
\]
Notice the presence of \(\psi(-z)\) due to reflection operator. Therefore, by setting \(\phi_1(z) = \psi(z) \) and \(\phi_2(z)=\psi(-z)\) and applying the change of variable \( z \to -z \) to the differential equation above we obtain the coupled system of differential equations
\begin{align} \label{eq:system1}
  (z+g) \partial_z \phi_1(z) + (g z - \lambda) + \Delta \phi_2(z) &= 0 \nonumber \\
  (z-g) \partial_z \phi_2(z) - (g z + \lambda) + \Delta \phi_1(z) &= 0.
\end{align}

This system of differential equations is equivalent to a second order confluent Heun differential equation
with two regular singularities  at \(z=g,-g \) and one unramified singularity of rank \(2\) at  \(z= \infty \), we refer the reader to \cite{SL2000} for more details on confluent Heun differential equations and singularities. As mentioned already, entire solutions of this type of differential equation have asymptotic expansion \eqref{eq:solasympt} and are thus elements of the Bargmann space. Consequently, it is left to check only the holomorphicity in the complex plane of the solutions of \eqref{eq:system1}.

Next, we consider the Frobenius solutions around the singularity \(z = g\). The exponents of the equation \eqref{eq:system1} at the
singularity are \(\sigma_1 = 0, \lambda + g^2 +1\) for \(\phi_1 \) and \(\sigma_2 = 0,\lambda+g^2 \) for \(\phi_2\). Let us consider the case \( \lambda + g^2 \not\in \Z\), here the Frobenius solutions corresponding to the exponent \(0\) lead to the expressions
\begin{align*}
  \phi_1(z) &= e^{-g z} \Delta \sum_{n=0}^{\infty} K_n(x)\left( \frac{z+g}{x-n} \right)   \\
  \phi_2(z) &= e^{- g z} \sum_{n=0}^{\infty}  K_n(x) (-z + g)^n, 
\end{align*}
where \(x = \lambda + g^2\) and \(K_n(x) \) are defined by the three term recurrence relation
\begin{equation}
  \label{eq:recKn}
  n K_n(x) = f_{n-1}(x) K_{n-1}(x) - K_{n-2},
\end{equation}
with initial condition \(K_0(x)= 1, K_1(x) = f_0(x)\) with
\[
  f_n(x) = 2 g + \frac{1}{2 g} \left( n - x + \frac{\Delta^2}{x-n} \right).
\]

The Frobenius solution \(\phi_1(z)\) (resp. \(\phi_2(-z) \)) gives an expansions of \(\psi(z)\) around
\(z = g \) ( resp. \(z=-g \)) with radius of convergence \(2 g \). The condition for the solution
\(\psi(z) \) to be entire is then
\[
  G_{+}(x;z) = \phi_2(-z) - \phi_1(z) = 0,
\]
for all \(z \in \C\). However, it is enough to check in the joint domain of \(\phi_1(z) \) and \(\phi_2(-z)\),
with holomorphicity in rest of the plane following by analytic continuation.

In particular, taking \(z= 0 \), we obtain the \(G\)-function for the Hamiltonian \(H_{+} \)
\[
  G_{+}(x) = \phi_2(0) - \phi_1(0) = \sum_{n=0}^{\infty} K_n(x) \left( 1 - \frac{\Delta}{x-n} \right)g^n,
\]
and similarly 
\[
  G_{-}(x) = \sum_{n=0}^{\infty} K_n(x) \left( 1 + \frac{\Delta}{x-n} \right)g^n.
\]

In this way, we see that solutions of the equation
\[
  G_{\pm}(x) = 0
\]
determine eigenvalues \( \lambda = x - g^2 \), with \( x \not\in \Z\). These eigenvalues constitute the {\em regular spectrum }
of the QRM and are known to be non-degenerate.

On the other hand, when the second exponent \(\sigma_1 = \lambda+g^2+1 \) of \eqref{eq:system1} at \(z=g \) is an integer \(N \in \Z_{\ge 0}\), that
is when the eigenvalue is of the form \( \lambda = N - g^2\), the Frobenius solutions corresponding to the exponent \(\sigma_i =0 \, (i=1, 2)\) 
may develop a logarithmic singularity which forces the condition
\begin{align}\label{eq:constrainteq}
  K_N(N;g,\Delta) = 0,
\end{align}
in order to obtain entire solutions. In fact, these solutions, known as {\em Juddian solutions}, have only a finite number of terms
in the power series expansion.  The condition \eqref{eq:constrainteq} (usually expression in an equivalent polynomial form, see \cite{KRW2017}) is
known as {\em constraint relation for Juddian eigenvalues} of the QRM. It is known (cf. \cite{K1985JMP}) that Juddian eigenvalues are doubly degenerate, with one solution
in each parity.

Even if the condition \eqref{eq:constrainteq} does not hold, there may be entire solutions constructed from the Frobenius solutions with respect to
the exponents \(\sigma_1 = N+1, \sigma_2 = N \). The solutions are then constructed in a manner analogous to the case of regular solutions. In this case,
the $G$-function for the {\em non-Juddian exceptional eigenvalue} \(\lambda = N-g^2 \) is given by
\[
  G^{(N)}_{\pm}(g,\Delta) = - \frac{2 (N+1)}{\Delta} + \sum_{n=N+1}^{\infty} K_n(N;g,\Delta) \left( 1 \pm \frac{\Delta}{N-m} \right) g^{n-N-1},
\]
where \(K_n(N;g,\Delta)\) satisfies \eqref{eq:recKn} with initial conditions \(K_{N+1}(N;g,\Delta)= 1 \) and \(K_{n}(N;g,\Delta)=0\) for
\(n < N \). Similar to the case of regular eigenvalues, it is known that non-Juddian exceptional eigenvalues are non-degenerate.

\section*{Acknowledgements}
This work was partially supported by JST CREST Grant Number JPMJCR14D6, Japan, and
by Grand-in-Aid for Scientific Research (C) JP16K05063 and JP20K03560 of JSPS, Japan.


\begin{flushleft}

\bigskip

 Cid Reyes-Bustos \par
 Department of Mathematical and Computing Science, School of Computing, \par
 Tokyo Institute of Technology \par
 2 Chome-12-1 Ookayama, Meguro, Tokyo 152-8552 JAPAN \par\par
 \texttt{reyes@c.titech.ac.jp}

 \bigskip

 Masato Wakayama \par
 Institute of Mathematics for Industry,\par
 Kyushu University \par
 744 Motooka, Nishi-ku, Fukuoka 819-0395 JAPAN \par
 \texttt{wakayama@imi.kyushu-u.ac.jp}

\medskip

Current address: \\
Department of Mathematics, \par
Tokyo University of Science \par
1-3 Kagurazaka, Shinjyuku-ku, Tokyo 162-8601 JAPAN \par\par
\texttt{wakayama@rs.tus.ac.jp}

\end{flushleft}

\end{document}